\journal{Journal of Computational Physics}
\newtheorem{lemma}{Lemma}[section]
\newtheorem{definition}{Definition}[section]
\newtheorem{theorem}{Theorem}[section]
\newtheorem{corollary}{Corollary}[lemma]
\newcommand{\vect}[1]{\boldsymbol{#1}}
\newcommand{\matr}[1]{\mathbf{#1}}
\newcommand{\mat}[1]{\overleftrightarrow{#1}}
\def\bhat{{\hat b}}
\def\Bmag{B}
\def\Vol{V}
\def\Vnabla{\vec \nabla}
\def\VnablaPerp{\vec \nabla_\perp}
\def\VnablaPara{\vec \nabla_{||}}
\newcommand\abs[1]{\left|#1\right|}
\newcommand\norm[1]{\left|#1\right|}
\newcommand{\pdv}[2]{{\partial_{#2} #1}}
\newcommand\avg[1]{\left< #1\right>}
\def\zhat{\hat z}
\def\fvor{w}
\def\fpot{\varphi}
\def\vDrift{\vec{v}}
\def\vDriftxy{\vDrift}
\newcommand{\jumpOp}[1]{\left\llbracket #1 \right\rrbracket }
\newcommand{\avgOp}[1]{\left\{ #1 \right\}}
\newcommand{\PoissonBracket}[2]{\left\{\left\{#1 , #2\right\}\right\} }
\def\ftestLtwo{\eta}
\def\ftestHone{\psi}
\def\ftestgen{\ftestHone}
\def\tauh{\tau_h}
\newcommand{\Ltwo}[1]{L^2_{#1}(\tauh)}
\newcommand{\LtwoVec}[1]{\left[L^2_{#1}(\tauh)\right]^d}
\newcommand{\Hone}[1]{H^1_{#1}(\tauh)}
\newcommand{\Hcurl}[1]{H^{\rm Curl}_{#1}(\tauh)}
\newcommand{\Hdiv}[1]{H^{\rm Div}_{#1}(\tauh)}
\def\ppot{p}
\def\qvor{\tilde{p}}
\def\Honep{\Hone{\ppot}}
\def\Honeq{\Hone{\qvor}}
\def\Ltwoq{\Ltwo{\qvor}}
\def\VB{\vec{B}}
\def\VE{\vec{E}}
\def\VF{\vec{F}}
\def\VJ{\vec{J}}
\def\Vv{\vec{v}}
\def\CO{{\mathcal O}}
\def\Energy{{\mathcal E}}
\def\Enstrophy{{\mathcal S}}
\def\DiffusionMatrix{\matr{L}}
\def\DiffusionMatrixElements{L}
\def\OverlapMatrix{\matr{N}}
\begin{document}

\begin{frontmatter}

\title{ Arbitrary Order Energy and Enstrophy Conserving Finite Element Methods for 2D Incompressible Fluid Dynamics and Drift-Reduced Magnetohydrodynamics}
\tnotetext[mytitlenote]{Energy and Enstrophy Conserving FEM scheme }



\author[LLNL]{Milan Holec}
\cortext[mycorrespondingauthor]{Corresponding author}
\ead{holec1@llnl.gov}
\ead[url]{mfem.org}

\author[LLNL]{Ben Zhu}

\author[LLNL]{Ilon Joseph}

\author[LLNL]{Christopher J. Vogl}

\author[LANL]{Ben S. Southworth}

\author[LLNL]{Alejandro Campos}

\author[LLNL]{Andris M. Dimits}

\author[LLNL]{Will E. Pazner}

\address[LLNL]{Lawrence Livermore National Laboratory, P.O. Box 808, Livermore, CA 94551, U.S.}
\address[LANL]{Los Alamos National Laboratory, P.O. Box 1663, Los Alamos, NM 87545 U.S.}

\begin{abstract}
Maintaining conservation laws in the fully discrete setting is critical 
for accurate long-time behavior of numerical simulations and requires accounting for discrete
conservation properties in both space and time. This paper derives arbitrary order finite element exterior
calculus spatial discretizations for the two-dimensional (2D) Navier-Stokes and drift-reduced magnetohydrodynamic
equations that conserve both energy and enstrophy to machine precision when coupled with generally symplectic
time-integration methods. Both continuous and discontinuous-Galerkin (DG) weak formulations can ensure conservation,
but only generally symplectic time integration methods, such as the implicit midpoint method, permit exact conservation in time.
Moreover, the symplectic implicit midpoint method yields an order of magnitude speedup over explicit schemes.
The methods are implemented using the MFEM library and the solutions are verified for an extensive suite of 2D neutral fluid turbulence test problems.
Numerical solutions are verified via comparison to a semi-analytic linear eigensolver as well as to the finite difference Global Drift Ballooning (GDB) code. 
However, it is found that turbulent simulations that conserve both energy and enstrophy tend to have too much power at high wavenumber and that this part of the spectrum should be controlled by reintroducing artificial dissipation.
The DG formulation  allows upwinding of the advection operator which dissipates enstrophy while still maintaining conservation of energy.
Coupling upwinded DG with implicit symplectic integration appears to offer the best compromise of allowing mid-range
wavenumbers to reach the appropriate amplitude while still controlling the high-wavenumber part of the spectrum. 
\end{abstract}

\begin{keyword}
Drift-Reduced-MHD \sep Finite element method \sep Conservative scheme \sep High-order methods
\end{keyword}

\end{frontmatter}

%
\section{Introduction}

In the absence of dissipation, the equations of fluid dynamics and plasma physics possess conserved quantities, such as energy and enstrophy (in 2D) or helicity (in 3D), that are invariant under  evolution in time.
Many numerical methods, including finite difference, finite element, finite volume, and pseudo-spectral methods, have  been explored for the purpose of developing numerical approximations for solutions of the  underlying partial differential equations, such as the Navier-Stokes equations, the magnetohydrodynamics (MHD) equations, and the  Braginskii equations.
It is important for any numerical approximation scheme   to approximately conserve these   invariants  in order to ensure the long-term accuracy (and often the stability) of the approximate solution.
The introduction of finite element exterior calculus (FEEC) \cite{ArnoldUniDGSIAM2002, Arnold2006, Arnold2010} has reinvigorated interest in finite element approaches to fluid and plasma physics and, as will be shown, provides a natural framework for enforcing these conservation laws.

For the equations of fluid and plasma physics, the numerical discretization of the advection operator is key to ensuring conservation.
In 2D, the finite difference community benefits from the ability to write the advection operator using the numerical Arakawa bracket \cite{Arakawa1966jcp} to conserve the two quadratic invariants: energy and enstrophy.
In contrast, existing finite element approaches tend to introduce more dissipation than necessary, leading to an artificial reduction of energy at small spatial scales.
One approach to enforcing conservation focuses on using an antisymmetric form for the advection operator \cite{Morinishi98jcp, Morinishi10jcp, Halpern18pop, Halpern20pop}.
Another conservative approach \cite{Einkemmer2014}, developed  for a specific discontinuous Galerkin (DG)  finite element method (FEM), expresses the advection operator in the form of a Poisson bracket that enforces both antisymmetry as well as a discrete divergence theorem for  products of functions.
A review of structure-preserving integration techniques used in computational engineering was recently given in \cite{Sharma2020}.

Recently, low-order energy and enstrophy conserving schemes were developed within the FEEC framework for the 2D Navier-Stokes equations \cite{NataleQJRMS2017}, \cite{NataleIMA2018}, as well as for the rotating shallow water equations and the shallow water equations on the sphere \cite{CotterJCP2012, McRaeQJRMS2014,   BauerJCP2018}.
In this work, an FEEC approach to the 2D incompressible Navier-Stokes \cite{Kraichnan1980rpp} and the 2D drift-reduced magnetohydrodynamics (MHD) equations \cite{ Hasegawa1978Mima,Horton1994,Hazeltine2003Meiss} is derived that conserves both energy and enstrophy  for arbitrary polynomial order, in the absence of dissipation.
Because the theorems of vector calculus are satisfied at the discrete level within the FEEC framework, the spatial discretization automatically conserves both energy and enstrophy as long as the proper finite element spaces are chosen for the primary variables.

A conservative scheme also requires an appropriate choice of numerical time integration technique,
an important fact that has not been discussed by previous authors.
This is particularly relevant here, as both energy and enstrophy correspond to quadratic functions, and significant care must be taken to conserve quadratic invariants (in contrast to other physical problems where energy is a linear first integral invariant that is trivial to conserve using standard time integration schemes).
In fact, Ref. \cite{Chartier.2006} proves that for the majority of integrators used in practice only generally symplectic time integration techniques are able to conserve quadratic invariants exactly and that any method that conserves an approximate invariant exactly must be conjugate to a symplectic method.
Our results verify the theory and also indicate that certain explicit methods that approximate symplectic methods also have relatively good conservation properties, perhaps because they require small times steps for stability.

However, an extensive set of numerical tests demonstrate that the schemes that conserve both energy and enstrophy exactly lead to turbulent simulations with unphysically high amplitudes at short spatial wavelengths.
Reintroducing dissipation in a  number of different ways, e.g. by adding numerical viscosity or by using an asymmetric implicit time integration method, like backward Euler, tends to tame the short wavelength part of the spectrum.
Unfortunately, controlling the spectrum through a non-symplectic time-integrator spoils conservation in manner similar to adding an artificially large viscosity.
For discontinuous Galerkin methods there is another choice: upwinding the advection operator within the FEEC framework leads to a numerical scheme that dissipates enstrophy but conserves energy. This combination appears to offer the best compromise of capturing the correct mesoscopic spatial mode structure by allowing mid-range scales to grow to the proper amplitude, while still  controlling the spectrum of the short wavelength spatial modes. 
The~upwinded DG can be seen as the stream function/vorticity 
formulation of the Lie derivative method \cite{NataleIMA2018} and also provides 
a simpler alternative to the SUPG formulation of stream function/vorticity introduced in \cite{NataleQJRMS2017}. We also observe, that the~implicit symplectic time integrators match very precisely the~results by explicit integrators, while acquiring $\sim 10 \times$ speed-up. This means that the CFL condition is too restrictive in order to capture the~relevant physics in mesoscopic spatial mode range.

In order to verify and explore the theoretical results, a new FEEC fluid solver based on the MFEM finite element library \cite{MFEM}, a highly scalable FEEC implementation,  has been developed.
The solver is written in C++ and has been run within a high-performance computing environment on thousands of cores.
Both linear and nonlinear solvers for the equations have been developed, including preconditioning strategies for the linear solves.

A series of numerical test problems that display both decaying and forced 2D turbulence, recently reviewed in \cite{Boffetta2007jfm, Boffetta2010pre, Boffetta2012arfm}, are studied to examine the properties of the solution.
In order to do so, we reproduce the direct and inverse cascades predicted by Batchelor and Kraichnan \cite{Batchelor1959jfm, Kraichnan1967pf, Kraichnan1971jfm, Kraichnan1980rpp}.
The numerical model is verified by comparing the linear growth rates with predictions of a semi-analytical eigensolver for various initial and boundary conditions.
The nonlinear results are benchmarked by comparing the Fourier spectrum to the results of the Global Drift Ballooning (GDB) finite difference code \cite{Zhu2018cpc}, which has been extensively verified for plasma physics problems.

In the next section, the model equations for the 2D incompressible Navier-Stokes equations and the 2Dd rift-reduced MHD equations, which reduce to the Navier-Stokes equations in the appropriate limits are introduced.
The conservative FEEC schemes are derived  in  \Cref{sec:method}, beginning with a review of the need for symplectic time integration techniques.
 \Cref{sec:results} presents  our numerical results.
The linear and nonlinear behavior of the various  schemes are verified in  \Cref{sec:kh_instability} for a series of simulations of decaying Navier-Stokes turbulence.
These results explicitly demonstrate that the choice of numerical scheme has an important impact on the power spectrum of the mid to high spatial mode numbers and that the~symplectic time integrators show $\sim 10\times$ speed-up compared to standard explicit integration.
Forced 2D Navier-Stokes turbulence simulations are presented in  \Cref{sec:forced_turbulence} that verify the well-known predictions of the direct and inverse cascades of energy and enstrophy.
Again, it will be shown that the choice of numerical methods has an  impact on the short wavelength range of the power spectrum.
Our conclusions are summarized in the final section.

\section{Model equations \label{sec:model}}

Conservation of mass is expressed as the advection of the mass density $\rho$ by the fluid velocity $\vDrift$ 
\begin{align}     
  \partial_t \rho +  \Vnabla\cdot  \rho  \vDrift &= 0
  .
\end{align}
Because the time-rate of change of the mass density along the fluid trajectory is given by
\begin{align*}
  d\rho/dt :=  \partial_t \rho  + \vDrift\cdot \Vnabla  \rho   &= -\rho\Vnabla \cdot \vDrift,
\end{align*}
the condition that the  fluid flow be incompressible requires  the velocity to be divergence free $\Vnabla\cdot \vDrift=0$.
Momentum conservation is expressed via the   incompressible Navier-Stokes equation 
\begin{align} 
  \partial_t (\rho \vDrift) + \Vnabla\cdot  (\rho  \vDrift  \vDrift) + \Vnabla p&= 
     \Vnabla \cdot \mat{\mu}  \cdot \Vnabla  \vDrift  
    - \alpha \rho \vDrift +\VF_v  
\end{align}
where $p$ is the scalar pressure,  $\mat{\mu}$ is the viscosity tensor, 
$\alpha$ is a frictional damping rate, and $\VF_v$ is a ``stirring force".
Enforcing the constraint of incompressible flow requires the pressure to satisfy the relation
\begin{align} 
  \Vnabla\cdot \rho^{-1} \Vnabla p&=  \Vnabla \cdot \rho^{-1} \left(\Vnabla \cdot \mat{\mu}  \cdot \Vnabla  \vDrift   
    - \alpha \rho \vDrift + \VF_v   -   \rho  \vDrift \cdot \Vnabla \vDrift \right)
    \label{eq:NS_momentum}
    .
\end{align}

In 2D, one can ensure that the velocity is exactly divergence free by assuming that it is defined via
\begin{align} 
\vDrift=\zhat \times \Vnabla \Phi
\end{align} 
where $\Phi$ is the stream function
and solving for the vorticity
\begin{align} 
\fvor = \Vnabla \cdot (\zhat \times \vDrift) =- \Vnabla \cdot (\mat{1} - \zhat \otimes \zhat) \cdot \Vnabla  \Phi.
\end{align} 
For constant density, $\rho$,  and constant transport coefficients, $\alpha$ and $\mu$, the potential formulation allows one to rewrite the equations as 
\begin{align} 
  \pdv{ \fvor  }{t} + \Vnabla\cdot (\fvor \vDrift) &= 
     \Vnabla \cdot \mu \Vnabla  \fvor  - \alpha \fvor   + S_{\fvor}  
     \label{eq:NS_vorticity}
\end{align}
where $ S_{\fvor}:=-\zhat \cdot \Vnabla \times \VF_v/\rho $.

The  drift-reduced MHD model is derived using an adiabatic approximation to the drift velocity that assumes that the spatial and temporal variations of the magnetic field, $\VB$, are much slower than the dynamics of interest.
For the models studied here  \cite{Hasegawa1978Mima, Horton1994}, the electric field $\VE$, will be assumed to be approximately electrostatic,  i.e. determined via $\VE=-\Vnabla \phi$ where $\phi$ is the electric potential.  
Generalizations of this model \cite{Drake1984Antonsen, Zeiler1996, Hazeltine2003Meiss, Xu2000, Simakov2003, Simakov2004} can also be used to describe the physics of shear Alfv\'en waves and magnetic reconnection.
The dominant momentum balance is ensured by requiring the Lorentz force, proportional to $\VE+\Vv\times \VB$, to vanish to lowest order.
This implies that the drift velocity is approximately given by the $\VE\times\VB$ velocity
\begin{align}
\vDrift =\VE\times\frac{\bhat}{B} =\frac{\bhat}{B} \times \Vnabla \phi
.
\end{align} 
where  $\bhat=\VB/B$ is the direction of the magnetic field and $\Bmag=\norm{\VB}$ is its magnitude.
Although the drift-reduced MHD velocity is not exactly divergence free for spatially varying $\Bmag$, it is approximately divergence-free in the sense that the scale length for variations is much longer than that of turbulent fluctuations. 
The model is exactly divergence-free when $\VB$ is constant in space, e.g. $\VB=\Bmag_0\zhat$ for constant $\Bmag_0$, in which case it reduces to the same form as for the incompressible neutral fluid.

Assuming that the plasma dynamics is quasineutral requires the electric current, $\VJ$, to be divergence free: $\Vnabla\cdot\VJ=0$. 
Solving for the velocity to one higher order yields a polarization current proportional to the rate of change of the drift velocity
\begin{align}
\vDrift =\VE\times\frac{\bhat}{B} +\frac{\bhat}{B} \times \left[\partial_t +\Vnabla\cdot \Vv \right] \left(\rho \VE\times\frac{\bhat}{B}\right) + \dots
.
\end{align} 
Using the quasineutrality constraint $\Vnabla\cdot \VJ_\|=\Vnabla\cdot (\bhat \sigma_\| E_\|) = -\Vnabla\cdot\VJ_\perp $ yields the dynamics of the charge density $w=\Vnabla\cdot \frac{\bhat}{B} \times \rho \vDrift$.
Truncating the expansion at the first nontrivial order yields a closed set of equations
\begin{subequations}\label{eq:model}
\begin{align}
  \pdv{\fvor}{t} +\Vnabla \cdot  \fvor \vDrift &= 
    \Vnabla \cdot \mat{\mu} \cdot \Vnabla \fvor - \alpha \fvor + \Vnabla_\| \cdot \sigma_\| \Vnabla_\|\phi + S_{\fvor}
  , \label{eq:vorticity}   
  \\
  \Vnabla \cdot \frac{\rho}{\Bmag^2} \VnablaPerp \fpot 
    &= - \fvor 
  , \label{eq:potential}
  \\
  \vDrift &= \frac{\bhat}{\Bmag} \times \VnablaPerp \fpot
   \label{eq:driftV}
\end{align}
\end{subequations}
Here $\rho$ is the mass density, $\mat{\mu}$ is the viscosity tensor, 
 $\alpha$ represents a frictional damping coefficient, e.g. friction with a neutral gas assumed to be at rest, and $\sigma_\|$ represents electrical conductivity parallel to magnetic field lines.
Here, directionality with respect to the~magnetic field 
$\VB$ is expressed using the conventions 
\begin{align*}
\VnablaPara &:= \bhat\otimes\bhat \cdot \Vnabla \\
 \VnablaPerp &:=  \Vnabla -\VnablaPara  = \left( \mat{1} - \bhat\otimes\bhat \right) \cdot \Vnabla 
\end{align*} 
where $\mat{1}$ is the identity matrix.
Clearly, these equations must be supplemented with an appropriate set of initial conditions and boundary conditions in order for the solution to be unique.

  For the incompressible model, kinetic energy, $\Energy$, which we simply refer to as energy,  is defined as
  \begin{gather}
   \Energy :=  \frac{1}{2} \int_\Omega \rho \vDrift^2 ~dV  =  \frac{1}{2} \int_\Omega \frac{\rho}{\Bmag^2} \VnablaPerp \fpot \cdot \VnablaPerp \fpot ~dV
  \end{gather}
  where $d\Vol$ is the 2D volume element, i.e. the area, for the region of interest $\Omega$. 
  Similarly, enstrophy, $\Enstrophy$, is defined as
  \begin{gather}
   \Enstrophy :=  \frac{1}{2} \int_\Omega \fvor^2 ~dV 
   .
  \end{gather}
Both energy and enstrophy are conserved in  the absence of dissipation and external forces when  an appropriate set of boundary conditions is chosen that eliminates the net fluxes over the boundary of the region of interest, $\partial \Omega$.
  
For both of the vorticity equations \eqref{eq:NS_vorticity} and \eqref{eq:vorticity}, aside from the forcing terms, the right-hand side represents a set of terms that dissipate both energy and enstrophy. 
This is easily proven by noting that, for an appropriate set of boundary conditions, the dissipative terms on the right hand side are negative semi-definite operators.
Hence, the conservative part of the equation is simply the left hand side, the advection of the vorticity by the flow
\begin{align}
  \pdv{\fvor}{t} +\Vnabla \cdot \fvor \vDrift &= \pdv{\fvor}{t} +\vDrift \cdot \Vnabla   \fvor =0
   \label{eq:vorticity_conservative}   
   .
\end{align}
It is precisely the conservative discretization of the advection operator that will be investigated theoretically in Sec. \ref{sec:space}.

\section{Conservative Methods for Arbitrary Polynomial Order \label{sec:method}}

\subsection{Overview of Results}
Conservation of energy and enstrophy can each be posed as the preservation of a quadratic invariant in time, also known as a quadratic first integral.
Thus, it is critical to consider conservation of quantities of interest in both the discrete space and discrete time setting.
In this work, we explore numerical approaches that use the method of lines, i.e. the tensor product of a spatial finite element discretization and a temporal discretization based on a  numerical time integration method.
This allows one to treat the spatially discretized system as a large set of ordinary differential equations (ODEs) and to choose the approaches to spatial and temporal discretization independently.

In the next section, we treat the temporal discretization by recalling the theorems of Chartier, Faou, and Murua (2006) \cite{Chartier.2006} that prove that, for the majority of numerical integration methods used in practice,  in order to exactly preserve a quadratic invariant, the method must either be a generally symplectic method or must be conjugate to a generally symplectic method (see \cite{Hairer2006book} for a comprehensive overview).
The general symplecticity conditions \cite{Calvo1994}  require a symplectic method to  be fully implicit Runge-Kutta Gauss method.
The simplest methods that preserve quadratic invariants are the second order implicit midpoint method, which is symplectic, and the implicit trapezoidal method, which is conjugate to implicit midpoint.
We also discuss the ability of certain explicit and multistep schemes to approximately conserve invariants (\Cref{sec:explicit}), essentially because they 
approximate symplectic methods such as implicit midpoint.

Then,  in \Cref{sec:space} we treat the spatial discretization by  proving that, in the continuous time setting, spatial
discretization with the appropriate choice of finite-element
spaces leads to an energy and enstrophy conserving method.
This leads to the result that using a symplectic time integration method for the resulting set of discretized ODEs yields the desired conservation laws.
Finally,  for reference, Sec. \ref{sec:poisson_bracket} explicitly expresses these results in terms of the Poisson bracket, a form for the advection operator that is commonly used in the fluid dynamics and  plasma physics literature.

\subsection{Exactly Conservative Time Integrators: Fully Implicit Symplectic Methods \label{sec:temporal} }
For time-dependent simulations, it is critical to consider conservation of
quantities of interest in both the discrete space and time setting.
In Sec. \ref{sec:space}, we prove that the finite element spatial discretization leads to a discrete system of ODEs that possess the quadratic invariants of  energy and enstrophy.
Consider the general system of ODEs
\begin{equation*}
  \frac{d  \vect{u}}{dt}   = \vect{f}(\vect{u})
  .
\end{equation*}
A conserved quantity, $C( \vect{u},t)$, is constant in time, i.e.  invariant under  translations in time, if it satisfies $dC/dt=0$.
 A quadratic form is defined by a
symmetric matrix $\matr{Q}$ via
\begin{equation*}
C =\tfrac{1}{2} \vect{u}^T\matr{Q}\vect{u}
\end{equation*}
and it defines a quadratic invariant if it satisfies
\begin{equation} \label{eq:quad}
\frac{d C}{dt}  =  \vect{u}^T\matr{Q} \frac{d  \vect{u}}{dt} = \vect{u}^T\matr{Q}\vect{f}(\vect{u}) = 0,
\end{equation}
for all solutions $\vect{u}$ in the domain of $\vect{f}$ (see, e.g.,
\cite{Cooper.1987}); i.e. for all solutions $\vect{u}(t)$ to the set of ODEs.
We will show that with careful construction of the FEM spaces and discretization,
energy and enstrophy conservation can both be expressed in the form
of \eqref{eq:quad}.

The monograph of Hairer, Lubich, and Wanner \cite{Hairer2006book} provides an excellent review of numerical integrators that conserve nonlinear invariants as well as of numerical integrators in general.
In fact, the construction of time integrators that conserve nonlinear invariants is one of the primary motivations for the development of geometric numerical integration.
 In the following we highlight a few of the main points that are important for our conclusions.
 
The pioneering work of Butcher \cite{Butcher1972} and Hairer and Wanner \cite{Hairer1974} enabled the analysis
of many widely used time integration techniques, including Runge-Kutta (multi-stage) methods and multi-step methods, and led to the introduction of general linear methods.
Like the exact solution, the result of these methods can be expressed as a ``B-series'' defined as a formal power series over the set of elementary differentials.
Because the elementary differentials are  indexed by the graphs of ``rooted trees,'' graph-theoretical methods can be brought to bear on the analysis \cite{Hairer2006book}.
The results can also be extended to partitioned-series or ``P-series'' integration methods \cite{Hairer2006book}  that partition the system into two sets of variables, $\vect{u} = \left[\vect{q},\vect{p}\right]^T$, where each part uses a different numerical integrator.
In this case, the elementary differentials can be indexed by trees with roots of two different colors.
Moreover,  B-series and P-series form a group under composition, so the results can be extended to any composition of methods.
Thus, compositions of B-series and P-series methods contain a large portion of the numerical time integration techniques used in practice.
 
For all B-series time integrators, quadratic invariants are only
conserved   if the integration method is a generally symplectic  method  \cite[Theorem 1]{Chartier.2006}:
 \begin{theorem}[Only symplectic integrators conserve quadratic invariants]
   \label{th:symplectic_integrators}
   Any B-series time integrator that exactly conserves a quadratic invariant must be a generally symplectic integrator.
  \end{theorem}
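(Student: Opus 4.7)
The plan is to work within the B-series framework. Any B-series integrator $\Phi_h$ applied to $\dot{\vect{u}} = \vect{f}(\vect{u})$ can be written as a formal power series
\[
\Phi_h(\vect{u}) \;=\; \vect{u} + \sum_{\tau \in T} \frac{h^{|\tau|}}{\sigma(\tau)}\, a(\tau)\, F(\tau)(\vect{u}),
\]
indexed over rooted trees $\tau$, where $F(\tau)$ is the elementary differential built from $\vect{f}$ and $a(\tau)$ are scalar coefficients encoding the method (Runge-Kutta tableaux, multistep weights, etc.\ all produce such a $\{a(\tau)\}$). The conservation of a quadratic form $C(\vect{u}) = \tfrac12 \vect{u}^T \matr{Q} \vect{u}$ translates into algebraic relations on the coefficients, and so does symplecticity; the theorem will follow by showing that the first set of relations forces the second.

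First I would fix a generic class of vector fields admitting $C$ as an invariant, i.e.\ $\vect{u}^T \matr{Q} \vect{f}(\vect{u}) \equiv 0$ (Hamiltonian systems with quadratic first integral being the canonical test case, which is rich enough to probe every elementary differential independently). Expanding $C(\Phi_h(\vect{u})) - C(\vect{u})$ and collecting the $h^n$ coefficient gives a sum over ordered pairs of rooted trees $(\tau_1,\tau_2)$ with $|\tau_1|+|\tau_2|=n$, each contributing $a(\tau_1)a(\tau_2)$ times an inner product of elementary differentials against $\matr{Q}$. Using the invariance hypothesis $\vect{u}^T \matr{Q} \vect{f} = 0$ to transfer one factor of $\matr{Q}\vect{f}$ along the tree, each such pair merges into a single tree, producing for every pair of trees the identity
\[
a(\tau_1 \circ \tau_2) + a(\tau_2 \circ \tau_1) \;=\; a(\tau_1)\, a(\tau_2),
\]
where $\tau_1 \circ \tau_2$ is the Butcher product (grafting the root of $\tau_1$ onto the root of $\tau_2$).

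Second, I would invoke the classical Calvo--Sanz-Serna characterization that a B-series whose coefficients satisfy the grafting identity above for \emph{every} pair of trees is necessarily (generally) symplectic; in the Runge-Kutta case it reduces to the familiar $b_i a_{ij} + b_j a_{ji} = b_i b_j$. Combining the two steps establishes the theorem, and the extension to P-series follows by repeating the argument with two-colored trees. The main obstacle is the combinatorial bookkeeping in the merge step: one must track symmetry factors $\sigma(\tau)$, the densities $\gamma(\tau)$, and sign conventions so that every unordered pair of trees appears with the correct multiplicity and nothing is double-counted, and one must verify that the test class of vector fields really does allow the elementary differentials to be linearly separated. Once the grafting relation has been extracted cleanly, the identification with general symplecticity is essentially algebraic and can be read off standard tables of B-series coefficients; the extraction itself is the heart of the proof.
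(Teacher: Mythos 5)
The paper offers no proof of \Cref{th:symplectic_integrators} for you to be compared against: the statement is quoted directly from Chartier, Faou, and Murua \cite{Chartier.2006} (their Theorem~1), and the surrounding text only records its consequences. Your sketch is, in outline, the standard argument from that literature (see also \cite{Hairer2006book}, Ch.~VI): expand $C(\Phi_h(\vect{u}))-C(\vect{u})$ over pairs of rooted trees, use the differentiated consequences of $\vect{u}^T\matr{Q}\vect{f}(\vect{u})=0$ to collapse each pair via the Butcher product, extract the relation $a(\tau_1\circ\tau_2)+a(\tau_2\circ\tau_1)=a(\tau_1)\,a(\tau_2)$, and identify this with the Calvo--Sanz-Serna symplecticity conditions, which reduce to $b_i a_{ij}+b_j a_{ji}=b_i b_j$ in the Runge--Kutta case. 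That is the correct skeleton, and with the details filled in it is precisely the proof the paper delegates to its reference. The one place where your plan understates the difficulty is the step you flag only in passing: for the \emph{necessity} direction, which is what the theorem asserts, you must show that if the grafting relation fails for some pair of trees, there exists a system \emph{within the constrained class} $\vect{u}^T\matr{Q}\vect{f}(\vect{u})\equiv 0$ on which the method fails to conserve $C$. The constraint couples the bilinear expressions $F(\tau_1)^T\matr{Q}\,F(\tau_2)$ (they satisfy linear relations inherited from the invariance), so the usual ``independence of elementary differentials'' argument used for order conditions does not apply verbatim; one must identify the quotient of pairs modulo these relations and construct test vector fields separating its classes. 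That lemma is the technical core of the Chartier--Faou--Murua proof and cannot be read off standard tables; everything else in your proposal is routine bookkeeping once it is in hand.
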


The general symplecticity conditions for B-series integrators \cite{Calvo1994}  require the method to  be a symplectic Runge-Kutta method (originally called  ``orbitally stable'' Runge-Kutta methods by \cite{Cooper.1987}).
Symplectic Runge-Kutta methods must be of the Gauss-Legendre (Gauss) type  \cite{Lasagni1988, SanzSerna1988, Suris1988}.
The  Gauss-Radau and Gauss-Lobatto Runge-Kutta methods fail to be symplectic due to their lack of symmetry in time.  
Thus, the simplest symplectic method is the second-order implicit midpoint method.
Note, however, that no Runge-Kutta method can conserve all polynomial invariants for polynomial order $>2$ \cite{Hairer2006book}.

It is clear that a spatial discretization will only preserve the conservation of a quadratic invariant under certain conditions, e.g. to be derived in \Cref{sec:space}. This immediately implies
  \begin{corollary}  \label{cor:symplectic_integrators}
  For B-series time integrators, only generally symplectic methods coupled with a
  conservative spatial discretization are guaranteed to conserve
  enstrophy and energy exactly.  
  \end{corollary}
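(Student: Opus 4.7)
The plan is to combine the forthcoming spatial result of \Cref{sec:space} with \Cref{th:symplectic_integrators}. Once the spatial discretization is shown to produce a semi-discrete system of ODEs $d\vect{u}/dt = \vect{f}(\vect{u})$ for which both energy and enstrophy can be written as quadratic forms $C = \tfrac{1}{2}\vect{u}^T \matr{Q} \vect{u}$ satisfying the invariance identity \eqref{eq:quad}, the corollary reduces to applying \Cref{th:symplectic_integrators} to that ODE system for each of the two choices $\matr{Q}$ corresponding to $\Energy$ and $\Enstrophy$.

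First I would observe that \Cref{th:symplectic_integrators} applies verbatim to the semi-discrete system produced by the FEM discretization. Hence, for any B-series time integrator, exact conservation of $\tfrac{1}{2}\vect{u}^T \matr{Q} \vect{u}$ along the discrete trajectory forces the integrator to be generally symplectic. This establishes that no non-symplectic B-series method can guarantee preservation of either $\Energy$ or $\Enstrophy$, regardless of the spatial discretization chosen.

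Next I would establish the dual direction: if the spatial discretization is not conservative, then $\vect{f}$ fails to satisfy $\vect{u}^T \matr{Q}\vect{f}(\vect{u}) = 0$ identically, so the exact flow of the semi-discrete system itself drifts in $C$. Since any consistent time integrator reproduces this exact flow in the limit of vanishing step size, no choice of time integrator, symplectic or otherwise, can compensate for a non-conservative spatial operator in the guaranteed sense. Necessity of both ingredients therefore follows, yielding the ``only'' direction of the corollary.

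The only real subtlety is making sure that the semi-discrete system actually lies in the regime covered by \Cref{th:symplectic_integrators}: the discrete energy and enstrophy must take the symmetric quadratic form required by \eqref{eq:quad}, and the mass/stiffness matrices appearing in the FEM system must be cast so that a single symmetric $\matr{Q}$ realizes each invariant simultaneously. This is precisely what \Cref{sec:space} is designed to establish, and once that ingredient is supplied, the corollary follows immediately. I would not expect any further technical obstacle beyond confirming that the broader class of integrators of practical interest (including P-series partitioned methods and compositions thereof via \cite{Hairer2006book}) all fall within the hypotheses of \Cref{th:symplectic_integrators}.
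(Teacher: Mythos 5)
Your proposal is correct and follows essentially the same route as the paper: the corollary is obtained by combining the spatial result of \Cref{sec:space} (that the properly chosen FEM spaces render energy and enstrophy quadratic invariants of the semi-discrete ODE system in the sense of \eqref{eq:quad}) with \Cref{th:symplectic_integrators}, exactly as the paper's one-line ``this immediately implies'' argument intends. Your added remark on the necessity of the spatial ingredient (a non-conservative semi-discrete flow already drifts, so no integrator can repair it in the guaranteed sense) is a harmless and reasonable elaboration of what the paper leaves implicit.
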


Integration methods that are conjugate to symplectic method have the form $ \Phi_h = \chi_h^{-1} \circ \Psi_h \circ \chi_h$, where $\Psi_h(\vect{u})$ is a symplectic transformation for time step $h$, and $\chi_h(\vect{u})$ is a near identity transformation of order $\CO(h)$  in a neighborhood of the initial condition.  This form ensures that that application of multiple steps still leads to a result that is $\CO(h)$ close to the symplectic result, i.e. $\Phi_{n h}  = (\Phi_h\circ)^n= \chi_h^{-1} \circ \Psi_{nh} \circ \chi_h$.  Clearly, if an invariant  is  exactly conserved by the symplectic integrator, $\Psi_h\circ C = C $, then there is an approximate invariant, $C_h=\chi_h^{-1} \circ C$, that is exactly conserved, $\Phi_{h}\circ C_h = C_h$.
In fact,  conjugate symplectic  methods are the only methods have this property \cite[Theorem 4]{Chartier.2006}:
 \begin{theorem}[Only conjugate symplectic methods conserve approximate quadratic invariants]
   \label{th:approximate_integrators}
    Any B-series time  integrator that exactly conserves an approximate quadratic  invariant must be conjugate to a generally symplectic integrator.
 \end{theorem}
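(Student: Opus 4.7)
The plan is to reduce \Cref{th:approximate_integrators} directly to the already-established \Cref{th:symplectic_integrators} by conjugating the near-identity transformation out of the problem. Suppose $\Phi_h$ is a B-series integrator that exactly conserves an approximate quadratic invariant of the form $C_h = C \circ \chi_h$, where $C$ is the true quadratic form and $\chi_h$ is a near-identity B-series change of variables, in the sense indicated in the paragraph preceding the theorem. Define the conjugated map
\begin{equation*}
\Psi_h \;:=\; \chi_h \circ \Phi_h \circ \chi_h^{-1}.
\end{equation*}
Because B-series (and their inverses) form a group under composition, as emphasised earlier in the discussion of B-series and P-series, $\Psi_h$ is again a B-series integrator on the same phase space as $\Phi_h$.

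Next I would verify that this conjugated integrator $\Psi_h$ exactly preserves the true quadratic invariant $C$. A direct calculation using the definition of $C_h$ and the assumed exact preservation $C_h \circ \Phi_h = C_h$ gives
\begin{equation*}
C \circ \Psi_h \;=\; C \circ \chi_h \circ \Phi_h \circ \chi_h^{-1} \;=\; C_h \circ \Phi_h \circ \chi_h^{-1} \;=\; C_h \circ \chi_h^{-1} \;=\; C.
\end{equation*}
Thus $\Psi_h$ is a B-series integrator that conserves a genuine quadratic invariant, so \Cref{th:symplectic_integrators} forces $\Psi_h$ to be generally symplectic. Rearranging $\Psi_h = \chi_h \circ \Phi_h \circ \chi_h^{-1}$ into $\Phi_h = \chi_h^{-1} \circ \Psi_h \circ \chi_h$ then exhibits $\Phi_h$ as conjugate to the symplectic method $\Psi_h$, which is exactly the claimed conclusion.

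The main obstacle, and in fact the real content of the theorem, lies in the step I have hidden in the hypothesis: one must justify that any quantity exactly conserved by a B-series method and reducing to a true quadratic invariant as $h\to 0$ is in fact the pullback $C \circ \chi_h$ of a true quadratic invariant along some near-identity B-series $\chi_h$. This cannot be read off from the conjugation calculation alone. The way to do it is to expand the conservation identity order by order in $h$ in the basis of elementary differentials indexed by rooted trees, and at each order either cancel the deviation of the approximate invariant from $C$ by adjusting the coefficients of $\chi_h$ on the relevant trees, or extract a rigid obstruction that forces the true $C$ to be conserved. This tree-indexed cohomological bookkeeping is the technical heart of the Chartier--Faou--Murua argument; the structural manipulation above then bolts onto it to deliver the theorem.
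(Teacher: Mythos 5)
The paper does not actually prove this statement: it is quoted as Theorem 4 of Chartier, Faou, and Murua \cite{Chartier.2006}, and the surrounding text only verifies the easy converse, namely that a conjugate-symplectic method $\Phi_h=\chi_h^{-1}\circ\Psi_h\circ\chi_h$ exactly conserves the modified quantity $C_h=C\circ\chi_h$. Your conjugation computation is correct, and is essentially that same algebra read in the opposite direction; your reduction to \Cref{th:symplectic_integrators} would indeed finish the argument \emph{if} the hypothesis you inserted were available.

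But that hypothesis is the whole theorem. The statement assumes only that $\Phi_h$ exactly conserves \emph{some} approximate quadratic invariant, i.e.\ a quantity $C_h=C+\CO(h)$; it does not assume that $C_h$ is the pullback $C\circ\chi_h$ of the true quadratic form along a near-identity change of variables $\chi_h$ that is itself a B-series map. Producing such a $\chi_h$ --- or showing that no exactly conserved $\CO(h)$-perturbation of $C$ can exist unless one does --- is precisely the content of the Chartier--Faou--Murua argument, carried out order by order in $h$ on the coefficients indexed by rooted trees, and it is exactly the step you defer in your final paragraph. Note also that you quietly need $\chi_h$ to be a B-series near-identity map so that $\Psi_h=\chi_h\circ\Phi_h\circ\chi_h^{-1}$ remains a consistent B-series integrator for the same ODE, since \Cref{th:symplectic_integrators} applies only to such integrators; an arbitrary near-identity transformation would not suffice. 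So what you have written is a correct framing of how the conclusion follows from the structural lemma, but the proof of that lemma --- the actual mathematical content --- is missing, and, as the paper itself does, you would have to cite \cite[Theorem 4]{Chartier.2006} for it.
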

The simplest and most practical conjugate symplectic method is the second order implicit trapezoidal integrator, a Gauss-Lobatto method that is conjugate to implicit midpoint.

The same line of reasoning regarding spatial discretization leads to:
  \begin{corollary}  \label{cor:approximate_integrators}
  For B-series time integrators, only generally symplectic methods coupled with 
  a conservative spatial discretization are guaranteed to exactly conserve 
  invariants that represent enstrophy and energy. 
  \end{corollary}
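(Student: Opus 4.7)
The plan is to derive the corollary as an immediate consequence of \Cref{th:approximate_integrators} combined with the conservative spatial-discretization result to be proved in \Cref{sec:space}. After spatial discretization, the FEEC scheme will produce a system of ODEs $d\vect{u}/dt=\vect{f}(\vect{u})$ for which the discrete energy and discrete enstrophy can each be written as $\tfrac{1}{2}\vect{u}^T\matr{Q}\vect{u}$ and satisfy the quadratic-invariant condition \eqref{eq:quad}; i.e.\ along every semi-discrete trajectory, $\vect{u}^T\matr{Q}_{\Energy}\vect{f}(\vect{u})=\vect{u}^T\matr{Q}_{\Enstrophy}\vect{f}(\vect{u})=0$.

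First, I would use this to reduce the claim to a pure ODE statement: given an ODE system that admits two exact quadratic first integrals representing energy and enstrophy, which B-series time integrators conserve approximate versions of these integrals exactly? The approximate invariants in question are precisely the modified functionals $C_h=\chi_h^{-1}\circ C$ appearing in the conjugacy construction stated just before \Cref{th:approximate_integrators}, which differ from the exact semi-discrete energy and enstrophy by an $\mathcal{O}(h)$ near-identity transformation.

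Next, I would apply \Cref{th:approximate_integrators} directly to this ODE system: among B-series methods, only those conjugate to a generally symplectic integrator can exactly preserve such approximate quadratic invariants. Symplectic Runge-Kutta Gauss methods are the canonical representatives (corresponding to $\chi_h=\mathrm{id}$), while the implicit trapezoidal rule covers the non-symplectic conjugate case. Conversely, if the spatial discretization fails to deliver \eqref{eq:quad}, then no time integrator can furnish an approximate invariant representing continuum energy or enstrophy, since there is nothing semi-discrete for it to approximate.

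The main point requiring care, more terminological than technical, is reconciling the corollary's phrase ``generally symplectic methods'' with the stronger conclusion ``conjugate to a generally symplectic integrator'' in \Cref{th:approximate_integrators}. The intended reading is that the conjugacy class is implicit in the corollary's wording, so that methods such as implicit trapezoidal are admitted; this is the only real choice to make in the argument, which is otherwise nothing more than composing the spatial discretization input of \Cref{sec:space} with \Cref{th:approximate_integrators}.
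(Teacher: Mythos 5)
Your proposal is correct and follows essentially the same route as the paper, which offers no explicit proof beyond noting that ``the same line of reasoning regarding spatial discretization'' applied to \Cref{th:approximate_integrators} yields the corollary immediately; your reduction to the semi-discrete ODE system with quadratic invariants and subsequent appeal to that theorem is exactly the intended argument. Your terminological observation---that the corollary's phrase ``generally symplectic methods'' should be read as including the conjugate-symplectic class (e.g.\ implicit trapezoidal) to match the conclusion of \Cref{th:approximate_integrators}---is apt and reflects the paper's own usage in the surrounding discussion.
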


Chartier, Faou, and Murua \cite{Chartier.2006} also extend these two theorems to show that conservative P-series methods must either be symplectic or conjugate-symplectic.
However, proper P-series methods  can only conserve quadratic forms that, like angular momentum, are linear in each of the partitioned variables, i.e. the invariant must be of the form of, $\vect{q} \matr{E}\vect{p}$ with constant matrix $\matr{E}$.
For example, Gauss-Lobatto IIIA-IIIB pairs conserve all quadratic invariants of this form \cite{Sun1993}.
However, any P-series method that conserves general quadratic invariants such as energy and enstrophy must actually be a B-series method \cite{Chartier.2006}, and, hence, must be a Gauss-Legendre Runge-Kutta method.

Note that these methods must be fully implicit as opposed to  diagonally implicit Runge-Kutta (DIRK) methods.
In order for an implicit method to provide exact conservation,
an exact solve of each of the stage equations is required. 
Inexact solution methods such as iterative solvers can potentially degrade perfect conservation.
For example, one must take care to require sufficiently accurate
linear solves within each nonlinear stage to maintain good conservation properties.

Moreover, although Gauss and other fully implicit Runge-Kutta methods
can offer  high accuracy in addition to good conservation properties,
historically, high-order methods have rarely been used in practice
for numerical PDEs due to the difficulty of solving the fully coupled
stage equations. However, recent theoretical and algorithmic
developments have made such integration tractable and even faster
than diagonally implicit Runge-Kutta (DIRK) methods at times \cite{irk1,irk2}.

Finally, we note that, since the finite element method presents the equations in variational form, it should be of interest to explore variational integrators.
Variational integrators are typically partitioned partially implicit methods, but 
in special cases, e.g. when the Hamiltonian is separable in kinetic and potential energy, variational methods can even be explicit.
Since the 2D Euler equations are a Lie-Poisson Hamiltonian system, with a degenerate Poisson bracket, specialized variational integrators must be used \cite{Ge1988, Channell1991,Marsden2001}.

\subsection{ Approximately Conservative Time Integrators: Explicit \& Multistep Methods \label{sec:explicit}}

As discussed above, Runge-Kutta schemes must be fully implicit in order
to conserve quadratic invariants.
Our numerical results in \Cref{sec:results} demonstrate
that, with effective solvers, implicit methods can notably outperform explicit
methods in terms of wall-clock time as well. However, in some instances, such as
problems with less restrictive time step constraints or software packages that
want to avoid the machinery of implicit integration, explicit integration may
still be desirable.

Numerical results show that explicit midpoint and other methods that approximate the  
implicit midpoint method can still have relatively good approximate
conservation properties for quadratic invariants.
This may be the case because explicit integration tends to require very small time
steps for stable integration, and so these approximations may have relatively small error for each time step. 
Due to the fact that the implicit trapezoidal method is conjugate to implicit midpoint, one would expect similar behavior for explicit methods that approximate implicit trapezoidal.
As yet another example, the GDB code \cite{Zhu2018cpc} uses the explicit  ``trapezoidal leapfrog method'' \cite{Zalesak1979jcp, Jardin2010book}, a second order accurate general linear method that stabilizes the leapfrog method by using an explicit trapezoidal predictor step to improve the estimate for the value at the midpoint.  
In this case, the explicit trapezoidal predictor step approximates implicit trapezoidal, while the final result approximates implicit midpoint.
Our numerical results in \Cref{sec:results} demonstrate that both the explicit midpoint method and the trapezoidal leapfrog  method have relatively good conservation properties.

A number of additional time-reversible multistep methods were identified by Ref. \cite{Evans.1999} that have good conservation properties, even though no multistep method can be generally symplectic \cite{Hairer1998}.
The two simplest methods are (i) the implicit trapezoidal method, which is conjugate to implicit midpoint, and (ii) the leap-frog method (which they referred to as explicit midpoint).   
 For  the system of ODEs $d\vect{u}/dt
= \vect{f}(\vect{u})$, the leapfrog multistep method \cite{SanzSerna1985} is simply
\begin{align*}
  \vect{u}_{n+1} &= \vect{u}_{n-1} +  \vect{f}(\vect{u}_{n }) 2 \delta t 
  .
\end{align*}
If the system of ODEs preserves the quadratic invariant $C= \tfrac{1}{2} \vect{u}^T(t)  \matr{Q}  \vect{u}(t)$, then the leapfrog method preserves the invariant 
$
  C = \tfrac{1}{2} \vect{u}_{n}^T  \matr{Q}  \vect{u}_{n+1}  .
$
This is easy to show by using the symmetry of $\matr{Q}$ to find \cite{SanzSerna1985}
\begin{equation*}  
  \vect{u}_{n}^T  \matr{Q} \vect{u}_{n+1} - \vect{u}_{n}^T  \matr{Q}  \vect{u}_{n-1} =  \vect{u}_{n}^T   \matr{Q}    \vect{f}(\vect{u}_{n})  \delta t=0
  .
\end{equation*}
The good conservation properties result from the fact that the underlying one-step method is symmetric in time, and, hence, by Theorem 1.1 of Ref. \cite{Hairer2008jcm}, this one-step method is conjugate to a symplectic method.
  
It should be pointed out that the leapfrog method is known
to have nonlinear instabilities, as analyzed in \cite{SanzSerna1985,expmidpoint2}
(observed as early as 1965 \cite{lilly1965computational}). 
In fact, explicit general linear methods all have similar nonlinear
instabilities, analogous to the leapfrog method \cite{Butcher.2016}.
They also generically suffer from linear instabilities, known as parasitic modes, unless, like leapfrog, the methods satisfy the  ``zero-growth'' conditions \cite{Evans.1999}.
 In fact, Ref. \cite{Evans.1999} identified all of the time-reversible multi-step time-integration of sixth order and below that satisfy the ``zero-growth'' conditions for parasitic modes.
For problems or software packages where explicit integration is desirable and nonlinear instabilities can be avoided, these zero-growth methods may have the
highest order accuracy and the best conservation properties.
Again, the simplest of these is the second-order leapfrog method and one could also consider adaptive or modified versions
of leapfrog as discussed in \cite{sanz1982explicit}.

\subsection{ Spatial Discretization: Arbitrary Order Finite Element Methods \label{sec:space}}

In this section, we derive conservative finite element discretizations for Eqs.
\eqref{eq:model}.
In the next two subsections, we prove that the spatially discretized equations are conservative if the potential and vorticity are chosen from the appropriate finite element spaces.
This leads to the result that the  combination of correctly chosen spatial and temporal discretization is conservative.

Recall that a finite element discretization presents a  set of partial differential equations in integral form.
First let us establish some notation: let $\Omega$ be the~computational domain, $\partial\Omega$ be the boundary of the domain,
and $\vec{n}_{\partial\Omega}$ be the outward normal to the~boundary.
Let $\tauh$ be a~mesh covering $\Omega$, specified by finite element cells $K$, and their boundaries $\partial K$, as well as nodal interpolation points within each element.
Let $\Gamma$ denote the~union of the~internal faces between elements of the~mesh.
The mesh $\tauh$ also defines a uniquely oriented unit vector, $\vec{n}_\Gamma$,  normal to every face of $\Gamma$.
A weak formulation allows the solutions and their partial derivatives to manifest discontinuities across $\Gamma$.
For the weak integral formulation of the partial differential equations, the effects of the discontinuities are carefully handled using various numerical strategies.

\begin{definition}[Jump and average operators] The~jump and average operators are defined on the element interfaces $\Gamma$ via
\begin{align} \label{eq:jump_average_ops}
  \jumpOp{ \ftestLtwo } & =\ftestLtwo^+ - \ftestLtwo^-
  & \avgOp{ \ftestLtwo } &=  ( \ftestLtwo^+ + \ftestLtwo^-) /2
  & \mathrm{on}~\Gamma
\end{align}
where $\ftestLtwo^+$ and $\ftestLtwo^-$ are upwind and downwind values of
$\ftestLtwo \in \Ltwo{}$ with respect to $\vec{n}_\Gamma$.
\end{definition}

Finite element exterior calculus crucially relies on the
the~jump operator defined in Eq. \eqref{eq:jump_average_ops}, which
measures the lack of {\it continuity}, the essential property defining the spaces of discrete differential forms.
The space of discrete differential zero-forms, $\Hone{}$, corresponds to the~finite element space
of continuous piecewise polynomial scalar functions, defined by $\jumpOp{\ftestHone} = 0$ for
any function $\ftestHone \in \Hone{}$.
The notation $\Hone{\ppot}$ denotes continuous piecewise  polynomial functions of order $\leq \ppot$.
In contrast, the space of  discrete differential volume-forms (three-forms), $\Ltwo{}$, corresponds to the~finite element space
of piecewise polynomial functions that are potentially discontinuous  $\jumpOp{\ftestLtwo}\neq  0$ across $\Gamma$, for
any function $\ftestLtwo \in \Ltwo{}$.
The notation $\Ltwo{\ppot}$ denotes piecewise polynomial functions of order $\leq \ppot$ that are potentially discontinuous across $\Gamma$.

The space of piecewise polynomial vector fields that are potentially discontinuous across $\Gamma$ is denoted $\LtwoVec{}$ for dimension $d=3$.
On the surface $\Gamma$, a vector field,
$\vec{v} = \vec{v}_{n} + \vec{v}_{t}$, can be decomposed into components
normal,  $\vec{v}_{n} = (\vec{v} \cdot \vec{n}_{\Gamma}) \vec{n}_{\Gamma} $, and
tangential, $\vec{v}_{t} = \vec{v} - \vec{v}_{n}$, to  $\Gamma$.
For a Cartesian mesh, the space of discrete differential one-forms corresponds to the~finite element space,
$\Hcurl{}$, of piecewise polynomial vector functions with a continuous tangential component,
$\jumpOp{\vec{v}_t} = 0$, but a potentially discontinuous normal component across $\Gamma$.
Similarly, the space of discrete differential two-forms corresponds to the~finite element space,
$\Hdiv{}$, of piecewise polynomial vector functions with a continuous normal component,
$\jumpOp{\vec{v}_n} = 0$, but a potentially discontinuous tangential component across $\Gamma$.
In \cite{Ndlec1980MixedFE,RiebenThesis2004,CastilloCMES2004} it
was shown how arbitrary order differential forms can be constructed on general
non-Cartesian meshes. In this case,
the~conditions on continuity require that the integral of the jump over each
face $e$ of $\Gamma$ must vanish, i.e.
$\int_{e} \jumpOp{\vec{v}} \cdot \vec{n}_\Gamma q dS = 0$ for any two-form
$\vec{v} \in \Hdiv{}$ and any zero-form $q \in \Hone{}$, and
$\int_{e} \jumpOp{\vec{v}} \cdot (\vec{n}_\Gamma \times \vec{q}) dS = 0$
for any two one-forms $\vec{v}, \vec{q} \in \Hcurl{}$. Note that $q$ and
$\vec{n}_\Gamma \times \vec{q}$ are single valued on $e$.
Clearly, the space of discontinuous vector fields is the union of vector spaces $\LtwoVec{}~=~\Hcurl{}~\cup~\Hdiv{}$.
This is the discrete Hodge decomposition theorem (also known as the Helmholtz decomposition in 3D).

For the problems of interest, where the flows are nearly incompressible, it will be assumed that $\vDrift \in \Hdiv{}$, so that the normal component $ \vDrift \cdot \vec{n}_\Gamma$ across $\Gamma$  must be continuous.
\begin{lemma}[Divergence of the drift velocity]
  \label{thm:vDriftIncompressible}
  If $\fpot \in \Hone{}$ and $ \bhat/\Bmag \in \Hcurl{}$,
  then the~discrete approximation of the drift velocity \eqref{eq:driftV} implies that $\vDrift\in\Hdiv{}$.
  Furthermore, when $\Vnabla \times \frac{\bhat}{\Bmag} = \vec{0}$,
  $\vDrift$ is divergence-free at the discrete level: $\Vnabla \cdot \vDrift = 0$.

\end{lemma}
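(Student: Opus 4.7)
My plan is to split the lemma into the two claims---namely $\vDrift \in \Hdiv{}$, and the divergence-free property $\Vnabla \cdot \vDrift = 0$ under the extra curl hypothesis---and prove each separately.

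For the $\Hdiv{}$ membership, I would verify the weak continuity condition stated earlier in the excerpt: $\int_e \jumpOp{\vDrift} \cdot \vec{n}_\Gamma\, q\, dS = 0$ for every interior face $e \subset \Gamma$ and every test function $q \in \Hone{}$. The central manipulation is the scalar-triple-product rewriting
\[
\vDrift \cdot \vec{n}_\Gamma
 = \left( \frac{\bhat}{\Bmag} \times \VnablaPerp \fpot \right) \cdot \vec{n}_\Gamma
 = \left( \vec{n}_\Gamma \times \frac{\bhat}{\Bmag} \right) \cdot \VnablaPerp \fpot,
\]
which expresses the normal trace of $\vDrift$ as a pairing of two objects intrinsic to the face. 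Because $\bhat/\Bmag \in \Hcurl{}$ has a single-valued tangential component across $\Gamma$ and the cross product with $\vec{n}_\Gamma$ annihilates any normal part, the vector $\vec{n}_\Gamma \times \bhat/\Bmag$ is continuous across $\Gamma$. Because $\fpot \in \Hone{}$ is continuous, its tangential derivatives along each face are single-valued, and since $\vec{n}_\Gamma \times \bhat/\Bmag$ itself lies tangent to $\Gamma$, only those tangential derivatives of $\fpot$ enter the pairing. Consequently $\vDrift \cdot \vec{n}_\Gamma$ is single-valued across $\Gamma$, its jump vanishes pointwise, and the weak integral condition follows immediately.

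For the divergence-free claim, I would apply the standard vector identity
\[
\Vnabla \cdot \left( \vec{A} \times \vec{B} \right)
 = \vec{B} \cdot \left( \Vnabla \times \vec{A} \right)
 - \vec{A} \cdot \left( \Vnabla \times \vec{B} \right)
\]
with $\vec{A} = \bhat/\Bmag$ and $\vec{B} = \VnablaPerp \fpot$. The first resulting term vanishes identically by the standing hypothesis $\Vnabla \times ( \bhat/\Bmag ) = \vec{0}$. For the second term, in the 2D setting of the paper $\bhat = \zhat$ is uniform and $\fpot$ is independent of $z$, so $\VnablaPara \fpot = 0$ and $\VnablaPerp \fpot = \Vnabla \fpot$ is a pure gradient; its curl vanishes identically on each element, giving $\Vnabla \cdot \vDrift = 0$.

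The main obstacle I anticipate is the first claim. On general unstructured meshes $\Hdiv{}$ membership is not a pointwise continuity statement but the weak integral condition quoted in the excerpt, so one must be careful that the face-intrinsic rewriting really produces a single-valued integrand and not merely one that integrates to zero. The resolution is that both factors in the rewritten expression are individually single-valued on each face---which is strictly stronger than the weak condition---and that the test function $q \in \Hone{}$ is itself single-valued across faces, so multiplying by $q$ and integrating preserves the vanishing. A smaller point for the second claim is to confirm that element-wise polynomial regularity of $\fpot$ is enough for $\Vnabla \times \Vnabla \fpot$ to vanish cell by cell, which is immediate.
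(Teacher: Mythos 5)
Your proposal is correct in substance but takes a genuinely more concrete route than the paper. The paper's proof works entirely at the level of FEEC algebra: it cites the facts that $\fpot\in\Hone{}$ implies $\Vnabla\fpot\in\Hcurl{}$, that the cross product of two elements of $\Hcurl{}$ lies in $\Hdiv{}$ (giving $\vDrift\in\Hdiv{}$ immediately), and then, for the second claim, rewrites $\vDrift$ as the curl of the one-form $\fpot\,\bhat/\Bmag$ so that $\Vnabla\cdot\vDrift=0$ follows from $\Vnabla\cdot\Vnabla\times(\cdot)=0$. You instead unwind the first of these cited facts: your scalar-triple-product rewriting of $\vDrift\cdot\vec{n}_\Gamma$ as $(\vec{n}_\Gamma\times\bhat/\Bmag)\cdot\VnablaPerp\fpot$, together with the single-valuedness of the tangential trace of an $\Hcurl{}$ field and of the tangential derivatives of an $\Hone{}$ field, is essentially a self-contained proof that this particular product of one-forms has a continuous normal trace. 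That buys transparency (the reader sees exactly why $\jumpOp{\vDrift}\cdot\vec{n}_\Gamma=0$, pointwise and hence weakly), at the cost of length; the paper's version buys brevity by leaning on the FEEC toolbox. For the divergence-free claim the two arguments are equivalent up to a product rule: your identity $\Vnabla\cdot(\vec A\times\vec B)=\vec B\cdot\Vnabla\times\vec A-\vec A\cdot\Vnabla\times\vec B$ is exactly what converts the paper's curl representation into your two-term cancellation.

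One caveat on the second claim: you dispose of the term $(\bhat/\Bmag)\cdot\Vnabla\times\VnablaPerp\fpot$ by specializing to $\bhat=\zhat$ uniform, but the lemma is stated for a general $\bhat/\Bmag\in\Hcurl{}$ with vanishing curl, in which case $\VnablaPerp\fpot\neq\Vnabla\fpot$ and its curl need not vanish. The fix is one line: since $\bhat\times\bhat=0$, one has $\frac{\bhat}{\Bmag}\times\VnablaPerp\fpot=\frac{\bhat}{\Bmag}\times\Vnabla\fpot$ identically, so you may take $\vec B=\Vnabla\fpot$ in the identity and use that the element-wise curl of a gradient of a piecewise polynomial vanishes. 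With that substitution your argument covers the full generality of the statement.
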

\begin{proof}
  The FEEC relations imply that the usual theorems of vector calculus hold at the discrete level.
  The assumption $\fpot \in \Hone{}$ implies  $\Vnabla \fpot \in \Hcurl{}$.
  If $\vec{a}, \vec{c} \in \Hcurl{}$, then $\vec{a} \times \vec{c} \in \Hdiv{}$.
  Hence, the assumptions imply that $\frac{\bhat}{\Bmag} \times \Vnabla \fpot \in \Hdiv{}$.
  If $\vec{c} \in \Hcurl{}$, then $\Vnabla \cdot \Vnabla \times \vec{c} = 0$.
  Thus, the assumptions $\frac{\bhat}{\Bmag} \in \Hcurl{}$ and $\fpot \in \Hone{}$, imply that
  $\frac{\bhat}{\Bmag} \fpot \in \Hcurl{}$.
  To prove the final relation, note that if $\Vnabla \times \frac{\bhat}{\Bmag} = \vec{0}$,
  then $\vDrift = \Vnabla \times \left( \frac{\bhat}{\Bmag} \fpot \right)$ and, thus,
  $\Vnabla \cdot \vDrift = 0$.
\end{proof}
Consistency with the condition 
$\vDrift\in \Hdiv{}$ requires $\Vnabla \phi \in \Hcurl{}$, hence,
the potential to be continuous $\fpot \in \Hone{}$.
The~weak formulation of the~potential equation \eqref{eq:potential}
in the~finite element space of piecewise continuous polynomial functions of 
order $\ppot$, $\fpot \in \Honep$, yields
\begin{equation}
  \int_\Omega \frac{\rho}{\Bmag^2} \VnablaPerp \ftestHone
    \cdot \VnablaPerp \fpot ~dV
  =
  \int_\Omega \ftestHone \fvor  ~dV
  +
  \int_{\partial\Omega} \ftestHone \frac{\rho}{\Bmag^2}
    \VnablaPerp \fpot \cdot \vec{n}_{\partial \Omega} ~dS
  ,
  \hfill~\forall \ftestHone \in \Honep
  \label{eq:potentialWeak}
  .
\end{equation}
Similarly, the weak continuous formulation of polynomial
order $\qvor$ for the vorticity \eqref{eq:vorticity}, $\fvor \in \Honeq$, yields
\begin{multline}
  \int_\Omega \ftestHone \left( \partial_t \fvor  + \vDrift \cdot \Vnabla \fvor\right) ~dV   =
  \\
    \int_\Omega \left( \ftestHone S_{\fvor} - \ftestHone \alpha \fvor
    -  \Vnabla \ftestHone \cdot \mat{\mu} \cdot \Vnabla \fvor - \Vnabla \ftestHone \cdot \sigma_\| \Vnabla_\|\fpot \right) ~dV
    \\
      +  \int_{\partial\Omega} \ftestHone  \left(  \mat{\mu} \cdot \Vnabla \fvor +  \sigma_\| \Vnabla_\| \fpot  \right)  \cdot \vec{n}_{\partial \Omega} ~dS
  ,
   \hspace{0.25in}  ~\forall \ftestHone \in \Honeq
   \label{eq:vorticityH1}
   .
\end{multline}

Next, we derive the weak discontinuous Galerkin (DG)   finite element formulation of the vorticity.
In order to do so, let us remind the reader of a few important facts. 
First, the jump of a product of functions  $a, b \in \Ltwo{}$ is
  \begin{align}\label{eq:productJump}
    \jumpOp{ a b  } = \avgOp{ a } \jumpOp{ b }
    + \jumpOp{ a }  \avgOp{ b } .
  \end{align}
  This result follows upon substitution of the definitions \eqref{eq:jump_average_ops}.
Second, the divergence theorem for 
  $a  \in \Ltwo{}$  and $\vDrift \in \Hdiv{}$ yields
  \begin{gather} \label{eq:divTheoremDG}
    \int_\Omega \Vnabla\cdot(a   \vDrift) ~dV
    = \int_{\partial\Omega} a   \vDrift \cdot \vec{n}_{\partial \Omega} ~dS
    + \int_\Gamma  \jumpOp{ a  } \vDrift \cdot  \vec{n}_\Gamma~dS .
  \end{gather}
 This can be shown by using the divergence theorem on the finite element mesh $\tau_h$,
  \begin{gather*}
    \int_\Omega \Vnabla \cdot (a  \vDrift) ~dV
    = \sum_{K\in \tau_h} \int_{\partial K}   \Vnabla \cdot (a   \vDrift) ~dV
    = \sum_{K\in \tau_h} \int_{\partial K} a  \vDrift \cdot  \vec{n}_{\partial K} ~dS.
  \end{gather*}
  When $\vDrift \in \Hdiv{}$, $\jumpOp{ \vDrift } \cdot  \vec{n}_\Gamma= 0$, hence
  $\jumpOp{ a   \vDrift } \cdot  \vec{n}_\Gamma=  \jumpOp{ a   } \vDrift \cdot  \vec{n}_\Gamma$,
  and the~sum of discontinuities can be expressed using the jump operator
  \begin{gather*}
    \sum_{K\in \tau_h} \int_{\partial K} a   \vDrift \cdot \vec{n}_{\partial K} ~dS
  =
    \int_{\Gamma}  \jumpOp{ a  }  \vDrift \cdot \vec{n}_\Gamma~dS
    + \int_{\partial \Omega} a   \vDrift \cdot \vec{n}_{\partial \Omega} ~dS .
  \end{gather*}
  Combining the two equations above yields \eqref{eq:divTheoremDG}. 
  Finally, combining Eq.~\eqref{eq:productJump} with Eq.~\eqref{eq:divTheoremDG} leads to
  \begin{gather} \label{eq:divTheoremDGproduct}
    \int_\Omega \Vnabla\cdot(a b  \vDrift) ~dV
    = \int_{\partial\Omega} a b   \vDrift \cdot \vec{n}_{\partial \Omega} ~dS
    + \int_\Gamma \left(\jumpOp{ a  } \avgOp{b} + \avgOp{a} \jumpOp{b} \right) \vDrift \cdot \vec{n}_\Gamma~dS .
  \end{gather}

The weak discontinuous formulation for the vorticity, $\fvor \in \Ltwoq$, allows one to upwind the side on which the flux is sampled relative to the faces of $\Gamma$.
Thus, the weak DG formulation of polynomial order $\qvor$ for the vorticity equation \eqref{eq:vorticity} yields additional jump terms   that define the numerical fluxes
\begin{multline}
  \int_\Omega \ftestLtwo  \left( \partial_t \fvor  + \vDrift \cdot \Vnabla \fvor\right) ~dV  
    -  \int_{\Gamma} \left[ \vDrift \cdot \vec{n}_\Gamma \avgOp{ \ftestLtwo }  \jumpOp{ \fvor }
    - \delta \abs{\vDrift\cdot \vec{n}_\Gamma} \jumpOp{ \ftestLtwo }  \jumpOp{ \fvor } \right]~dS =
  \\
     \int_\Omega \left( \ftestLtwo S_{\fvor} - \ftestLtwo \alpha \fvor
    - \Vnabla \ftestLtwo \cdot \mat{\mu} \cdot \Vnabla \fvor
    - \Vnabla \ftestLtwo \cdot  \sigma_\| \Vnabla_\|\fpot \right) ~dV
    \\
    +  \int_{\partial\Omega} \ftestLtwo  \left( \mat{\mu} \cdot \Vnabla \fvor 
      +   \sigma_\| \Vnabla_\| \fpot  \right)  \cdot \vec{n}_{\partial \Omega} ~dS
\\
+ \int_\Gamma \left[
    \jumpOp{ \ftestLtwo }
    \vec{n}_\Gamma \cdot \mat{\mu} \cdot \avgOp{ \Vnabla \fvor }
    + \avgOp{ \Vnabla \ftestLtwo }
    \cdot \mat{\mu} \cdot \vec{n}_\Gamma \jumpOp{ \fvor }
    - \frac{C}{h}\jumpOp{ \ftestLtwo } \cdot \mat{\mu} \cdot \jumpOp{ \fvor }
      \right]~dS
      \\
    +\int_\Gamma \jumpOp{ \ftestLtwo }
    \vec{n}_\Gamma \cdot \sigma_\| \avgOp{ \Vnabla_\| \fpot }
     ~dS
     ,
   \hspace{0.25in} 
  \forall \ftestLtwo \in \Ltwoq
   \label{eq:vorticityIdealWeakGen}
   .
\end{multline}
Here, we use the standard DG formulation of advection \cite{DGM_Brezi2004},
\cite{DGM2015} with the conservative numerical flux
$\hat{\fvor} = \avgOp{ \fvor } + \delta
\frac{|\vDrift\cdot\vec{n}_\Gamma|}{\vDrift\cdot\vec{n}_\Gamma}
\jumpOp{ \fvor }$.
A more complete discussion of the advection operator used by Eq. \eqref{eq:vorticityIdealWeakGen} can be found in 
\ref{app:DGformulation}.
We also use the standard DG formulation of diffusion \cite{Arnold2006}
where the parameter $C>0$ is the~DG diffusion interior penalty parameter and $h$ is the~cell size.
The parameter $\delta \in [0,1/2]$ controls
the amount of upwinding applied to the advection operator, where $\delta=0$ is the centered flux and $\delta=1/2$ is the fully upwinded flux.

\subsection{Enstrophy Conservation \label{sec:enstrophy}}

Solving either Eq. \eqref{eq:vorticityH1} or  Eq. \eqref{eq:vorticityIdealWeakGen} with a symplectic integrator will conserve enstrophy for an appropriate set of assumptions.

\begin{lemma}[Enstrophy as a discrete quadratic form]
  \label{lem:enstrophyQuadratic}
  Let the vorticity $\fvor$ be approximated within a finite-dimensional function space $\mathcal{V}$, e.g. either $\Ltwoq$ or $\Honeq$, spanned by a set of basis functions $\vect{\ftestgen}$, so that one can express the vorticity field as $\fvor = \vect{\ftestgen}^T\vect{\fvor}$, where $\vect{\fvor}$ represent the coefficients of the finite element representation.
  Then, the enstrophy is a quadratic form
  \begin{gather}
\Enstrophy=\tfrac{1}{2}    \int_\Omega w^2 ~dV =\tfrac{1}{2} \vect{\fvor}^T \matr{M} \vect{\fvor}
  \end{gather}
  where the mass matrix $ \matr{M}$ is defined by
   \begin{gather}
   \matr{M} = \int_\Omega \vect{\ftestgen} \vect{\ftestgen}^T ~dV.
     \end{gather}
\end{lemma}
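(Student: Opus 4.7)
The plan is to proceed by direct substitution of the finite element expansion into the definition of enstrophy, using elementary linear algebra to identify the resulting integral with the mass matrix. Since the statement is essentially algebraic in character once the expansion $\fvor = \vect{\ftestgen}^T\vect{\fvor}$ is adopted, there is no subtle analytic obstacle; the main point is to verify that the manipulation is legitimate regardless of whether $\mathcal{V}$ is the continuous space $\Honeq$ or the discontinuous space $\Ltwoq$, and that the resulting matrix $\matr{M}$ is symmetric (so it genuinely defines a quadratic form in the sense required for the application of \Cref{th:symplectic_integrators}).

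First I would write $\fvor^2 = \fvor \cdot \fvor$ and substitute the expansion, noting that $\vect{\fvor}$ is a vector of time-dependent coefficients while $\vect{\ftestgen}$ is a vector of fixed spatial basis functions. Since $\fvor$ is scalar-valued, $\fvor = \vect{\ftestgen}^T\vect{\fvor} = \vect{\fvor}^T\vect{\ftestgen}$, so
\begin{equation*}
  \fvor^2 = \left(\vect{\fvor}^T\vect{\ftestgen}\right)\left(\vect{\ftestgen}^T\vect{\fvor}\right) = \vect{\fvor}^T \left(\vect{\ftestgen}\vect{\ftestgen}^T\right) \vect{\fvor}.
\end{equation*}
Integrating over $\Omega$ and pulling the coefficient vectors outside the integral (which is justified because they are independent of the spatial variable) yields
\begin{equation*}
  \tfrac{1}{2}\int_\Omega \fvor^2 \, dV = \tfrac{1}{2}\vect{\fvor}^T\left(\int_\Omega \vect{\ftestgen}\vect{\ftestgen}^T\, dV\right)\vect{\fvor} = \tfrac{1}{2}\vect{\fvor}^T\matr{M}\vect{\fvor},
\end{equation*}
which is exactly the claim.

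To tie up the loose ends, I would observe that the entries $\MassMatrixElements_{ij} = \int_\Omega \ftestgen_i \ftestgen_j \, dV$ are manifestly symmetric in $i,j$, so $\matr{M}^T = \matr{M}$ and $\tfrac{1}{2}\vect{\fvor}^T\matr{M}\vect{\fvor}$ is a bona fide quadratic form in the sense of \eqref{eq:quad}. I would also briefly remark that this argument is insensitive to the choice between $\Honeq$ and $\Ltwoq$: the integrand $\fvor^2$ is a scalar pointwise product, so possible jumps of $\fvor$ across $\Gamma$ contribute only on a set of measure zero and the elementwise integrals sum to the global one without additional interface terms. The only step that might merit a sentence of care is the identification $\int_\Omega \vect{\ftestgen}\vect{\ftestgen}^T\, dV = \matr{M}$ component-wise, but this is just the definition of the standard finite element mass matrix.
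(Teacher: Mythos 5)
Your proof is correct and follows essentially the same route as the paper's: substitute $\fvor = \vect{\ftestgen}^T\vect{\fvor}$, use the scalar identity to rewrite one factor as $\vect{\fvor}^T\vect{\ftestgen}$, pull the coefficient vectors outside the integral, and identify the remaining integral with $\matr{M}$. Your added remarks on the symmetry of $\matr{M}$ and the irrelevance of interface jumps are sensible elaborations but not a different argument.
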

\begin{proof}
  \begin{gather*}
    \int_\Omega \fvor^2 ~dV
      = \int_\Omega \vect{\ftestgen}^T \vect{\fvor} \vect{\ftestgen}^T \vect{\fvor} ~dV
      = \int_\Omega \vect{\fvor}^T \vect{\ftestgen} \vect{\ftestgen}^T \vect{\fvor}
      =  \vect{\fvor}^T \matr{M} \vect{\fvor}.
  \end{gather*}
\end{proof}

\begin{theorem}[Semi-discrete enstrophy conservation for continuous time] \label{lem:enstrophyConservedWeak}
  Let the velocity $\vDrift \in \Hdiv{}$ and assume
  (i) either $\fvor \in \Honeq$ in \eqref{eq:vorticityH1} or
  $\fvor \in \Ltwoq$ in \eqref{eq:vorticityIdealWeakGen} without upwinding
  ($\delta = 0$),
  (ii) $\Vnabla \cdot \vDrift = 0$ in $\Omega$, and
  (iii) the weak form satisfies $\int_{\partial \Omega}~\fvor^2~\vDrift \cdot~\vec{n}_{\partial \Omega}~dS~=~0$.
  Then, the spatially discretized vorticity ODE, which are continuous in time, conserve enstrophy
  \begin{gather*}
    \frac{d}{dt} \int_\Omega \fvor^2 ~dV = 0.
  \end{gather*}
  Note that the condition $\int_{\partial \Omega} \fvor^2 \vDrift \cdot \vec{n}_{\partial \Omega} ~dS = 0$ is met by periodic or homogeneous Dirichlet boundary conditions (or a mixture thereof) on $\fvor$ and $\vDrift \cdot \vec{n}_{\partial \Omega}$.
\end{theorem}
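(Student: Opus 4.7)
The plan is to test the weak vorticity equation against $\ftestgen = \fvor$ itself, which is legitimate since the discrete $\fvor$ belongs to its own test space. Since the theorem targets conservation in the ideal (conservative) limit, I take the dissipation, forcing, and parallel-conductivity terms on the right-hand side to vanish throughout. The time-derivative contribution then collapses at once to
\begin{equation*}
\int_\Omega \fvor \, \partial_t \fvor \, dV = \tfrac{1}{2}\frac{d}{dt} \int_\Omega \fvor^2 \, dV = \frac{d\Enstrophy}{dt},
\end{equation*}
so everything reduces to showing that the remaining advective contributions cancel.

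The pivotal pointwise identity is $\fvor \, \vDrift \cdot \Vnabla \fvor = \tfrac{1}{2} \vDrift \cdot \Vnabla \fvor^2$, which, combined with assumption (ii), rewrites the volume advection integrand elementwise as $\tfrac{1}{2}\Vnabla \cdot (\fvor^2 \vDrift)$. For the $\Honeq$ formulation \eqref{eq:vorticityH1}, $\fvor$ is continuous across $\Gamma$, so the ordinary divergence theorem produces only the boundary term $\tfrac{1}{2}\int_{\partial\Omega} \fvor^2 \vDrift \cdot \vec{n}_{\partial\Omega}\, dS$, which vanishes by assumption (iii), closing that case.

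For the DG formulation \eqref{eq:vorticityIdealWeakGen} with $\delta = 0$, I invoke the broken divergence identity \eqref{eq:divTheoremDG} with $a = \fvor^2$ and $\vDrift \in \Hdiv{}$; in addition to the $\partial\Omega$ piece (vanishing by (iii)), this produces an interior-face contribution $\tfrac{1}{2}\int_\Gamma \jumpOp{\fvor^2}\, \vDrift \cdot \vec{n}_\Gamma \, dS$. The product-jump identity \eqref{eq:productJump} gives $\jumpOp{\fvor^2} = 2\avgOp{\fvor}\jumpOp{\fvor}$, so this integral reduces to $\int_\Gamma \vDrift \cdot \vec{n}_\Gamma \, \avgOp{\fvor}\jumpOp{\fvor}\, dS$, which is \emph{exactly} the centered-flux term that appears subtracted on the left-hand side of \eqref{eq:vorticityIdealWeakGen}. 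The two cancel identically, leaving $d\Enstrophy/dt = 0$.

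The main obstacle, and the only delicate step, is this DG cancellation: it requires the central-flux choice $\delta = 0$ to be tuned exactly to the jump of the \emph{nonlinear} quantity $\fvor^2$. With any upwinding $\delta > 0$, a residual $\delta \int_\Gamma |\vDrift \cdot \vec{n}_\Gamma|\jumpOp{\fvor}^2\, dS \geq 0$ survives on the left, yielding $d\Enstrophy/dt \leq 0$; this previews why the upwinded DG scheme discussed later in the paper dissipates enstrophy while still conserving energy. Finally, assumption (ii), $\Vnabla \cdot \vDrift = 0$, must be interpreted pointwise on each element for the chain-rule step to be valid; the regime of Lemma \ref{thm:vDriftIncompressible} with $\Vnabla \times \frac{\bhat}{\Bmag} = \vec{0}$ supplies this automatically.
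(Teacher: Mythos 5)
Your proposal is correct and follows essentially the same route as the paper: test the weak form with $\ftestLtwo=\fvor$, use the product-jump identity \eqref{eq:productJump} to write $\avgOp{\fvor}\jumpOp{\fvor}=\tfrac{1}{2}\jumpOp{\fvor^2}$, and apply the broken divergence theorem \eqref{eq:divTheoremDG} with $a=\fvor^2$ together with assumptions (ii) and (iii) so that the volume advection term cancels the centered interior-face flux. Your added remarks on the sign-definite residual for $\delta>0$ and the pointwise interpretation of $\Vnabla\cdot\vDrift=0$ are consistent with the paper (cf.\ \eqref{eq:enstrophyConservePoisson} and \Cref{thm:vDriftIncompressible}).
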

\begin{proof}
  Because the expression for $\fvor$ in \eqref{eq:vorticityIdealWeakGen} with $\delta = 0$ holds for all $\ftestLtwo \in \Ltwoq$ and $\fvor \in \Ltwoq$, the expression holds for $\ftestLtwo = \fvor$:
  \begin{gather*}
    \int_\Omega \fvor \partial_t \fvor ~dV =
      - \int_\Omega \fvor \vDrift \cdot \Vnabla \fvor ~dV
      + \int_{\Gamma} \avgOp{ \fvor } \jumpOp{ \fvor } 
        \vDrift \cdot \vec{n}_\Gamma ~dS
      .
  \end{gather*}
  Noting that $\avgOp{\fvor} \jumpOp{ \fvor } =
  \frac{1}{2} \jumpOp{ \fvor^2 }$ from \eqref{eq:productJump}
  and letting $a = \fvor^2$ in \eqref{eq:divTheoremDG}, the expression above
  yields
  \begin{gather*}
    \int_\Omega  \partial_t \fvor^2 ~dV =
      \int_\Omega \fvor^2 \Vnabla \cdot \vDrift ~dV
      - \int_{\partial \Omega} \fvor^2 \vDrift \cdot \vec{n}_{\partial \Omega} ~dS .
  \end{gather*}
  The assumptions $\Vnabla \cdot \vDrift = 0$ in $\Omega$ and $\int_{\partial \Omega} \fvor^2 \vDrift \cdot \vec{n}_{\partial \Omega} ~dS = 0$  then yield
  \begin{gather*}
    \frac{1}{2} \frac{d}{dt} \int_\Omega \fvor^2 ~dV = \int_\Omega \fvor \partial_t \fvor ~dV = 0,
  \end{gather*}
  which completes the proof.
\end{proof}

\begin{theorem}[Fully discrete enstrophy conservation]
  \label{thm:enstrophyConservedDiscrete}
  Let the velocity $\vDrift \in \Hdiv{}$ and assume
  (i) either $\fvor \in \Honeq$ in \eqref{eq:vorticityH1} or
  $\fvor \in \Ltwoq$ in \eqref{eq:vorticityIdealWeakGen} without upwinding ($\delta = 0$),
  (ii) $\Vnabla \cdot \vDrift = 0$ in $\Omega$, and
  (iii) periodic or homogeneous Dirichlet boundary conditions (or a mixture thereof). Then, if
  the discretized vorticity ODE are evolved in discrete time with a symplectic integrator,
  enstrophy will be conserved in discrete time.
\end{theorem}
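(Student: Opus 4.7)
The plan is to combine three results already established in the excerpt: the enstrophy lemma expressing $\Enstrophy$ as a quadratic form, the semi-discrete enstrophy conservation theorem, and the corollary stating that B-series symplectic integrators coupled with a conservative spatial discretization exactly preserve quadratic invariants. The key idea is that, once the spatial discretization is written as an ODE system in the coefficient vector, the semi-discrete theorem supplies precisely the algebraic identity $\vect{u}^T\matr{Q}\vect{f}(\vect{u})=0$ that activates the Chartier--Faou--Murua machinery invoked in \Cref{th:symplectic_integrators} and \Cref{cor:symplectic_integrators}.

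First, I would rewrite the semi-discrete weak vorticity equation as a concrete finite-dimensional ODE. Expanding $\fvor = \vect{\ftestgen}^T \vect{\fvor}$ in the basis spanning $\Honeq$ (or $\Ltwoq$ with $\delta = 0$) and taking each basis function in turn as the test function in \eqref{eq:vorticityH1} or \eqref{eq:vorticityIdealWeakGen}, I obtain the square system
\begin{equation*}
  \matr{M}\,\frac{d\vect{\fvor}}{dt} = \vect{g}(\vect{\fvor}),
\end{equation*}
where $\matr{M}$ is the symmetric SPD mass matrix from \Cref{lem:enstrophyQuadratic}. Inverting yields the standard form $d\vect{\fvor}/dt = \vect{f}(\vect{\fvor})$ with $\vect{f} = \matr{M}^{-1}\vect{g}$. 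Next, I would observe that $\Enstrophy = \tfrac{1}{2}\vect{\fvor}^T \matr{M}\vect{\fvor}$ is a quadratic invariant of this ODE in the sense of \eqref{eq:quad}: by \Cref{lem:enstrophyConservedWeak} we have $d\Enstrophy/dt = \vect{\fvor}^T \matr{M}\,d\vect{\fvor}/dt = \vect{\fvor}^T \vect{g}(\vect{\fvor}) = 0$ along every trajectory, under exactly the hypotheses (i)--(iii) of the present theorem. This is the condition $\vect{u}^T \matr{Q}\vect{f}(\vect{u})=0$ with $\matr{Q} = \matr{M}$. Invoking \Cref{cor:symplectic_integrators} then gives the fully discrete conservation $\Enstrophy(\vect{\fvor}_{n+1}) = \Enstrophy(\vect{\fvor}_{n})$ for every time step.

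The step I expect to require the most care is a bookkeeping one rather than a deep obstacle: verifying that the mass matrix that naturally appears on the left-hand side of the semi-discrete ODE is literally the same symmetric matrix $\matr{M}$ that defines $\Enstrophy$ in \Cref{lem:enstrophyQuadratic}. Because the Galerkin method uses the same trial and test space for $\fvor$, this identification is immediate, but it is essential: any mismatch (for instance, using a lumped or otherwise inconsistent mass matrix) would break the identification of continuous-time enstrophy with a quadratic invariant of the discrete ODE system, and the quadratic-invariant theorem would no longer apply. A secondary caveat, already noted in \Cref{sec:temporal}, is that the symplectic Runge--Kutta stage equations must be solved to sufficient nonlinear and linear tolerance for exact conservation to hold in floating-point practice; the theoretical statement itself assumes exact stage solves.
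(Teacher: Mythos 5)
Your proposal is correct and follows essentially the same route as the paper's own proof: it invokes \Cref{lem:enstrophyQuadratic} to identify enstrophy as a quadratic form, \Cref{lem:enstrophyConservedWeak} to establish the semi-discrete invariance under hypotheses (i)--(iii), and the symplectic-integrator result to pass to discrete time. The explicit bookkeeping you add (writing the ODE as $\matr{M}\,d\vect{\fvor}/dt = \vect{g}(\vect{\fvor})$ and checking $\vect{\fvor}^T\matr{M}\matr{M}^{-1}\vect{g} = 0$, plus the caveat about exact stage solves) is a useful elaboration of steps the paper leaves implicit, not a different argument.
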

\begin{proof}
  Because $\fvor \in \Ltwoq$, \Cref{lem:enstrophyQuadratic} implies that
  the enstrophy is equal to a quadratic form on $\Ltwoq$.
  Using (i) the weak formulations \eqref{eq:vorticityH1} or 
  \eqref{eq:vorticityIdealWeakGen}, in conjunction with conditions (ii)-(iii), 
  \Cref{lem:enstrophyConservedWeak} implies that the spatially discretized form 
  of the enstrophy is conserved in continuous time.
  If a symplectic integrator is used to integrate the resulting vorticity ODE, then \Cref{th:symplectic_integrators} implies that the enstrophy will be conserved in discrete time.
\end{proof}

\subsection{Energy Conservation \label{sec:energy}}
Integrating the discretized equations \eqref{eq:potentialWeak} with a symplectic integrator also conserves total kinetic energy  for an appropriate set of assumptions. In particular, the polynomial order of the finite element spaces for potential and vorticity must be consistent with one another.

\begin{lemma}[Energy as a discrete quadratic form for the potential]
  \label{lem:energyQuadratic}
  Let the potential $\fpot$ be approximated as an element of a finite-dimensional function space $\mathcal{V}$, e.g. either $\Honeq$ or $\Ltwoq$, and let $\vect{\ftestgen}$ be a vector of basis functions for this function space, so that the potential can be expressed as $\fpot = \vect{\ftestgen}^T \vect{\fpot}$.
  Then, the energy is equivalent to a quadratic form:
  \begin{gather*}
    \int_\Omega \frac{\rho}{\Bmag^2} \VnablaPerp \fpot \cdot \VnablaPerp \fpot ~dV = \vect{\fpot}^T \DiffusionMatrix \vect{\fpot},
  \end{gather*}
  where $\DiffusionMatrix$ is the symmetric positive semi-definite ``diffusion'' matrix defined via
  \begin{gather} \label{eq:diffusion_matrix}
    \DiffusionMatrixElements_{ij} = \int_\Omega \frac{\rho}{\Bmag^2} \VnablaPerp \ftestgen_i \cdot \VnablaPerp \ftestgen_j ~dV.
  \end{gather}
\end{lemma}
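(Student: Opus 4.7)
The plan is to mirror the proof of \Cref{lem:enstrophyQuadratic} almost verbatim, with the gradient structure of the energy functional replacing the pointwise product structure used for the enstrophy. The essential idea is that since $\vect{\fpot}$ is a vector of scalar coefficients independent of position, the spatial operators $\VnablaPerp$ commute past $\vect{\fpot}$ and act only on the basis functions $\vect{\ftestgen}$.

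First I would substitute the finite element expansion $\fpot = \vect{\ftestgen}^T \vect{\fpot}$ directly into the energy integrand. Because $\vect{\fpot}$ consists of constants, linearity of $\VnablaPerp$ gives $\VnablaPerp \fpot = (\VnablaPerp \vect{\ftestgen})^T \vect{\fpot}$. Substituting both factors into the integral and using the scalar-triple-product-style identity $(\vec{a}^T \vect{u})(\vec{b}^T \vect{u}) = \vect{u}^T (\vec{a} \vec{b}^T) \vect{u}$ lets me write
\begin{equation*}
  \int_\Omega \frac{\rho}{\Bmag^2} \VnablaPerp \fpot \cdot \VnablaPerp \fpot \, dV
    = \vect{\fpot}^T \left( \int_\Omega \frac{\rho}{\Bmag^2} (\VnablaPerp \vect{\ftestgen}) \cdot (\VnablaPerp \vect{\ftestgen})^T \, dV \right) \vect{\fpot}.
\end{equation*}
Pulling $\vect{\fpot}$ outside the integral is justified by finite-dimensionality, and the entry $(i,j)$ of the matrix inside the parentheses is exactly $\int_\Omega \frac{\rho}{\Bmag^2} \VnablaPerp \ftestgen_i \cdot \VnablaPerp \ftestgen_j \, dV = \DiffusionMatrixElements_{ij}$ as defined in \eqref{eq:diffusion_matrix}. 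This yields the claimed identity.

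To finish, I would note the symmetry and positive semi-definiteness of $\DiffusionMatrix$ claimed in the statement. Symmetry $\DiffusionMatrixElements_{ij} = \DiffusionMatrixElements_{ji}$ is immediate from the symmetry of the dot product. Positive semi-definiteness follows from the identity itself: for any coefficient vector $\vect{\fpot}$, the right-hand side equals an integral of $\frac{\rho}{\Bmag^2}|\VnablaPerp \fpot|^2 \geq 0$, since $\rho/\Bmag^2 > 0$ by assumption. The main (and really only) obstacle is bookkeeping care: making sure the outer-product construction $\vect{\ftestgen}\vect{\ftestgen}^T$ generalizes correctly when each basis function is scalar but its gradient is vector-valued, so that the dot product in physical space becomes the inner contraction inside the outer product of basis gradients. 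Aside from this minor notational check, no nontrivial analytic work is required.
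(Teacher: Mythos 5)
Your proposal is correct and follows essentially the same route as the paper's own proof: substitute the basis expansion $\fpot = \sum_i \fpot_i \ftestgen_i$, use linearity of $\VnablaPerp$ and of the integral to pull the constant coefficients out, and identify the resulting double sum with $\vect{\fpot}^T \DiffusionMatrix \vect{\fpot}$. Your additional remarks verifying symmetry and positive semi-definiteness of $\DiffusionMatrix$ are sound (and slightly more complete than the paper, which leaves those properties implicit).
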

\begin{proof}
  \begin{align*}
    \int_\Omega \frac{\rho}{\Bmag^2} \VnablaPerp \fpot \cdot \VnablaPerp \fpot ~dV
      &= \int_\Omega \frac{\rho}{\Bmag^2} \VnablaPerp \left(\sum_i \fpot_i \ftestgen_i \right) \cdot \VnablaPerp \left(\sum_j \fpot_j \ftestgen_j\right) ~dV \\
      &= \sum_i \sum_j \fpot_i \fpot_j \int_\Omega \frac{\rho}{\Bmag^2} \VnablaPerp \ftestgen_i \cdot \VnablaPerp \ftestgen_j ~dV
      = \vect{\fpot}^T \DiffusionMatrix \vect{\fpot}
  \end{align*}
\end{proof}

\begin{corollary}[Energy as a discrete quadratic form for the vorticity]
  \label{cor:energyQuadraticVorticity}
  Suppose the potential approximation $\fpot=\vect{\ftestHone}^T \vect{\fpot}$ and vorticity approximation $\fvor=\vect{\ftestLtwo}^T \vect{\fvor}$ satisfy the weak formulation \eqref{eq:potentialWeak} with a set of boundary conditions that make $\fpot$ unique, or equivalently that
  \begin{gather*}
    \DiffusionMatrix \vect{\fpot} = \OverlapMatrix \vect{\fvor}
  \end{gather*}
  where
  \begin{align*}
     \OverlapMatrix = \int_\Omega \vect{\psi}\vect{\eta}^T ~dV
  \end{align*}
  and $\DiffusionMatrix$ the symmetric positive definite matrix in Eq. \eqref{eq:diffusion_matrix} defined by the the choice of $\mathcal{V}$.
  Then energy can be expressed as a quadratic form for the vorticity:
  \begin{gather*}
    \Energy =\tfrac{1}{2}  \vect{\fpot}^T \DiffusionMatrix \vect{\fpot} =\tfrac{1}{2} \vect{\fvor}^T \OverlapMatrix^T \DiffusionMatrix^{-1}  \OverlapMatrix \vect{\fvor}.
  \end{gather*}
\end{corollary}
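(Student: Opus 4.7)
The plan is to chain together the previous lemma with a simple change of variables from the potential coefficients to the vorticity coefficients. First I would apply \Cref{lem:energyQuadratic} to write $\Energy = \tfrac{1}{2} \vect{\fpot}^T \DiffusionMatrix \vect{\fpot}$ immediately, so the task reduces to expressing $\vect{\fpot}$ in terms of $\vect{\fvor}$ and substituting.

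Next I would verify the algebraic identity $\DiffusionMatrix \vect{\fpot} = \OverlapMatrix \vect{\fvor}$ by testing the weak potential equation \eqref{eq:potentialWeak} against each basis function $\ftestHone = \psi_i$ in turn, expanding $\fpot = \sum_j \fpot_j \psi_j$ and $\fvor = \sum_k \fvor_k \eta_k$, and reading off the $i$-th row. The left-hand side yields $\sum_j \DiffusionMatrixElements_{ij} \fpot_j$ by the definition of $\DiffusionMatrix$ in \eqref{eq:diffusion_matrix}, while the right-hand side yields $\sum_k \OverlapMatrixElements_{ik} \fvor_k$ by the definition of $\OverlapMatrix$, with the boundary contribution dropping out under the stipulated boundary conditions that make $\fpot$ unique.

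With that linear relation in hand, the uniqueness assumption guarantees that $\DiffusionMatrix$ is invertible (as a symmetric positive semi-definite matrix whose nullspace is suppressed by the boundary conditions it becomes SPD), so I can solve $\vect{\fpot} = \DiffusionMatrix^{-1} \OverlapMatrix \vect{\fvor}$. Substituting into $\Energy = \tfrac{1}{2} \vect{\fpot}^T \DiffusionMatrix \vect{\fpot}$ and using $\DiffusionMatrix^{-T} = \DiffusionMatrix^{-1}$ from symmetry collapses $\DiffusionMatrix^{-T} \DiffusionMatrix \DiffusionMatrix^{-1}$ to $\DiffusionMatrix^{-1}$, giving the claimed expression $\Energy = \tfrac{1}{2} \vect{\fvor}^T \OverlapMatrix^T \DiffusionMatrix^{-1} \OverlapMatrix \vect{\fvor}$.

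The only genuinely delicate point is the invertibility of $\DiffusionMatrix$: in general the discrete operator on $\Honep$ has a nontrivial kernel (constants, under pure Neumann conditions) and the boundary-condition clause of the hypothesis is precisely what kills that kernel. I would state this as a short aside rather than a separate argument. The symmetry and positivity of the resulting $\OverlapMatrix^T \DiffusionMatrix^{-1} \OverlapMatrix$, which is what makes the expression a bona fide quadratic form in the sense of \eqref{eq:quad} usable by \Cref{th:symplectic_integrators}, then follows by inspection from the symmetry of $\DiffusionMatrix^{-1}$.
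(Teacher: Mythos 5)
Your proposal is correct and is exactly the argument the paper intends: the corollary is stated without an explicit proof, as an immediate consequence of \Cref{lem:energyQuadratic} together with the matrix form of \eqref{eq:potentialWeak}, and your chain (test against basis functions to get $\DiffusionMatrix \vect{\fpot} = \OverlapMatrix \vect{\fvor}$, invert using the SPD property guaranteed by the boundary conditions, substitute and simplify via symmetry of $\DiffusionMatrix$) is the standard fill-in. Your aside on invertibility correctly identifies the one point where the hypothesis on boundary conditions is actually doing work.
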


\begin{theorem}[Semi-discrete energy conservation for continuous time]
  \label{lem:energyConservedWeak}
   Let the potential $\fpot \in \Honep$ determined by \eqref{eq:potentialWeak} 
   and velocity $\vDrift \in \Hdiv{}$
   determined by \eqref{eq:driftV}, and assume
  (i) either $\fvor \in \Honeq$ in \eqref{eq:vorticityH1} or
  $\fvor \in \Ltwoq$ in \eqref{eq:vorticityIdealWeakGen},
  where $\qvor \geq \ppot$,
  (ii) $\Vnabla \cdot \vDrift = 0$ in $\Omega$,
  (iii) $\int_{\partial\Omega} \fvor \fpot \vDrift \cdot \vec{n}_{\partial \Omega} ~dS = 0$,
  (iv) $\int_{\partial \Omega} \fpot \VnablaPerp \fpot \cdot \vec{n}_{\partial \Omega} ~dS = 0$, and
  (v) $\int_{\partial \Omega} \partial_t \fpot \VnablaPerp \fpot \cdot \vec{n}_{\partial \Omega} ~dS = 0$.
  Then, the spatially discretized ODEs, which are continuous in time, 
  conserve energy,
  \begin{gather*}
    \frac{d}{dt} \int_\Omega \frac{\rho}{\Bmag^2} \VnablaPerp \fpot \cdot \VnablaPerp \fpot ~dV = 0.
  \end{gather*}
  Note that the conditions $\int_{\partial\Omega} \fvor \fpot \vDrift \cdot \vec{n}_{\partial \Omega} ~dS = 0$, $\int_{\partial \Omega} \fpot \VnablaPerp \fpot \cdot \vec{n}_{\partial \Omega} ~dS = 0$, and $\int_{\partial \Omega} \partial_t \fpot \VnablaPerp \fpot \cdot \vec{n}_{\partial \Omega} ~dS = 0$ are met by periodic or homogeneous Dirichlet boundary conditions (or a mixture thereof) on $\fvor$ and $\fpot$.
\end{theorem}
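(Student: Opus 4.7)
The plan is to differentiate the kinetic energy in time, rewrite the result using the weak potential equation \eqref{eq:potentialWeak}, and then collapse it to a pure advection integral of the vorticity equation tested against $\fpot$. Two structural ingredients make this work. First, $\qvor \geq \ppot$ guarantees $\Honep \subseteq \Honeq$ and $\Honep \subseteq \Ltwoq$, so $\fpot$ itself is an admissible test function in whichever vorticity weak form is used. Second, the geometric identity $\vDrift \cdot \Vnabla \fpot = 0$ follows directly from $\vDrift = (\bhat/\Bmag) \times \VnablaPerp \fpot$ via the scalar triple product; this orthogonality kills the bulk term that survives elementwise integration by parts.

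Concretely, I would first compute $d\Energy/dt = \int_\Omega (\rho/\Bmag^2) \VnablaPerp \fpot \cdot \VnablaPerp(\partial_t \fpot) ~dV$, treating $\rho/\Bmag^2$ as time-independent. Testing \eqref{eq:potentialWeak} against $\ftestHone = \partial_t \fpot$ (which lies in $\Honep$) converts this to $\int_\Omega \fvor \partial_t \fpot ~dV$ once the boundary contribution is discarded by assumption (v). Symmetrically, testing \eqref{eq:potentialWeak} against $\ftestHone = \fpot$ and invoking (iv) yields $2\Energy = \int_\Omega \fpot \fvor ~dV$, whose time derivative combined with the preceding identity gives the equivalent expression $d\Energy/dt = \int_\Omega \fpot \partial_t \fvor ~dV$.

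Next I would substitute $\ftestgen = \fpot$ into the vorticity weak form, either \eqref{eq:vorticityH1} or \eqref{eq:vorticityIdealWeakGen}, in the ideal (undissipated, unforced) limit. In the DG case the continuity of $\fpot$ gives $\jumpOp{\fpot} = 0$ and $\avgOp{\fpot} = \fpot$, so the $\delta$-upwinding jump term vanishes unconditionally on $\delta$—precisely the mechanism by which upwinded DG can dissipate enstrophy while leaving energy untouched—and the centered numerical flux collapses to $\int_\Gamma \vDrift \cdot \vec{n}_\Gamma \, \fpot \, \jumpOp{\fvor} ~dS$. I would then integrate the volume advection $\int_\Omega \fpot \, \vDrift \cdot \Vnabla \fvor ~dV$ by parts element by element, apply \eqref{eq:divTheoremDGproduct} together with $\Vnabla \cdot \vDrift = 0$ and $\vDrift \cdot \Vnabla \fpot = 0$, drop the external boundary term via (iii), and collapse $\jumpOp{\fpot \fvor}$ to $\fpot \jumpOp{\fvor}$. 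The interior-edge residue exactly cancels the centered flux term, leaving $\int_\Omega \fpot \partial_t \fvor ~dV = 0$, and therefore $d\Energy/dt = 0$.

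The principal technical obstacle is the DG bookkeeping: I need to verify that the interior-edge contributions from elementwise integration by parts match, term for term, the numerical flux prescribed by \eqref{eq:vorticityIdealWeakGen}, so that nothing is left over on $\Gamma$. The continuous-Galerkin case in \eqref{eq:vorticityH1} then falls out as a simplification in which the $\Gamma$-integrals vanish automatically. A secondary but easily overlooked point is that the identity $\vDrift \cdot \Vnabla \fpot = 0$ must hold at the discrete level, which is ensured by \Cref{thm:vDriftIncompressible}: the FEEC choices $\fpot \in \Honep$ and $\bhat/\Bmag \in \Hcurl{}$ put $\vDrift$ into $\Hdiv{}$, so all integration-by-parts manipulations in $\Hdiv{}$--$\Ltwo{}$ duality are well-defined and the orthogonality survives discretization.
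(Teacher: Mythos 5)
Your proposal is correct and follows essentially the same route as the paper's proof: testing the potential equation with $\fpot$ and $\partial_t\fpot$ to obtain $d\Energy/dt=\int_\Omega\fpot\,\partial_t\fvor\,dV$, then inserting $\fpot$ as the test function in the vorticity weak form (admissible since $\qvor\geq\ppot$), using $\jumpOp{\fpot}=0$ to kill the upwinding term, and invoking $\Vnabla\cdot\vDrift=0$, $\vDrift\cdot\Vnabla\fpot=0$, and the boundary assumptions to conclude. No gaps.
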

\begin{proof}
Using \eqref{eq:potentialWeak} with the test function $\ftestHone=\fpot  \in \Honep$ along with (iv) $\int_{\partial \Omega} \fpot \VnablaPerp \fpot \cdot \vec{n}_{\partial \Omega} ~dS = 0$ implies
  \begin{gather*}
      \int_\Omega \frac{\rho}{\Bmag^2} \VnablaPerp \fpot  \cdot \VnablaPerp \fpot ~dV =  \int_\Omega \fvor \fpot ~dV
      .
  \end{gather*}
  The time derivative of this expression is
  \begin{gather*}
       \frac{d}{dt} \int_\Omega \frac{\rho}{\Bmag^2} \VnablaPerp \fpot  \cdot \VnablaPerp \fpot ~dV = \frac{d}{dt} \int_\Omega \fvor \fpot ~dV
       =   \int_\Omega \partial_t\left(\fpot  \fvor   \right)~dV
      .
  \end{gather*}
Using \eqref{eq:potentialWeak} with the test function $\ftestHone=\partial_t \fpot  \in \Honep$ along with (v) $\int_{\partial \Omega} \partial_t \fpot \VnablaPerp \fpot \cdot \vec{n}_{\partial \Omega} ~dS = 0$ yields
  \begin{gather*}
    \frac{1}{2} \frac{d}{dt} \int_\Omega \frac{\rho}{\Bmag^2} \VnablaPerp \fpot \cdot \VnablaPerp \fpot ~dV = \int_\Omega \fvor \partial_t \fpot ~dV,
  \end{gather*}
  because $\int_{\partial \Omega} \partial_t \fpot \VnablaPerp \fpot \cdot \vec{n}_{\partial \Omega} ~dS = 0$ by assumption.
  Subtracting the two results yields
  \begin{gather*}
    \frac{1}{2} \frac{d}{dt} \int_\Omega \frac{\rho}{\Bmag^2} \VnablaPerp \fpot \cdot \VnablaPerp \fpot ~dV = \int_\Omega \fpot \partial_t \fvor ~dV.
  \end{gather*}

  The DG formulation for vorticity \eqref{eq:vorticityIdealWeakGen} holds for 
  any $\ftestLtwo \in \Ltwoq$ and $\fpot \in \Honep$. Crucially, since 
  $\Honep \subset \Ltwoq$, as long as $\qvor \geq \ppot$, then the following 
  expression holds for $\ftestLtwo = \fpot$:
  \begin{gather*}
    \int_\Omega \fpot \partial_t \fvor ~dV =
      - \int_\Omega \fpot \vDrift \cdot \Vnabla \fvor ~dV
      + \int_{\Gamma} \avgOp{ \fpot } \jumpOp{ \fvor } \vDrift \cdot  \vec{n}_\Gamma~dS
      - \delta \int_{\Gamma} |\vDrift\cdot \vec{n}_\Gamma|
        \jumpOp{ \fvor } \jumpOp{ \fpot } ~dS.
  \end{gather*}
  Note that because of the~continuity of $\fpot \in \Honep$,
  one has $\jumpOp{ \fpot } = 0$, and the upwinding term vanishes for any
  $\delta$. Choosing
  $a = \fpot  \fvor$ in \eqref{eq:divTheoremDG} and $a = \fpot, b = \fvor$ in
  \ref{eq:productJump}, and without the jump terms involving
  $\jumpOp{ \fpot }$, the above expression simplifies to
  \begin{gather*}
    \int_\Omega \fpot \partial_t \fvor ~dV =
      \int_\Omega \fvor \vDrift \cdot \Vnabla \fpot ~dV
      + \int_{\Omega}
        \fpot \fvor \Vnabla \cdot \vDrift ~dV
      - \int_{\partial\Omega} \fpot \fvor
        \vDrift \cdot \vec{n}_{\partial \Omega} ~dS.
  \end{gather*}
  The continuous formulation in \eqref{eq:vorticityH1}, where $\fvor, \ftestHone\in \Honeq$, leads to the same result, since the jumps, $\jumpOp{\fvor}$, vanish from the outset. Once again, the result holds as long as $\Honep\subset\Honeq$, which simply requires $\qvor\geq \ppot$.

  Using assumptions (ii) $\Vnabla \cdot \vDrift = 0$ and (iii)
  $\int_{\partial\Omega} \fvor \fpot \vDrift \cdot \vec{n}_{\partial \Omega} ~dS = 0$, and noting that \eqref{eq:driftV} implies that $\vDrift\cdot \Vnabla \phi =0$  completes the proof:
  \begin{gather*}
    \frac{1}{2} \frac{d}{dt} \int_\Omega \frac{\rho}{\Bmag^2} \VnablaPerp \fpot \cdot \VnablaPerp \fpot ~dV = \int_\Omega \fpot \partial_t \fvor ~dV = 0  .
  \end{gather*}
\end{proof}

\begin{theorem}[Fully discrete energy conservation]
  \label{thm:energyConservedDiscrete} 
  Let the potential $\fpot \in \Honep$ determined by \eqref{eq:potentialWeak} 
  with a set of boundary conditions that make $\fpot$ unique
  and velocity $\vDrift \in \Hdiv{}$
  determined by \eqref{eq:driftV}, and assume
  (i) either $\fvor \in \Honeq$ in \eqref{eq:vorticityH1} or
  $\fvor \in \Ltwoq$ in \eqref{eq:vorticityIdealWeakGen} where 
  $\qvor \geq \ppot$,
  (ii) $\Vnabla \cdot \vDrift = 0$ in $\Omega$, and
  (iii) periodic or homogeneous Dirichlet boundary conditions (or a mixture thereof).
  Then if the spatially discretized ODEs are evolved in discrete time with a symplectic integrator,
  energy will be conserved in discrete time.
\end{theorem}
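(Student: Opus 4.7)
The strategy mirrors the one used for \Cref{thm:enstrophyConservedDiscrete}: reduce the claim to an application of \Cref{th:symplectic_integrators} by verifying its two prerequisites, namely that (a) energy is a quadratic invariant of the spatially semi-discrete system, and (b) that system is evolved by a B-series integrator that is generally symplectic. Point (b) holds by hypothesis, so the real work is to recast energy as a quadratic form on the finite-element coefficient vector that actually evolves in time, and then to invoke \Cref{lem:energyConservedWeak} to show it is conserved along solutions of the semi-discrete ODE.

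First, I would fix the boundary and divergence hypotheses: the periodic/homogeneous Dirichlet conditions in (iii) combined with the regularity $\fpot \in \Honep$, $\fvor \in \Honeq$ or $\Ltwoq$, and $\vDrift \in \Hdiv{}$ immediately imply the boundary integrals $\int_{\partial\Omega} \fvor\fpot \vDrift \cdot \vec{n}_{\partial \Omega} \, dS$, $\int_{\partial \Omega} \fpot \VnablaPerp \fpot \cdot \vec{n}_{\partial \Omega} \, dS$ and $\int_{\partial \Omega} \partial_t \fpot \VnablaPerp \fpot \cdot \vec{n}_{\partial \Omega} \, dS$ all vanish, so hypotheses (iii)--(v) of \Cref{lem:energyConservedWeak} are satisfied; together with the assumed $\Vnabla \cdot \vDrift = 0$ and $\qvor \geq \ppot$, every hypothesis of that lemma is in place.

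Next, I would express energy as a quadratic form in the evolving unknown. Because the boundary conditions make $\fpot$ uniquely determined by $\fvor$ through \eqref{eq:potentialWeak}, the diffusion matrix $\DiffusionMatrix$ from \eqref{eq:diffusion_matrix} is symmetric positive definite, and \Cref{cor:energyQuadraticVorticity} applies to give
\begin{gather*}
  \Energy \;=\; \tfrac{1}{2}\,\vect{\fvor}^{T}\, \OverlapMatrix^{T}\DiffusionMatrix^{-1}\OverlapMatrix\, \vect{\fvor} \;=\; \tfrac{1}{2}\,\vect{\fvor}^{T}\,\matr{Q}\,\vect{\fvor},
\end{gather*}
with the symmetric matrix $\matr{Q} := \OverlapMatrix^{T}\DiffusionMatrix^{-1}\OverlapMatrix$ independent of time. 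Thus energy has precisely the quadratic-form structure of Eq.~\eqref{eq:quad}. \Cref{lem:energyConservedWeak} then says that along solutions of the spatially semi-discrete vorticity ODE the continuous-time derivative $d\Energy/dt$ vanishes, which is exactly the statement that $\vect{\fvor}^{T}\matr{Q}\,\vect{f}(\vect{\fvor}) = 0$ for the right-hand side $\vect{f}$ of that ODE, i.e.\ $\Energy$ is a quadratic invariant of the semi-discrete system.

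Finally, applying \Cref{th:symplectic_integrators} (equivalently \Cref{cor:symplectic_integrators}) to the semi-discrete ODE with its quadratic invariant $\Energy$ yields exact preservation of $\Energy$ by any generally symplectic B-series integrator, which is the desired fully-discrete conservation statement. The step I expect to require the most care is justifying that $\DiffusionMatrix^{-1}$ is well defined and $\matr{Q}$ is genuinely time-independent so that \eqref{eq:quad} truly applies: this rests on the assumption that the boundary data for \eqref{eq:potentialWeak} render $\fpot$ unique given $\fvor$, which under the chosen periodic or homogeneous Dirichlet conditions (modulo the standard constant mode for purely periodic problems, which must be handled by fixing a zero-mean gauge) makes $\DiffusionMatrix$ invertible on the relevant subspace, so that the reduction to \eqref{eq:quad} is rigorous and the appeal to \Cref{th:symplectic_integrators} is legitimate.
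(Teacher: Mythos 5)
Your proof is correct and follows essentially the same route as the paper's: energy is recast as a quadratic form in the vorticity coefficients via \Cref{cor:energyQuadraticVorticity}, shown to be a continuous-time invariant of the semi-discrete ODE via \Cref{lem:energyConservedWeak}, and then preserved in discrete time by appeal to \Cref{th:symplectic_integrators}. Your added care about the invertibility of $\DiffusionMatrix$ (fixing the zero-mean gauge in the purely periodic case) is a detail the paper subsumes under the hypothesis that the boundary conditions make $\fpot$ unique, but it does not change the argument.
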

\begin{proof}
  Because $\fpot \in \Honep$, \Cref{cor:energyQuadraticVorticity} implies that energy is equal to a quadratic form
  of the time-dependent variables, $\vect{w}$.
 Using (i) the weak formulation of \eqref{eq:vorticityIdealWeakGen} and the assumptions (ii)-(iiii), \Cref{lem:energyConservedWeak} implies that the energy is conserved in continuous time.
 If a symplectic integrator is used to integrate the resulting set of ODEs, then \Cref{th:symplectic_integrators} implies that the energy will be conserved in discrete time.
\end{proof}

\subsection{Poisson Bracket Formulation \label{sec:poisson_bracket} }

The results of the previous sections can also be naturally stated in a manner that expresses the advection operator as a Poisson bracket.
This is useful for connecting with the extensive theoretical and numerical literature that utilizes the Poisson bracket to express the advection operator.
This subsection shows that these theorems explicitly hold true for the Poisson bracket formulation.

The~particular case of constant $\VB=\Bmag_0\zhat$ for constant $\Bmag_0$, in
\eqref{eq:model} corresponds to an incompressible neutral fluid model in the $x$-$y$ plane.
The~drift velocity reduces to
$ \vDriftxy =\zhat\times\Vnabla \phi /\Bmag_0=  [-\pdv{\fpot}{y}, \pdv{\fpot}{x}]/\Bmag_0$.
The definition of the~Poisson bracket
\begin{align}
  \PoissonBracket{f}{g} = \pdv{f}{x}\pdv{g}{y} - \pdv{f}{y}\pdv{g}{x}
\end{align}
implies that one can write the~general vorticity equation
\eqref{eq:vorticityIdealWeakGen} as
\begin{multline}
  \int_\Omega \ftestLtwo \partial_t \fvor ~dV =
    - \int_\Omega  \PoissonBracket{\fpot}{\fvor} \ftestLtwo ~dV/\Bmag_0
    + \int_{\Gamma} \avgOp{ \ftestLtwo }
      \jumpOp{ \fvor } \vDriftxy \cdot \vec{n}_\Gamma~dS
  \\
    - \delta \int_{\Gamma} \jumpOp{ \ftestLtwo }  \jumpOp{ \fvor } \abs{\vDriftxy  \cdot \vec{n}_\Gamma}~dS
  , \hspace{0.25in} ~\forall \ftestLtwo \in \Ltwoq
  , \label{eq:vorticityIdealWeakGenPoisson}
\end{multline}

In order to give the Poisson bracket form of the conservation laws, simply use
\eqref{eq:vorticityIdealWeakGenPoisson} with  appropriate choices of
test functions $\ftestLtwo$ to find:
\begin{align}
  \int_\Omega \partial_t \fvor ~dV &=
    - \int_\Omega  \PoissonBracket{\fpot}{\fvor}  ~dV/\Bmag_0
    + \int_{\Gamma}  \jumpOp{ \fvor } \vDriftxy \cdot \vec{n}_\Gamma  ~dS ,
  \label{eq:vortConservePoisson} \\
  \int_\Omega \fpot \partial_t \fvor ~dV &=
    - \int_\Omega  \PoissonBracket{ \tfrac{1}{2} \fpot^2}{\fvor}   ~dV/\Bmag_0
    + \int_{\Gamma}  \jumpOp{ \fpot \fvor } \vDriftxy \cdot \vec{n}_\Gamma ~dS ,
  \label{eq:energyConservePoisson} \\
  \int_\Omega  \partial_t \fvor^2 ~dV &=
    - \int_\Omega  \PoissonBracket{\fpot}{  \fvor^2} ~dV/\Bmag_0
    + \int_{\Gamma}  \jumpOp{ \fvor^2 } \vDriftxy \cdot \vec{n}_\Gamma ~dS
    - 2\delta \int_{\Gamma}
      \jumpOp{ \fvor }^2 \abs{\vDriftxy  \cdot \vec{n}_\Gamma}~dS
  \label{eq:enstrophyConservePoisson}.
\end{align}
Vorticity conservation uses $\ftestLtwo = 1$ in
\eqref{eq:vortConservePoisson}, energy conservation uses
$\ftestLtwo = \fpot$ in \eqref{eq:energyConservePoisson}, and enstrophy
conservation uses $\ftestLtwo = \fvor$ in \eqref{eq:enstrophyConservePoisson}.
The proof simply requires recalling the facts that $\Vnabla \cdot \vDriftxy = 0$ and that
$\vDriftxy \in \Hdiv{} $ implies $\jumpOp{ \vDriftxy } \cdot \vec{n}_\Gamma= 0$.

\section{Numerical Results}\label{sec:results}

\subsection{Kelvin-Helmholtz Instability \label{sec:kh_instability} }
The Kelvin-Helmholtz (KH) instability is a well-known fluid instability that occurs when nontrivial shear is present in the velocity field. This instability begins with the linear growth of unstable modes, then transitions to a non-linear stage of turbulence, and concludes with turbulence decay. To simulate the KH instability, Eq.~\eqref{eq:NS_vorticity} with $\alpha = S_w = 0$ is solved using the FEEC formulation detailed in \Cref{sec:method}. A conjugate gradient linear solver with the BoomerAMG preconditioner is used for \eqref{eq:potentialWeak}, 
a GMRES linear solver with AIR preconditioner \cite{AIR2018,AIR2019} 
is used for \eqref{eq:vorticityH1}, and a GMRES linear solver with either ILU 
preconditioner is used for \eqref{eq:vorticityIdealWeakGen}. 
Piccard iteration is used as nonlinear solver. Results from the linear and 
nonlinear phases are presented in the two subsections below.

\subsubsection{Geometry and initial conditions}
The physical domain used for the KH instability is a 2D rectangular slab with lengths $L_x$ and $L_y$ in the $x$ and $y$ directions, respectively. The fluid flows in the $y$ direction with a sinusoidal profile given by the following
\begin{align}
  \fpot_{0} &{=} \sin(mk_{x0} x)/m^2,
  \nonumber \\
  \vec v_{\perp,0} &{=} V_0 \cos(mk_{x0} x)\hat{y},
  \label{eq:ic} \\
  \fvor_0 &{=} - k_{x0}^2 \sin(mk_{x0} x) ,
  \nonumber
\end{align}
where $k_{x0}=2\pi/L_x$ and $V_0=k_{x0}/m$. To induce the instability, a perturbation is added to the initial conditions. For example, the full initial vorticity is given by 
$ \fvor(t=0) = \fvor_0 + \tilde{\fvor}_0$, where $\tilde{\fvor}_0$ is a perturbation defined by
\begin{align}
  \tilde{\fvor}_0 =
    \epsilon~ \fvor_0  \sum_{n=1}^{n=N} e^{-(n k_{y0} / N)^2} \sin{(n k_{y0}y)} .
    \label{eq:wPerturbation}
\end{align}
In the above, $\epsilon \in[10^{-8},10^{-2}]$ and $k_{y0}=2\pi/L_y$. In this scenario, the perturbation will induce the KH instability whenever $k_{y0} {\leq} mk_{x0}$, or, $L_y>mL_x$.
 
\subsubsection{ Verification of the linear growth rate}
Verification of the numerical implementation of the governing equations is performed by comparing the computed linear growth rates against quasi-analytical predictions from linear stability analysis. That is, assume $\fpot = \fpot_0 + \fpot_1$ and $\fvor = \fvor_0 + \fvor_1$, where $\fpot_0$ and $\fvor_0$ are the equilibrium solutions and $\fpot_1$ and $\fvor_1$ are the perturbed components given by
\begin{equation}
  \fpot_1= \fpot_1(x)e^{\gamma t+ik_yy},
\end{equation}
\begin{equation}
  \fvor_1=(\fpot_1''-k_y^2\fpot_1)e^{\gamma t+ik_yy}.
\end{equation}
The vorticity equation can be linearized by neglecting high-order nonlinear terms, which results in the 1D eigenvalue equation
\begin{equation}
    \gamma(\fpot_1''-k_y^2\fpot_1)=k_y[\fpot_0'(\fpot_1''-k_y^2\fpot_1)-\fpot_0'''\fpot_1],
\end{equation}
where $\fpot'=d\fpot/dx$. After specifying the boundary conditions for the $x$ direction, one obtains a 1D eigenvalue problem that can be solved numerically.
The numerically simulated linear growth rates are calculated from the finite element simulation by first performing a spatial Fourier transform to determine $\fpot_k$, where $k$ is the wavenumber, and then calculating the growth rate 
\begin{equation*}
    \gamma_{\rm sim} =\frac{1}{2\Delta t}\ln\left(\frac{|\fpot_k|_{t+\Delta t}}{|\fpot_k|_{t-\Delta t}}\right)
\end{equation*}
for each Fourier mode.

For this test case we choose $m=1, L_x=1$, and $L_y=16$. A grid with resolution of $32 \times  512$ second-order elements is used. Simulations are carried out with two different boundary conditions along the $x$ direction: Dirichlet ($\fpot(x=0,L_x)=0$) and periodic. Periodic boundaries are always used in the $y$ direction.
Figure~\ref{fig:khi_lgr} summarizes the MFEM verification results. Two combinations of spatial and temporal discretizations are employed: (1) MFEM-H1 using continuous spatial elements for potential $\fpot \in \Hone{p}$ \eqref{eq:potentialWeak} and vorticity $\fvor \in \Hone{p}$ \eqref{eq:vorticityH1} and the backward Euler temporal integration and (2) MFEM-DG which uses $\fpot \in \Hone{p}$ and spatial 
DG elements for vorticity $\fvor \in \Ltwo{p}$ \eqref{eq:vorticityIdealWeakGen} and the second-order implicit midpoint temporal integration. As the results from these tests show, both approaches yield good agreement with the 1D eigensolver results for different $x$ boundary conditions. This is as expected, since the nonlinear interaction between modes is at minimum during the linear stage and the numerical dispersion level is relatively low. As a result, this particular case does not yet highlight the advantages of the energy- and enstrophy-conserving porperties of MFEM-DG.

\begin{figure}[h]
  \centering
  \includegraphics[width=0.6\textwidth]{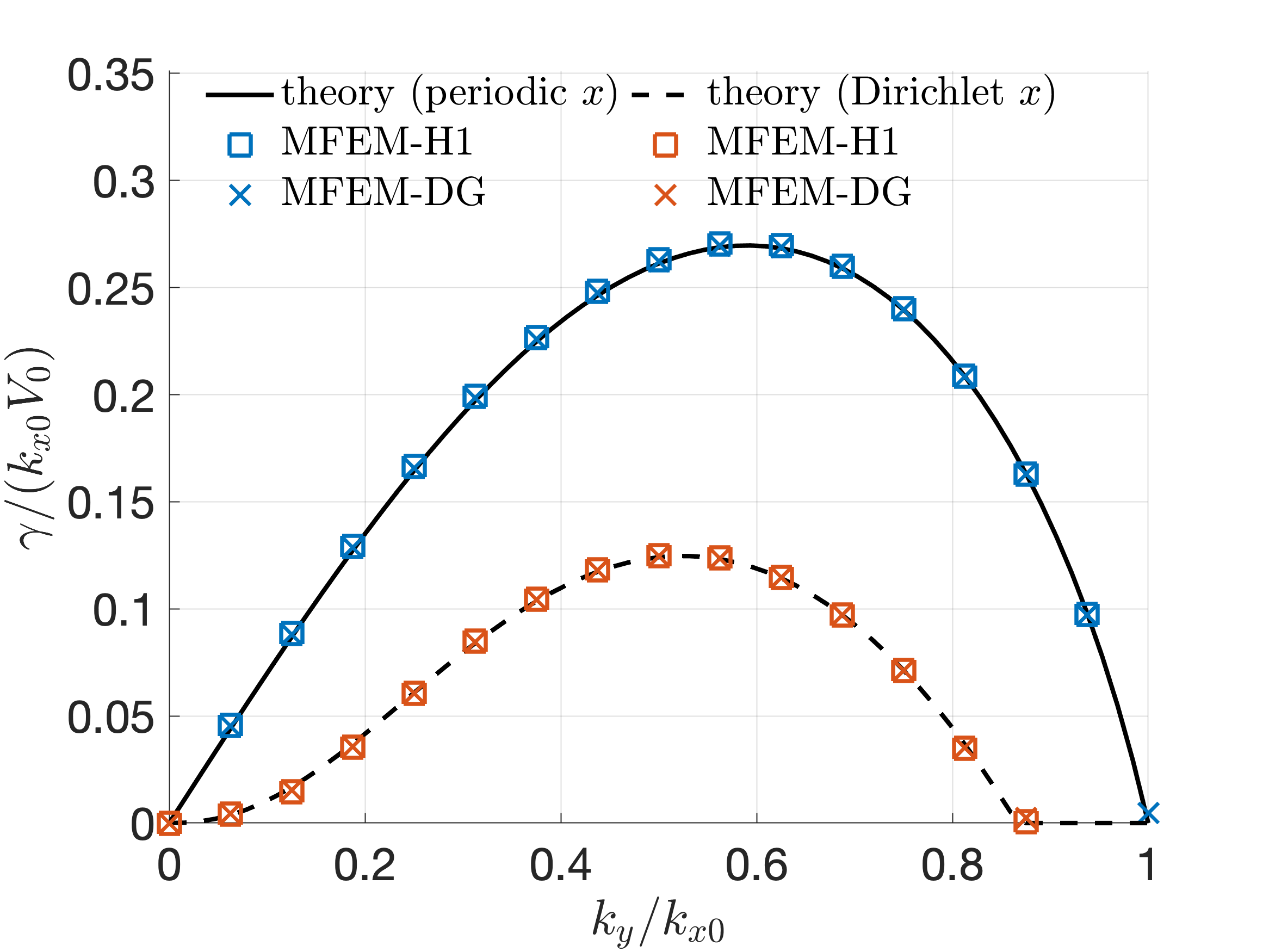}%
  \caption{MFEM simulated linear growth rates of Kelvin-Helmholtz instability v.s. theoretical prediction using $p=2$. }%
   \label{fig:khi_lgr}
\end{figure}

The impact of the polynomial order $p$ on the determination of the growth rates is investigated next. As expected, higher order elements are beneficial for accurately capturing the instability 
when the degrees of freedom (DOFs) is relatively small. If the DOFs is relatively high, e.g, DOFs$_\varphi=64\times 1024$ in Figure.~\ref{fig:khi_lgr}, both low- and high-order polynomials yield quantitatively the same linear growth rates.
However, as shown in Figure~\ref{fig:khi_lgr2}, if the DOFs is relatively low ($\sim8\times 128$ for $\varphi$), low-order polynomials (e.g, first- and second-order) systematically overestimate the linear growth rate at $ky/k_{x0}\approx 0$ and $1$; while the high-order elements (third-order) do a better job of accurately recovering the linear growth rates. 
Note that, because the simulation is nonlinear and multiple wave numbers are initially seeded, it can be difficult to accurately determine the growth rate when it is small.

\begin{figure}[h]
  \centering
  \includegraphics[width=0.6\linewidth]{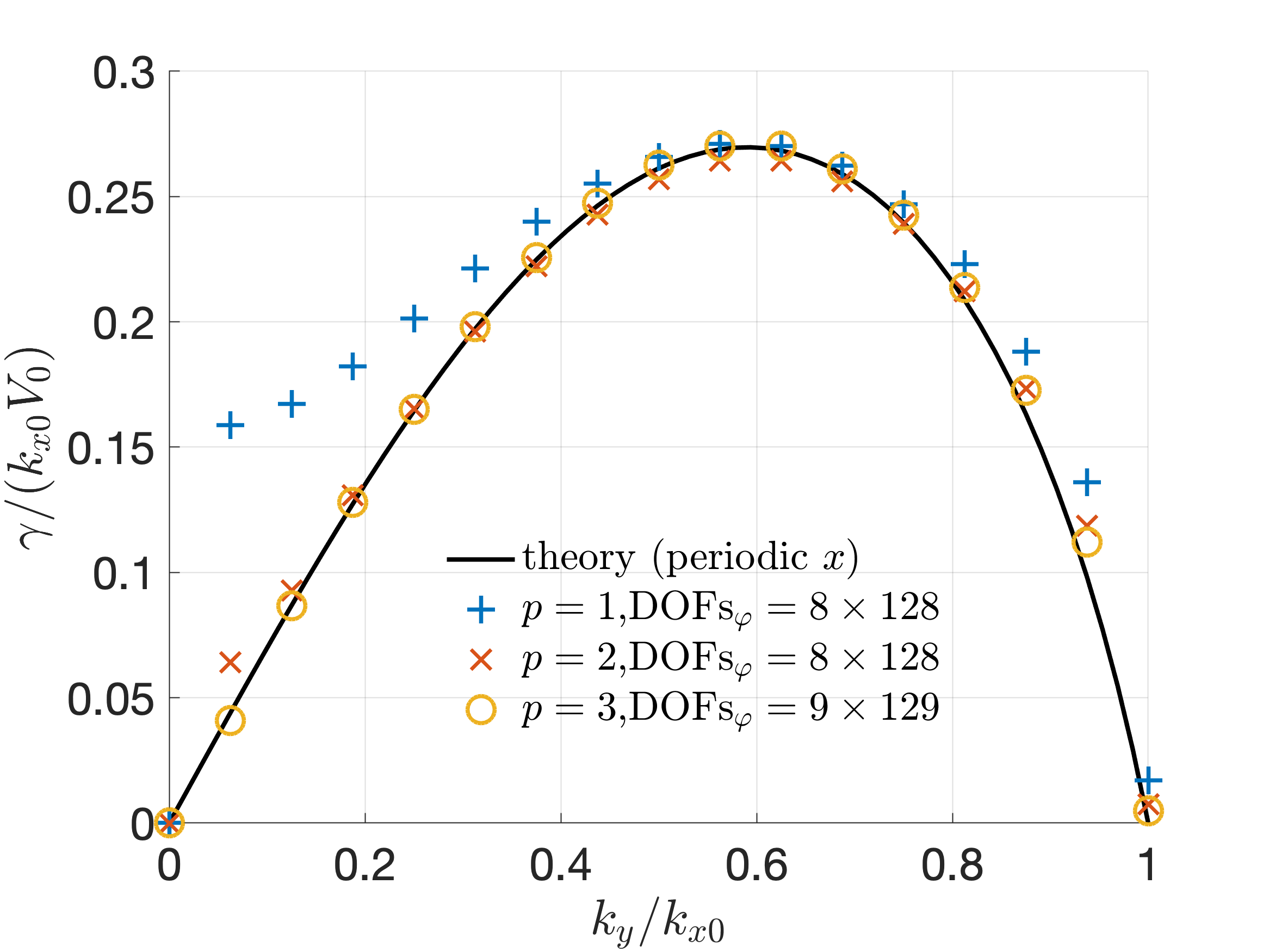}%
  \caption{Effect of order of elements $p$ on linear growth rates for the KHI instability when the DOFs is relatively small.} %
   \label{fig:khi_lgr2}
\end{figure}

\subsubsection{Verification of nonlinear energy \& enstrophy conservation  }
 In this subsection,  the viscosity, friction and forcing are all set to zero so that the conservation of energy and enstrophy, which applies only in the non-dissipative and unforced case, can be verified.
All simulations use the parameters $L_x = 0.5$, $L_y = 2.0$, and $t_{final} = 2.0$. The values $m = 5$, $N = 100$, $\epsilon = 10^{-2}$ are used for the initial conditions given by \eqref{eq:ic} and \eqref{eq:wPerturbation}. 
 The mesh consists of $192 \times 48$ rectangular elements with cubic polynomials, and periodic boundary conditions are enforced. 
Piccard iteration and linear solvers use relative tolerance $10^{-12}$.


\def\fs{0.65}
\begin{figure}
  \begin{subfigure}{\textwidth}
    \centering
    \includegraphics[width=\fs\textwidth]{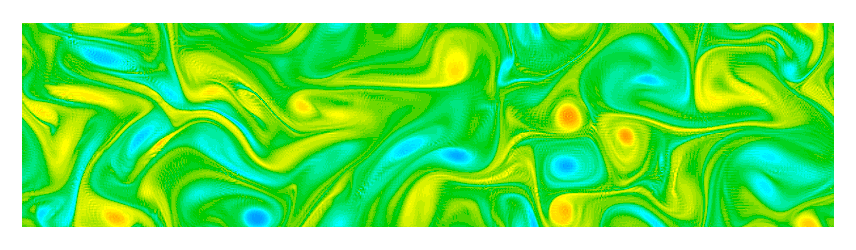}
    \caption{Does not conserve energy or enstrophy}
    \label{fig:khi_spatial_structure_a}
  \end{subfigure}

  \begin{subfigure}{\textwidth}
    \centering
    \includegraphics[width=\fs\textwidth]{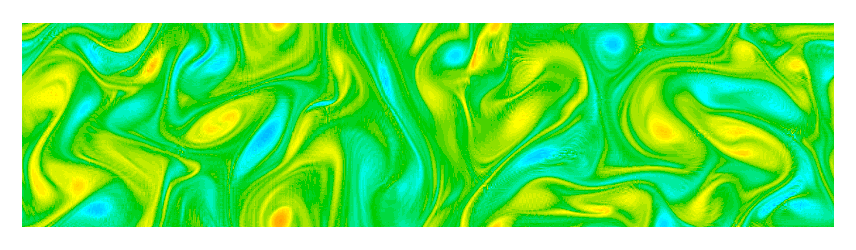}
    \caption{Does not conserve energy or enstrophy}
    \label{fig:khi_spatial_structure_b}
  \end{subfigure}

  \begin{subfigure}{\textwidth}
    \centering
    \includegraphics[width=\fs\textwidth]{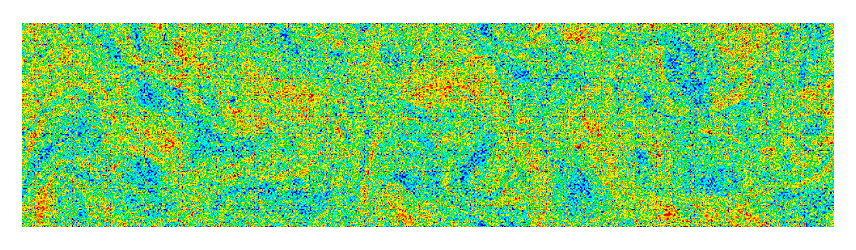}
    \caption{Conserves energy and enstrophy}
    \label{fig:khi_spatial_structure_c}
  \end{subfigure}

  \begin{subfigure}{\textwidth}
    \centering
    \includegraphics[width=\fs\textwidth]{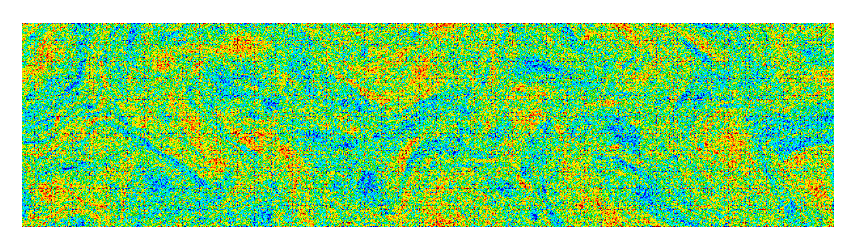}
    \caption{Conserves energy and enstrophy}
    \label{fig:khi_spatial_structure_d}
  \end{subfigure}

  \begin{subfigure}{\textwidth}
    \centering
    \includegraphics[width=\fs\textwidth]{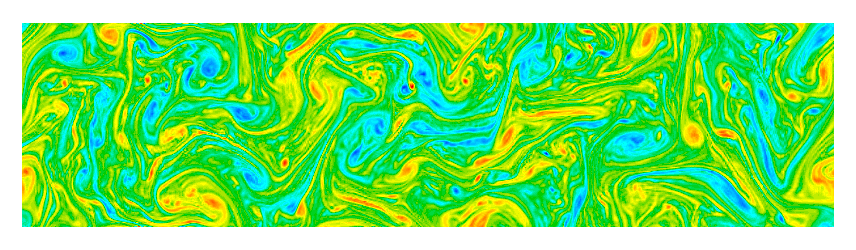}
    \caption{Approximately conserves energy, decays enstrophy}
    \label{fig:khi_spatial_structure_e}
  \end{subfigure}

  \begin{subfigure}{\textwidth}
    \centering
    \includegraphics[width=\fs\textwidth]{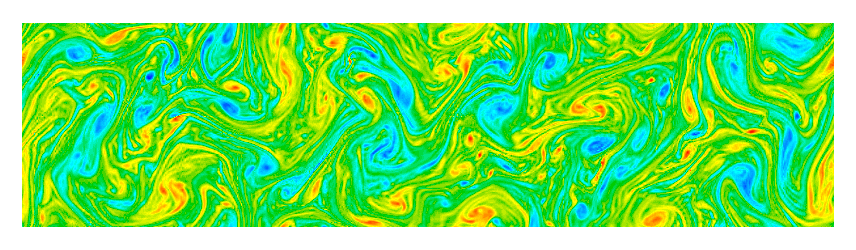}
    \caption{Conserves energy, decays enstrophy}
    \label{fig:khi_spatial_structure_f}
  \end{subfigure}

  \caption{ Vorticity at $t=1.0$ for the conservative model (no dissipation or forcing) for polynomial order $3$. Different space-time discretization schemes: 
  (a) H1 / backward Euler, 
  (b) DG / backward Euler with $\delta=0$, 
  (c) H1 / implicit midpoint, 
  (d) DG / implicit midpoint with $\delta=0$, 
  (e) DG$_{\delta=0.05}$ / explicit midpoint, and
  (f) DG$_{\delta=0.05}$ / implicit midpoint. 
   All figures use an identical color scale; this emphasizes the long wavelength structures in (c) and (d) which would otherwise be hidden by scaling to the extreme maxima and minima of these cases. 
  }
  \label{fig:khi_spatial_structure}
\end{figure}

The vorticity field is shown in Fig.~\ref{fig:khi_spatial_structure} at $t = 1.0$ for different spatial-temporal discretizations. 
Figures~\ref{fig:khi_spatial_structure_a} and \ref{fig:khi_spatial_structure_b} employ the backward-Euler time discretization, with the former having $\fvor \in \Hone{p}$ and the latter $\fvor \in \Ltwo{p}$. 
Figures~\ref{fig:khi_spatial_structure_c} and \ref{fig:khi_spatial_structure_d} employ the implicit midpoint time discretization, again with the former using $\fvor \in \Hone{p}$ and the latter $\fvor \in \Ltwo{p}$. 
Finally, Figs.~\ref{fig:khi_spatial_structure_e} and \ref{fig:khi_spatial_structure_f} use the explicit midpoint and implicit midpoint discretizations, respectively, but both use upwinding in the DG spatial discretization.

As Figs.~\ref{fig:khi_spatial_structure_a} and \ref{fig:khi_spatial_structure_b} show, the backward-Euler time discretization introduces a significant amount of dissipation since small-scale features are absent from the vorticity contours, as compared with the other spatial-temporal discretizations. On the other hand, implicit midpoint time discretization leads to a highly oscillatory solution for the small scales for both $\fvor \in \Hone{p}$ and $\fvor \in \Ltwo{p}$, as shown in Figs.~\ref{fig:khi_spatial_structure_c} and \ref{fig:khi_spatial_structure_d}.
This behavior cannot be alleviated by shrinking the time step, because 
the oscillations relate to the finite spatial resolution and consequently the~limited size of the~smallest vortices in the turbulence.
 Introducing upwinding in the DG scheme and using either explicit midpoint (Fig.~\ref{fig:khi_spatial_structure_e}) or implicit midpoint (Fig.~\ref{fig:khi_spatial_structure_f}) leads to the expected broad range of turbulent scales, with more fine scale structure than in the cases with backward-Euler time discretization, but without the mesh-scale fluctuations seen in the cases with implicit midpoint time discretization. Even small values of upwinding, such as $\delta = 0.05$, effectively dissipate nonphysical small-scale oscillations.

\def\fs{0.5}
\begin{figure}
  \begin{subfigure}{\fs\textwidth}
    \centering
    \includegraphics[width=\textwidth]{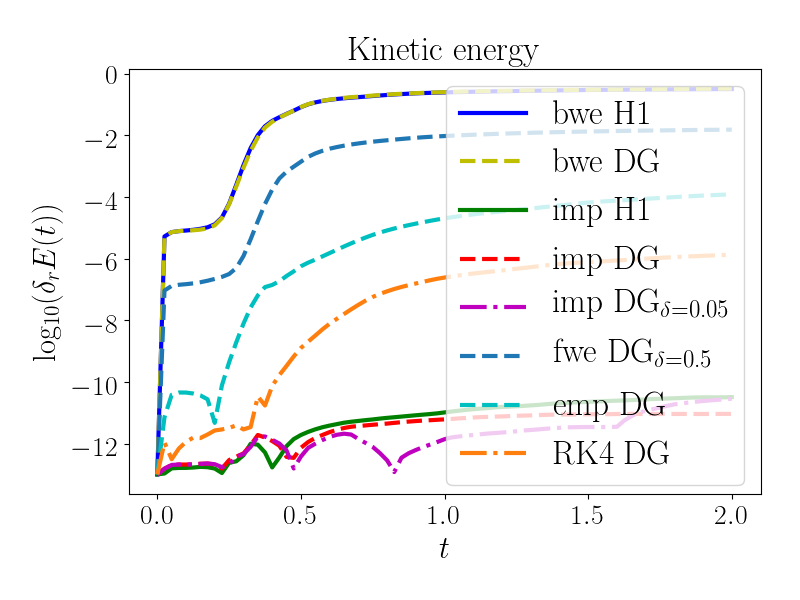}
    \caption{}
    \label{fig:khi_ee_conservation_a}
  \end{subfigure}
  \begin{subfigure}{\fs\textwidth}
    \centering
    \includegraphics[width=\textwidth]{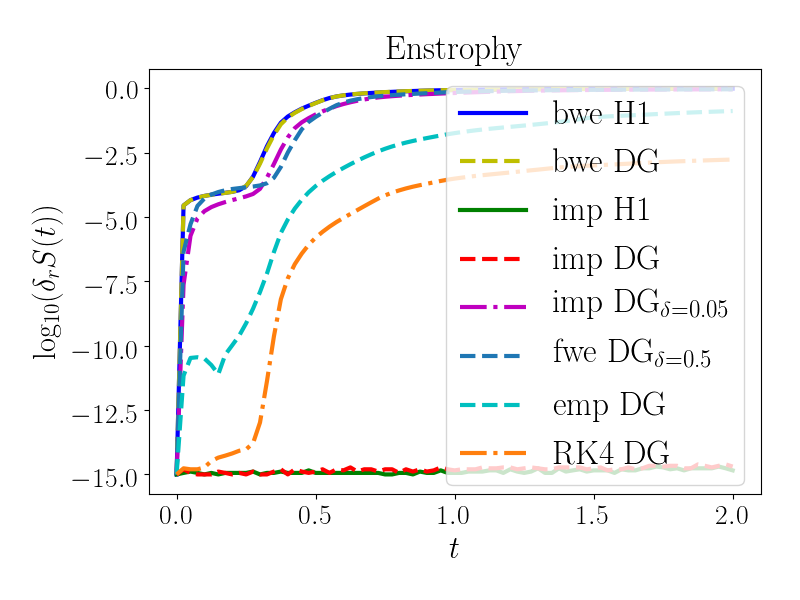}
    \caption{}
    \label{fig:khi_ee_conservation_b}
  \end{subfigure}

  \caption{
    Energy and enstrophy conservation for multiple spatial-temporal
    discretization using polynomial order $3$.}
  \label{fig:khi_ee_conservation}
\end{figure}
Figure~\ref{fig:khi_ee_conservation} shows the extent to which kinetic energy and enstrophy are conserved when using different spatial-temporal discretizations. As previously mentioned, the backward-Euler time discretization ({\it bwe H1} and {\it bwe DG} in Fig.~\ref{fig:khi_ee_conservation}) does not preserve the quadratic invariant \eqref{eq:quad}, and thus neither energy \Cref{thm:energyConservedDiscrete} nor enstrophy \Cref{thm:enstrophyConservedDiscrete} applies. As Fig.~\ref{fig:khi_ee_conservation} shows, kinetic energy and enstrophy are conserved when using the implicit midpoint integrator, either for $\fvor \in \Hone{p}$ ({\it imp H1}) or $\fvor \in \Ltwo{p}$ ({\it imp DG}), when no upwinding is used for DG. If upwinding is used, then kinetic energy is still conserved, but enstrophy is no longer conserved. This is in agreement with both \Cref{thm:enstrophyConservedDiscrete} and \Cref{thm:energyConservedDiscrete}. Neither energy nor enstrophy are conserved when using the forward Euler ({\it fwe DG}) time discretization, which uses upwinding $\delta = 0.5$. For this discretization, no upwinding or small values of upwinding lead to unstable simulations. We note that runs were also obtained, though not shown in Fig.~\ref{fig:khi_ee_conservation}, with the RK4 integrator, which were stable without any upwinding. The explicit midpoint (emp DG) scheme conserves enstrophy and energy well when not using upwinding. When upwinding is employed, the kinetic energy is still conserved, but enstrophy is no longer conserved. The surprisingly good conservation properties of the explicit midpoint scheme likely comes from its ability to  approximate implicit midpoint combined with the small size of the explicit time step, as discussed in \Cref{sec:explicit}.

\begin{figure}
  \centering
  \includegraphics[width=\fs\textwidth]{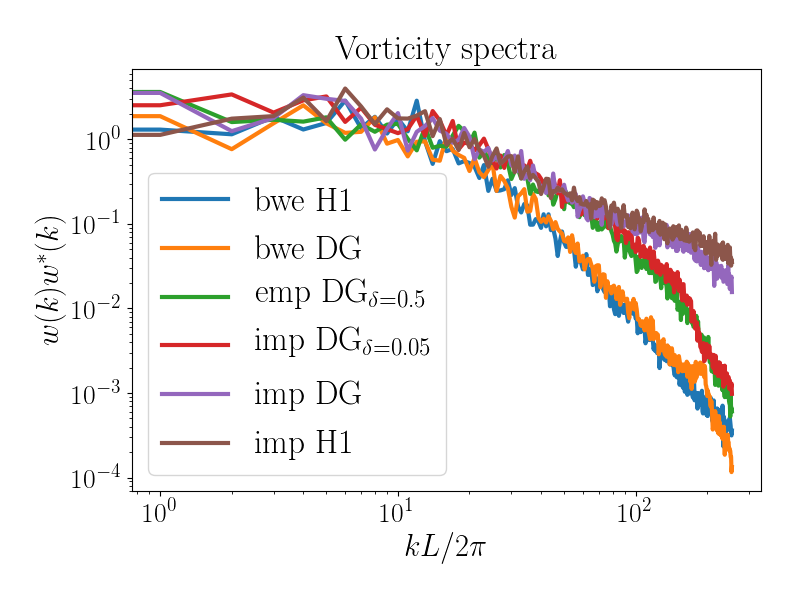}
  \caption{The Fourier spectrum of vorticity vs. $k_y$ at constant $k_x=5$, averaged over $t \in [0.6, 1.0]$. Various time-space discretizations are used: forward and backward Euler, explicit and implicit midpoint, and H1 and DG with and without upwinding for polynomial order 3.}
  \label{fig:vorticity_spectrum_all}
\end{figure}

\def\fs{0.5}
\begin{figure}
  \begin{subfigure}{\fs\textwidth}
    \centering
    \includegraphics[width=\textwidth]{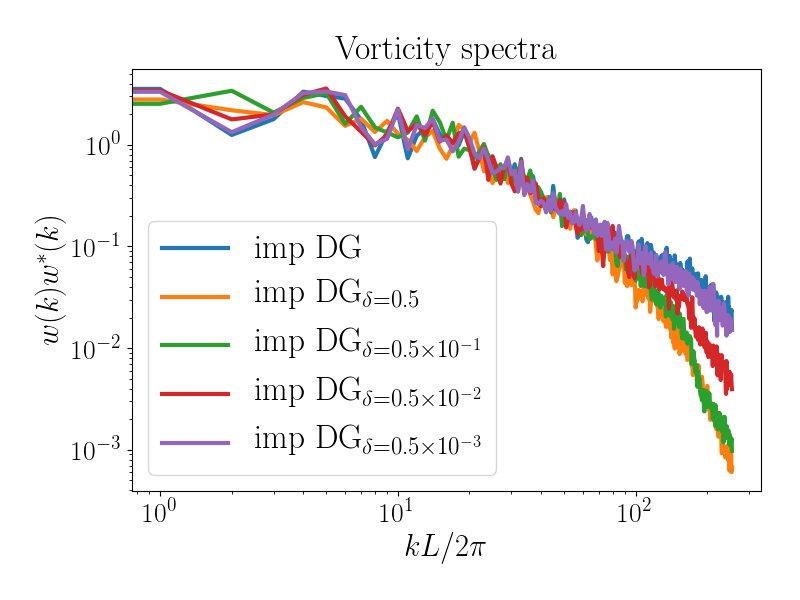}
    \caption{}
    \label{fig:vorticity_spectrum_DGscan_a}
  \end{subfigure}
  \begin{subfigure}{\fs\textwidth}
    \centering
    \includegraphics[width=\textwidth]{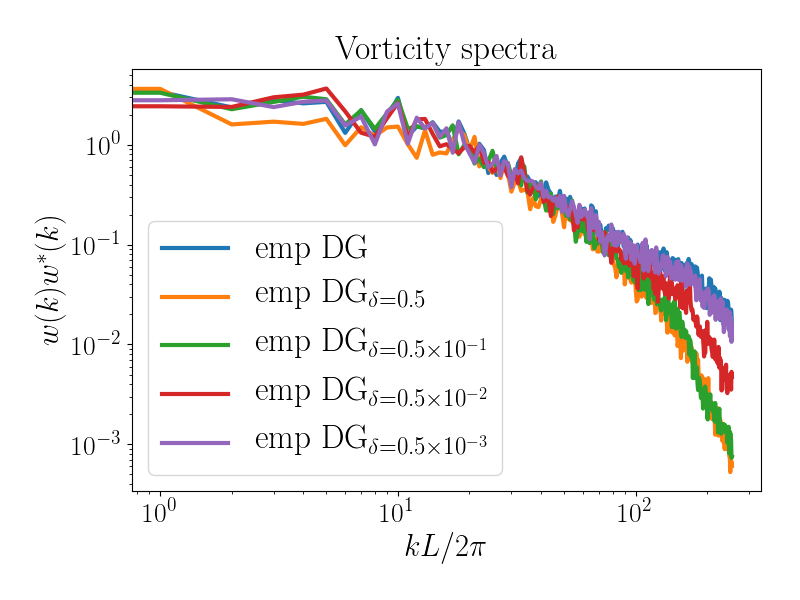}
    \caption{}
    \label{fig:vorticity_spectrum_DGscan_b}
  \end{subfigure}

  \caption{The Fourier spectrum of vorticity vs. $k_y$ at constant $k_x=5$, averaged over $t \in [0.6, 1.0]$. The effect of different levels of upwinding for explicit and implicit midpoint time discretization are shown. Energy is conserved to machine precision, while enstrophy is only conserved for implicit midpoint without upwinding (imp DG which has $\delta=0$).}
  \label{fig:vorticity_spectrum_DGscan}
\end{figure}

Figure~\ref{fig:vorticity_spectrum_all} examines a quantitative metric for the different behaviors of vorticity that follow from using the different spatial-temporal discretizations mentioned thus far. The backward-Euler cases lead to significant suppression of the vorticity for mid and high wave numbers. On the other hand, the implicit midpoint cases imp H1 and imp DG produce excessive amounts of vorticity for high wave numbers, which corresponds to the nonphysical small scale oscillations shown in Figs.~\ref{fig:khi_spatial_structure_c} and \ref{fig:khi_spatial_structure_d}. As previously mentioned, these schemes do not inject any dissipation. Finally, DG schemes that introduce dissipation through non-zero values of upwinding, whether they be paired with explicit or implicit midpoint integrators, provide what appear to be the best vorticity spectra. This judgement is based on the observation that the spectra in these cases track those from the dissipation-free implicit mid-point cases for low and intermediate wavenumbers, and fall off at high wavenumbers sufficiently to avoid the mesh-scale fluctuations seen in the dissipation-free cases.
We note that the simulation results differ only slightly when using explicit vs. implicit midpoint time discretizations and that the small-scale turbulence features are mainly determined by the level of upwinding $\delta$. This is shown in Fig.~\ref{fig:vorticity_spectrum_DGscan}, where it can be seen that the spectral representation of vorticity is almost identical when using the explicit and implicit midpoint schemes.

Finally, \Cref{tab:wallClockTime} shows the wall-clock times for simulations using various discretizations. Implicit midpoint DG with $\delta = 0.05$, which conserves energy and suppresses nonphysical small scale oscillations through upwinding, exhibits the shortest computational time. The backward Euler scheme without dissipation is slightly slower, but it does not conserve either energy nor enstrophy. The explicit midpoint integrator conserves energy appropriately but for this specific case it is roughly eleven times slower than the implicit midpoint scheme. The RK4 integrator, which has not been described thus far, shows better
conservation of energy than explicit midpoint while taking similar wall-clock time.

\def\fs{1.0}
\begin{table}[h]
  \resizebox{\textwidth}{!}{%
    \begin{tabular}{cccccc}
      & imp DG$^{\Delta t = 10^{-3}}_{\delta = 0.05}$ 
      & bwe DG$^{\Delta t = 10^{-3}}$ 
      & fwe DG$^{\Delta t = 1.5 \times 10^{-5}}_{\delta = 0.5}$ 
      & emp DG$^{\Delta t = 3 \times 10^{-5}}_{\delta = 0.05}$ 
      & RK4 DG$^{\Delta t = 6 \times 10^{-5}}_{\delta = 0.05}$ \\
      \hhline{======}
      time & 2425 s & 2861 s & 28694 s & 27291 s & 25612 s \\
      {\bf factor} & {\bf 1.0} & {\bf 1.2} & {\bf 11.0} & {\bf 11.3} & {\bf 10.6}
    \end{tabular}
  }
  \caption{Wall-clock times and corresponding slow down for different 
    spatial-temporal discretizations of the 2D incompressible Navier-Stokes 
    equations. Since the mesh employed consists of $192\times48$ elements with 
    cubic polynomials, the number of degrees of freedom for 
    $\fvor \in \Ltwo{p}$ 
    is 147456. 
    For $\fpot \in \Hone{p}$, the number of degrees of freedom is 82944. 
    Temporal discretizations imp, bwe, emp, and RK4 are stable for any $\delta$
    while fwe requires a~full upwinding $\delta = 0.5$ to finish the~simulation.
    These simulations were obtained using 72 CPUs.}
  \label{tab:wallClockTime}
\end{table}

\subsection{Forced 2D Turbulence \label{sec:forced_turbulence} }
\begin{figure}[t]
  \centering
  \includegraphics[width=\linewidth]{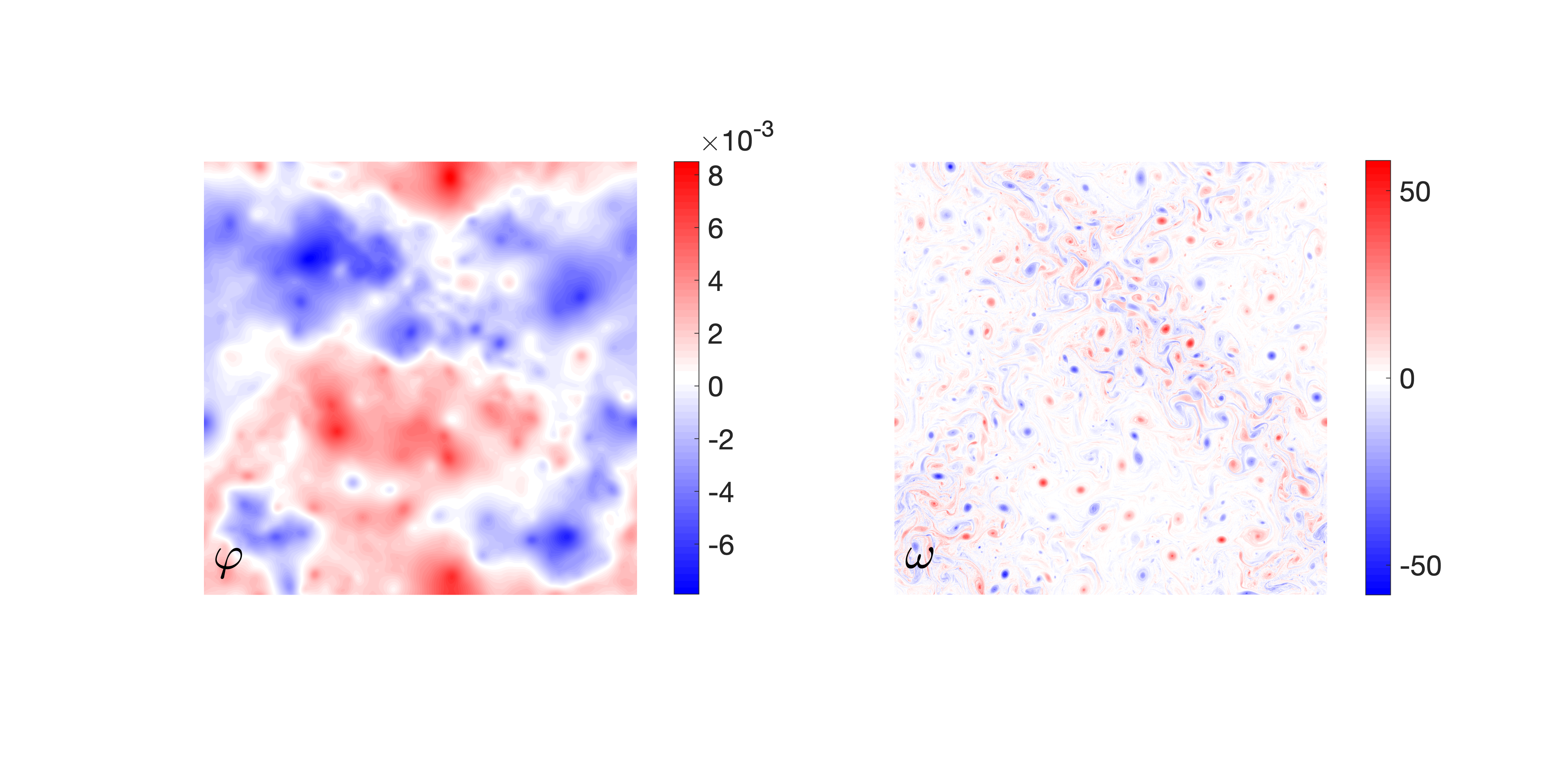}%
  \caption{Snapshots of potential and vorticity during forced 2D turbulence MFEM simulations.}%
   \label{fig:skh_snapshot}
\end{figure}

To further quantify the improvement of the energy and enstrophy conserving scheme, and to benchmark with other codes, we now include physical dissipation and the  vorticity source $S_\fvor$, i.e. Eq.~\eqref{eq:NS_vorticity}, in the model.
This model describes saturated forced 2D turbulence and has been studied extensively, e.g. in \cite{Boffetta2012arfm}.
 
\subsubsection{Simulation setup}
Although many studies have been performed with random forcing functions, e.g. that are white noise in time, which reduces the time necessary to accumulate uncorrelated statistics, here a source that is more easily made deterministic is used to ensure that the results are reproducible to the greatest possible extent.
The deterministic stirring force is chosen that has enough spectral content to 
excite significant flows at all scales
\begin{multline}
    S_\fvor=s_0 \sum_{m=m_0}^{m_0+\Delta m}
     \cos{\left(2\pi m\left(x - y\right)-\phi_m + \omega_0 t\right)}
    -\cos{\left(2\pi m\left(x + y\right)-\phi_m - \omega_0 t\right)}
    \\
    +\sin{\left(2\pi m\left(x - y\right)-\phi_m + \beta \omega_0 t\right)}
    - \sin{\left(2\pi m\left(x+ y\right)-\phi_m - \beta\omega_0 t\right)}
\end{multline}
with phase shift $\phi_m=( (m-m_0)\, {\rm mod}\,5) 2\pi/5$ and frequency ratio $\beta=2/(1+\sqrt{5}) \simeq 1/1.618$.
A square simulation domain with $L_x=L_y=1$ is chosen for this study. 

In the following, the  ``non-conservative setup" refers to CG elements for both vorticity and potential and backwards Euler implicit time integration while  the ``conservative setup" refers to DG elements for the vorticity and implicit midpoint time integration. 
In this subsection, all the MFEM simulation results are obtained without upwinding, i.e. $\delta=0$,  unless noted otherwise.
In order to provide benchmark cases for our MFEM simulations, we also used the GDB code~\cite{Zhu2018cpc}, a finite difference code designed to study 3D tokamak edge turbulence~\cite{Zhu2017pop}. For this study, GDB is configured to run this particular 2D turbulence model. 
In particular, GDB employs an energy- and enstrophy-conserving Arakawa scheme~\cite{Arakawa1966jcp} for the advection operator, 4th order central differences for the spatial derivatives, the trapezoidal leapfrog method \cite{Zalesak1979jcp,Jardin2010book} for the time integration, which treats diffusion implicitly, and a spectral method to solve for  $\fpot$.

\begin{figure}[t]
  \centering
  \includegraphics[width=\linewidth]{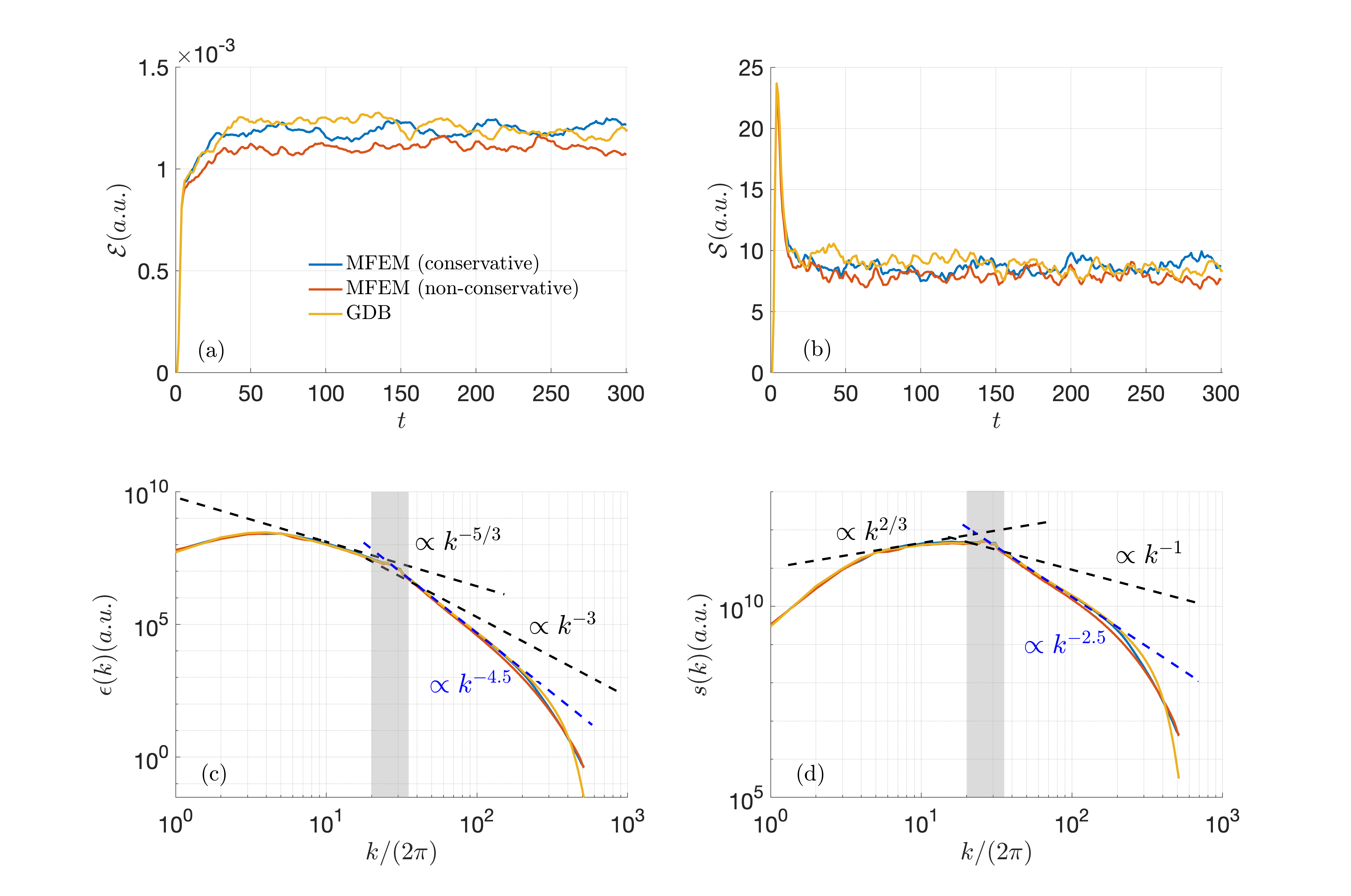}%
  \caption{Domain averaged (a) energy and (b) enstrophy evolution, and time-averaged (c) energy and (d) enstrophy spectra from non-conservative (H1 and backward Euler method), conservative (DG and second order implicit midpoint method) MFEM and reference GDB simulations for $\alpha=0.02,\mu=10^{-6}$.
  The spectral forcing region is shown in grey.}%
  \label{fig:skh_ee_dual}
\end{figure}

\subsubsection{Stationary spectrum}
The first study focuses on the dual cascade with parameters $\alpha=0.02$, $\mu=10^{-6}$, $s_0=1,m_0=18,$ $\Delta m=5$ (Benchmark 1). The solution is initialized with zero potential and vorticity fields in a fully periodic square domain.
To ensure that the MFEM and GDB runs have similar resolution and similar numbers of unknowns to solve for, different grid resolutions are used for GDB and MFEM.
For GDB, the grid resolution is $1024\times 1024$; while For MFEM, the grid resolution is $512\times 512$ with 2nd-order elements, i.e., polynomial order $p=2$. Therefore, the non-conservative MFEM setup has $1024^2=1048576$ degrees of freedom (DOFs) for both potential and vorticity, and the conservative MFEM setup has $1048576$ and $2359296$ DOFs for potential and vorticity respectively. The MFEM results are then interpolated to the GDB grid for further spectral analysis.

As illustrated in Fig.~\ref{fig:skh_snapshot}, after a brief period of instability growth, the system enters a nonlinear saturated turbulent stage. The Reynolds number is of the order $Re\sim 10^6$.
Figure \ref{fig:skh_ee_dual} (a) and (b) show the evolution of domain averaged energy and enstrophy in the system.
To evaluate the energy and enstrophy in this turbulent system, the time averaged $\avg{\Energy}$ and $\avg{\Enstrophy}$ are calculated over the interval $t\in[150,300]$, i.e., during the fully saturated turbulent stage, and tabulated in Table~\ref{tab:EE_benchmark}. 
Although both conservative and non-conservative MFEM and GDB runs have the same level of $\avg{\Energy}$ in this moderate ``dispersive" case, the non-conservative MFEM setup has roughly $10\%$ less $\avg{\Enstrophy}$ than the conservative MFEM and GDB results. Thus, the conservative MFEM setup does a better job of qualitatively matching the GDB result in terms of retaining a similar amount of energy and enstrophy in the system.
We also examine the simulated turbulence characteristics by comparing the energy and enstrophy spectral densities ($\epsilon(k)$ and $s(k)$) defined via
\begin{align}
  \epsilon(k) := d\Energy/dk =  -\oint \tfrac{1}{2} \langle \hat{\fpot}^* \hat{\fvor} \rangle \, kd\theta_k
  \\
   s(k) := d\Enstrophy/dk = \oint \tfrac{1}{2} \langle \hat{\fvor}^* \hat{\fvor} \rangle \, kd\theta_k
  .
\end{align}
where $\hat{f}$ is the Fourier representation of quantity $f$ in $k-$space and $\theta_k=\arctan{(k_y/k_x)}$ is the angle of the wavevector.
Both $\epsilon(k)$ and $s(k)$ are time-averaged over $t\in[150,300]$ and plotted in Fig. \ref{fig:skh_ee_dual} (c) and (d). 
A clear dual cascade of energy and enstrophy  from the forcing spectral region shown in grey to both longer and shorter wavelengths can be seen. The inverse cascade to long wavelengths (low wavenumbers) is anticipated to follow the Kolmogorov-Obukhov scaling laws: $d\Energy/dk \propto k^{-5/3}$ and $d\Enstrophy/dk \propto k^{1/3}$.
This scaling is well-reproduced by the data for both GDB and MFEM, despite the relatively small range in which the scaling can be observed. If enough of an inertial range was retained for the direct (forward) cascade to short wavelength (high wavenumbers), the results would be expected to follow the Kraichan scaling laws $d\Energy/dk \propto k^{-3}$ and $d\Enstrophy/dk \propto k^{-1}$. However, in this case, the viscosity is high enough that the direct cascade inertial range begins to blend into the dissipation range and the scaling is closer to $d\Energy/dk \propto k^{-4.5}$ and $d\Enstrophy/dk \propto k^{-2.5}$.
In general, the comparison between MFEM and GDB results is qualitatively good across the entire spectra perhaps with the exception at high wavenumber ($k/(2\pi)\geq 300$) where GDB result exhibits a slightly steeper scaling.
The substantial differences in the numerics for the advection operator as well as the spectral solve for the potential are possible explanations for this discrepancy.

The simulation results from the non-conservative MFEM setup deviate more from the conservative MFEM and GDB results at smaller $\mu$ because, as the physical dissipation becomes smaller, the numerical dissipation starts to play an important role.
Figure \ref{fig:skh_ee_dual_mu1e-8} shows the comparison between MFEM and GDB results with the same setup except that the viscosity is reduced by two orders of magnitude to $\mu=10^{-8}$ (Benchmark 2).
With these parameters, the reference GDB run yields $\avg{\Energy} = 1.525\times 10^{-3}$ and $\avg{\Enstrophy} = 22.21$. 
The conservative MFEM setup gives the similar numbers with $\avg{\Energy} = 1.525\times 10^{-3}$ and $\avg{\Enstrophy} = 21.32$; while for the non-conservative MFEM setup, $\avg{\Energy} = 1.311\times 10^{-3}$ and $\avg{\Enstrophy} = 10.59$ - roughly $13\%$ and $50\%$ less than the conservative MFEM and GDB results.
On the other hand, as now the physical viscosity is too small to effectively dissipate injected vorticity at high wavenumber, grid scale turbulent vortices produced by the forward cascade process are more apparent in the simulations, causing a ``pile-up" of the energy and, even more apparently, a ``pile-up" of enstrophy towards the high-$k$ end of the spectrum. 
This phenomenon can clearly be seen in Figs.~\ref{fig:skh_ee_dual_mu1e-8} (c) and (d).
A study of how to mitigate this pile-up will be presented in the next section.
Overall, the conservative MFEM and GDB spectra are in good agreement up to $k/(2\pi)\approx 100$ and only diverge at high-$k$.
As discussed above, this may due to the use of distinct numerical approaches.

\begin{figure}[h]
  \centering
  \includegraphics[width=\linewidth]{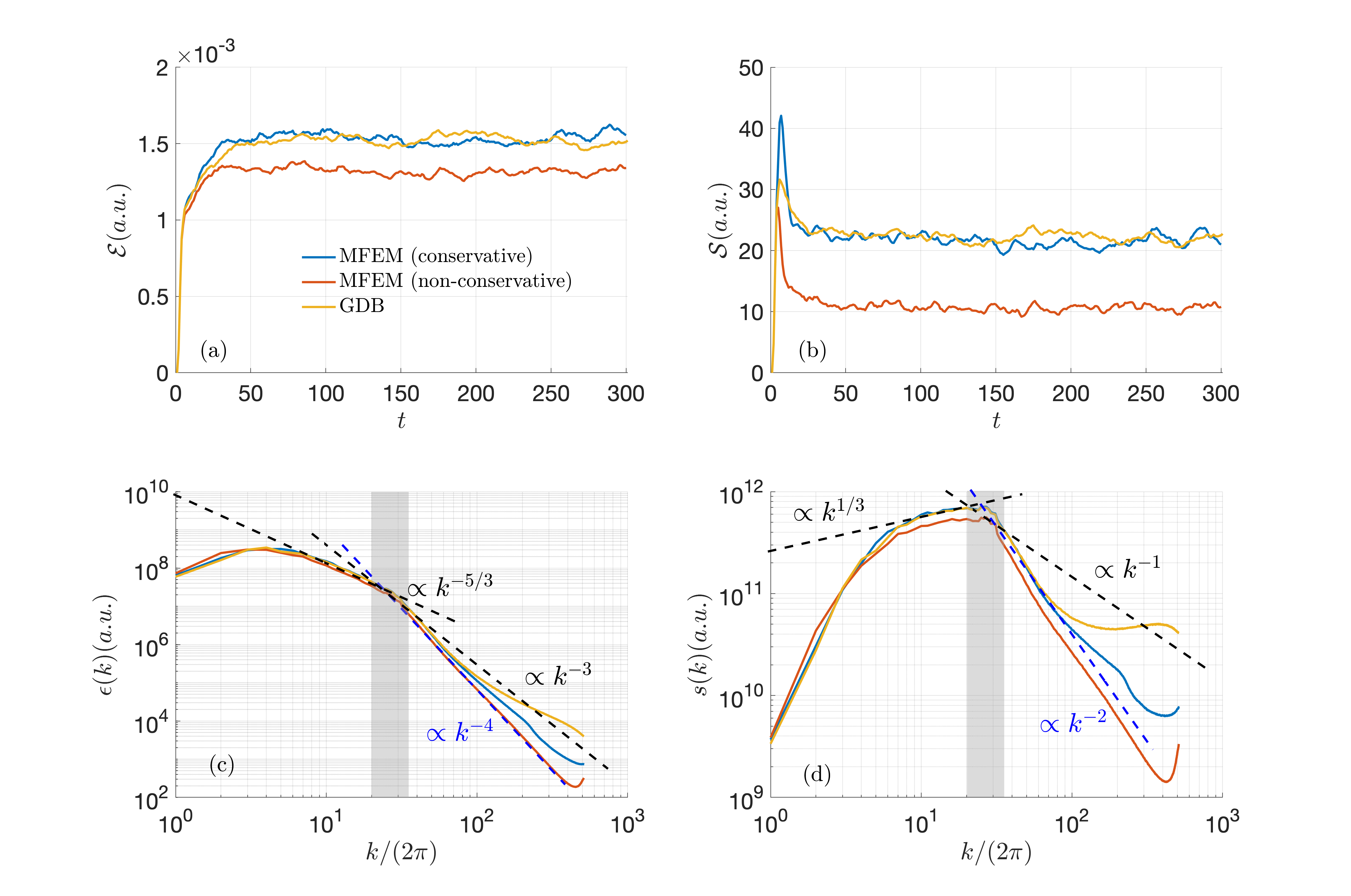}%
  \caption{Domain averaged (a) energy and (b) enstrophy evolution, and time-averaged (c) energy and (d) enstrophy spectra from conservative (DG and second order implicit midpoint method), non-conservative (H1 and backward Euler method) MFEM and reference GDB simulations for $\alpha=0.02,\mu=10^{-8}$.
  The spectral forcing region is shown in  grey.}%
  \label{fig:skh_ee_dual_mu1e-8}
\end{figure}

In this dual cascade study, $k^{-4.5}$ and $k^{-4}$ forward cascade energy scalings are observed for $\mu=10^{-6}$ and $\mu=10^{-8}$ respectively. We now focus more directly on the direct cascade region by shifting the spectrum of the forcing function to long wavelength with parameters $\alpha=0.02$, $\mu=10^{-8}$, $s_0=0.2,m_0=3,$ $\Delta m=5$ (Benchmark 3).
Other parameters, such as resolutions and polynomial order, are kept the same as the Benchmark 1 and 2. In this benchmark, stirring the system with larger coherent structures and the lower viscosity allow more of an inertial range to develop in the direct cascade region.
Hence, the resulting forward cascade scaling would be expected to be closer to the theoretical prediction.
As shown in Table~\ref{tab:EE_benchmark}, the conservative MFEM setup and the GDB code again produce qualitatively consistent $\avg{\Energy}$ and $\avg{\Enstrophy}$ with $<4\%$ discrepancy in this benchmark.
In contrast, the non-conservative MFEM setup has $\sim 30\%$ lower $\avg{\Enstrophy}$.
Figures \ref{fig:skh_ee_direct} (c) and (d) show the energy and enstrophy spectra.
The measured scaling laws are now qualitatively similar to those predicted from Kraichnan's theory: $d\Energy/dk\propto k^{-3}$ and $d\Enstrophy/dk\propto k^{-1}$ for both conservative MFEM and GDB results.
Simulations that retain more of an inertial range (i.e., higher resolution) would provide even better confirmation of this.
The energy and enstrophy spectra from the conservative MFEM and GDB codes are once again in good agreement, except at very high wavenumbers where the differences are likely attributable to differences in the numerical algorithms.

From these physically important benchmarks, we conclude that the conservative MFEM outperforms the non-conservative MFEM in both retaining a higher level of energy and enstrophy, especially when the system is nearly inviscid.
Moreover, its performance is comparable with the widely-used Arakawa scheme~\cite{Arakawa1966jcp} which is considered to be the gold standard of energy and enstrophy conserving discretizations of the advection operator within the finite-difference world.

\begin{figure}[t]
  \centering
  \includegraphics[width=\linewidth]{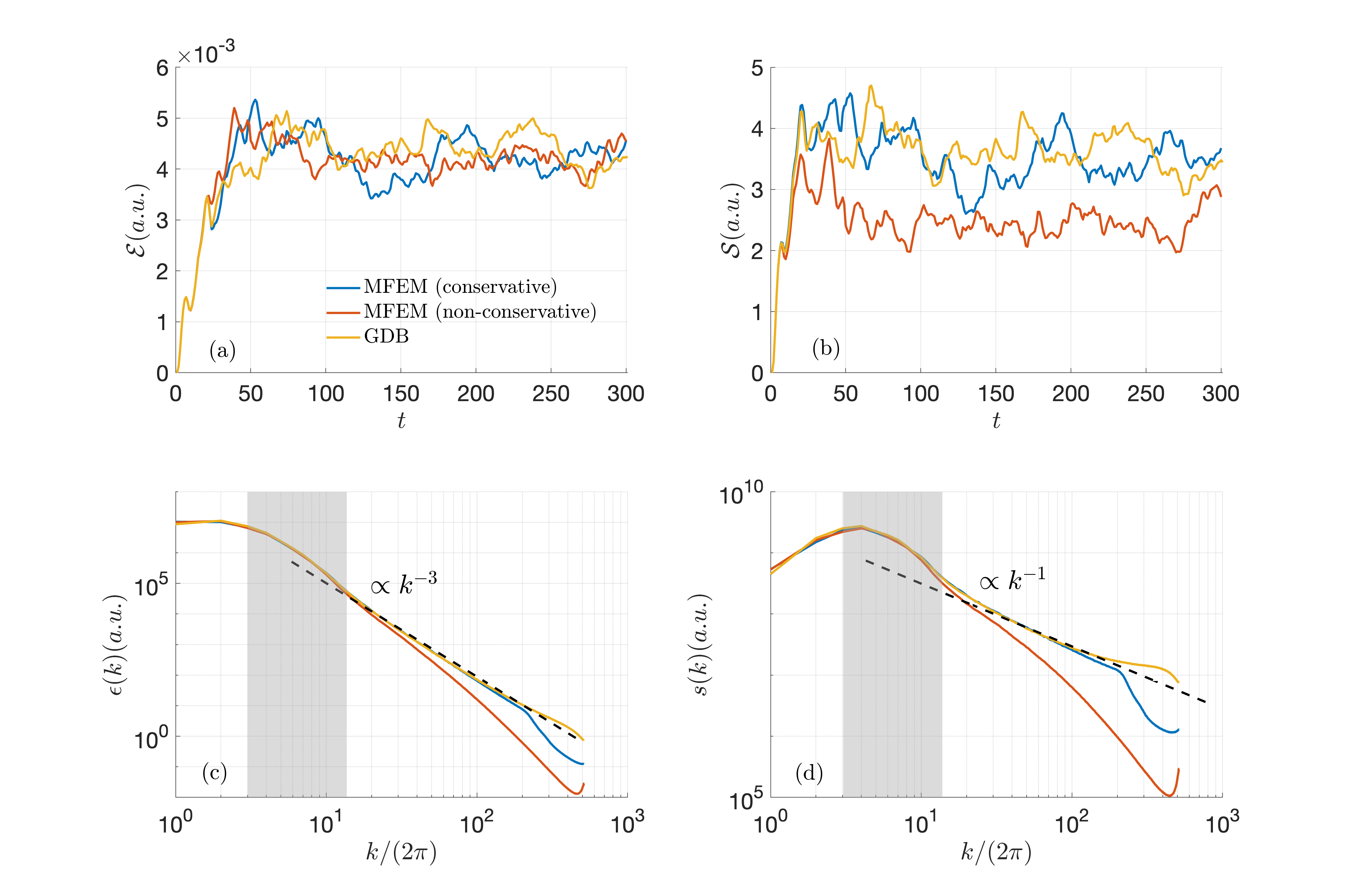}%
    \caption{
    Domain averaged (a) energy and (b) enstrophy evolution for MFEM and (c) energy and (d) enstrophy spectra from conservative (DG and second order implicit midpoint method), non-conservative (H1 and backward Euler method) MFEM and reference GDB simulations ($\alpha=0.02,\mu=10^{-8}$).
     The spectral forcing region is shown in  grey. }%
   \label{fig:skh_ee_direct}
\end{figure}
 
\begin{table}
\centering
\begin{tabular}{ c c c c c }
	\multirow{2}{*}{Benchmark} &  & MFEM & MFEM & \multirow{2}{*}{GDB} \\
			& & (conservative) & (non-conservative) & \\
	\hline 
	\multirow{2}{*}{1 (Fig.~\ref{fig:skh_ee_dual})} 
			& $\avg{\Energy}/10^{-3}$	   & 1.198 & 1.112 & 1.186 \\ 
			& $\avg{\Enstrophy}$            & 8.733 & 7.867 & 8.497 \\ 
	\hline 
	\multirow{2}{*}{2 (Fig.~\ref{fig:skh_ee_dual_mu1e-8})}
			& $\avg{\Energy}/10^{-3}$	   & 1.525 & 1.311 & 1.525 \\ 
			& $\avg{\Enstrophy}$            & 21.32 & 10.59 & 22.21 \\ 
	\hline 
	\multirow{2}{*}{3 (Fig.~\ref{fig:skh_ee_direct})}
			& $\avg{\Energy}/10^{-3}$	   & 4.231 & 4.171 & 4.408\\ 
			& $\avg{\Enstrophy}$            & 3.485 & 2.443 & 3.599\\ 
\end{tabular}
    \caption{Energy and enstrophy ($\avg{\Energy}$ and $\avg{\Enstrophy}$) averaged over $t\in[150,300]$ from the MFEM and GDB codes for three different sets of parameters.}
	\label{tab:EE_benchmark}
\end{table}

\begin{table}
\centering
\begin{tabular}{ c c c c c c c}
	Test & &\multicolumn{5}{c}{MFEM (conservative)} \\ \hline
	\multirow{5}{*}{1 (Fig.~\ref{fig:skh_ee_dual_ref})}
			& grid resolution	& $256^2$ & $512^2$ & $1024^2$ \\
			& DOFs$_\fpot$ & $512^2$ & $1024^2$ & $2048^2$ \\ 
			& time factor & 1.0 & 2.371 & 14.343 \\ \cline{2-5}
			& $\avg{\Energy}/10^{-3}$	   & 1.531 & 1.525 & 1.509 \\ 
			& $\avg{\Enstrophy}$            & 39.80 & 21.32 & 16.29 \\ 
	\hline 
	\multirow{5}{*}{2 (Fig.~\ref{fig:skh_ee_dual_order})}
			& $p$-order & 2 & 3 & 4 \\ 
			& DOFs$_\fpot$ & $512^2$ & $510^2$ & $512^2$ \\ 
			& time factor & 1.0 & 1.857 & 2.334 \\ \cline{2-5}
			& $\avg{\Energy}/10^{-3}$	   & 1.554 & 1.588 & 1.507 \\ 
			& $\avg{\Enstrophy}$            & 40.42 & 39.28 & 34.74 \\ 
	\hline 
	\multirow{4}{*}{3 (Fig.~\ref{fig:skh_ee_dual_upwind})}
			& $\delta$ & 0 & 0.0005 & 0.005 & 0.05 & 0.5 \\ 
			& time factor & 1.0 & 0.982 & 0.932 & 0.520 & 0.481 \\  \cline{2-7}
			& $\avg{\Energy}/10^{-3}$	   & 1.572 & 1.525 & 1.564 & 1.531 & 1.496\\ 
			& $\avg{\Enstrophy}$            & 183.2 & 37.84 & 17.29 & 15.00 & 14.20\\ 
\end{tabular}
    \caption{Energy and enstrophy ($\avg{\Energy}$ and $\avg{\Enstrophy}$) averaged over $t\in[150,300]$ from the conservative MFEM simulations with different setups.} 
	\label{tab:EE_mfem_midk}
\end{table}

\begin{figure}[h]
  \centering
  \includegraphics[width=\linewidth]{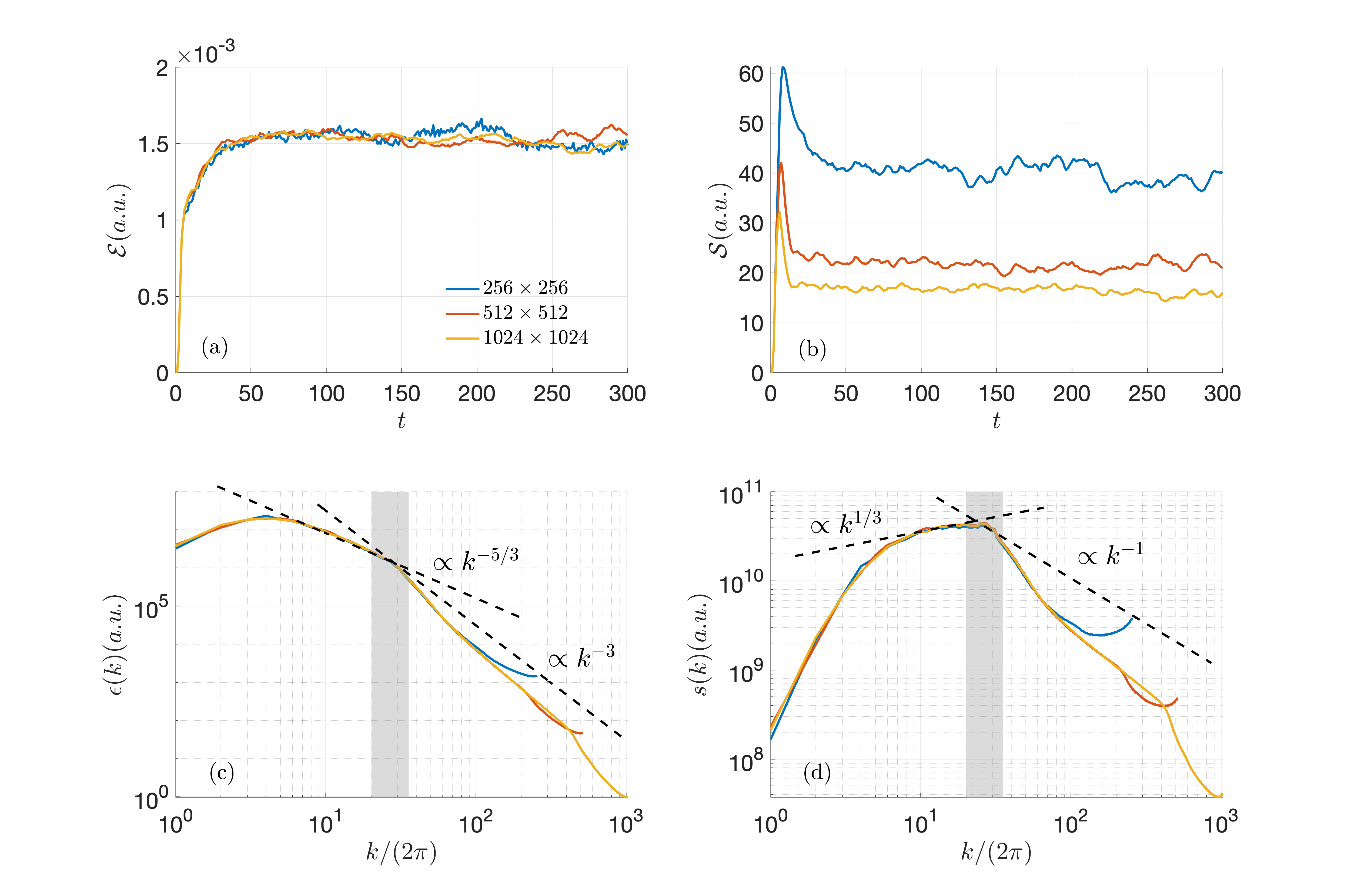}%
  \caption{ Domain averaged (a) energy and (b) enstrophy evolution, and (c) energy and (d) enstrophy spectra from conservative (DG and second order implicit midpoint method) MFEM simulations with second order polynomial but different grid resolutions ($\alpha=0.02,\mu=10^{-8}$).
  The spectral forcing region is shown in  grey.}%
  \label{fig:skh_ee_dual_ref}
\end{figure}

\begin{figure}[h]
  \centering
  \includegraphics[width=\linewidth]{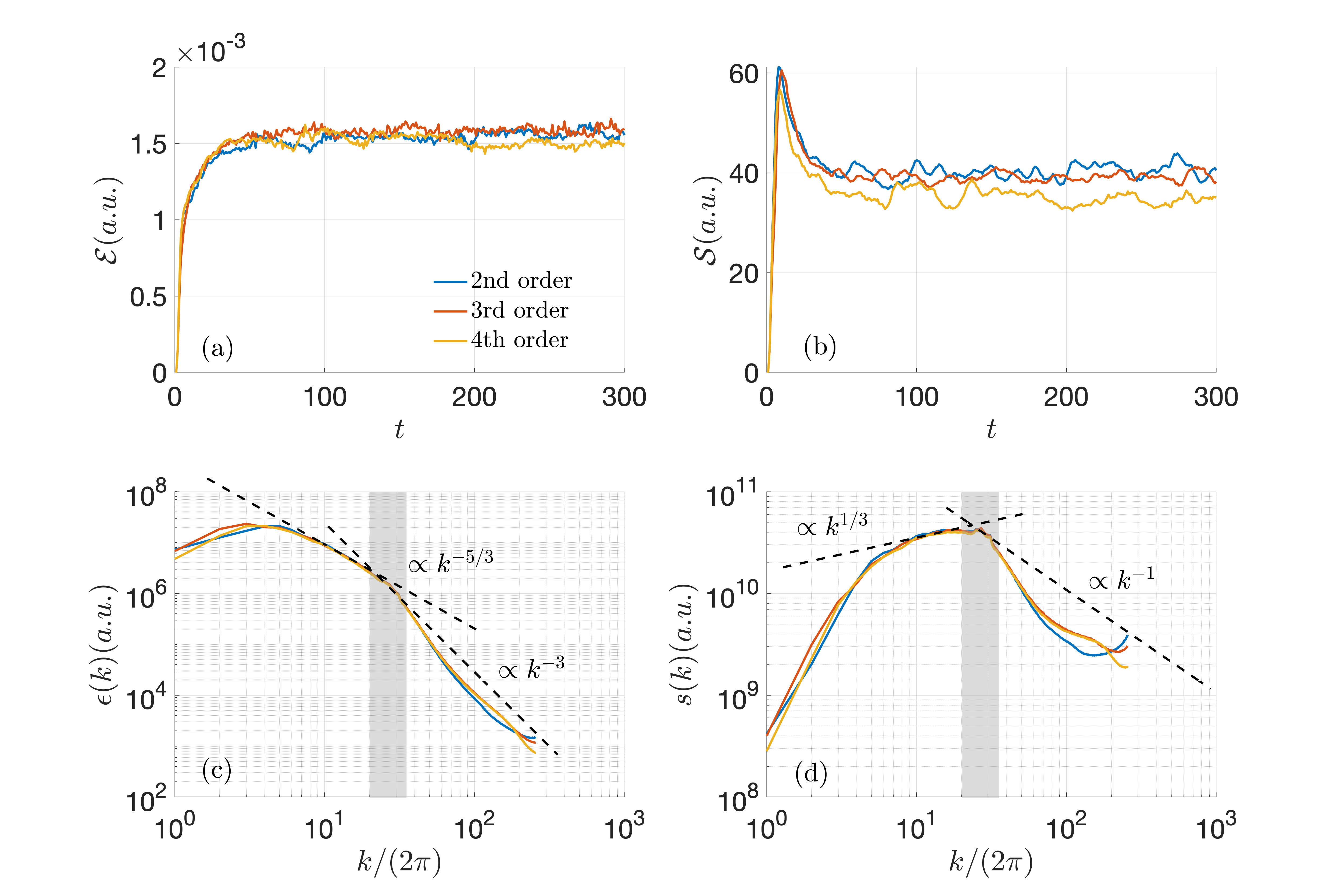}%
  \caption{Domain averaged (a) energy and (b) enstrophy evolution, and (c) energy and (d) enstrophy spectra from conservative (DG and second order implicit midpoint method) MFEM simulations with different  polynomial order $p$ but roughly the same DOFs on $fpot$ ($\alpha=0.02,\mu=10^{-8}$).
  The spectral forcing region is shown in  grey.}%
  \label{fig:skh_ee_dual_order}
\end{figure}

\begin{figure}[h]
  \centering
  \includegraphics[width=\linewidth]{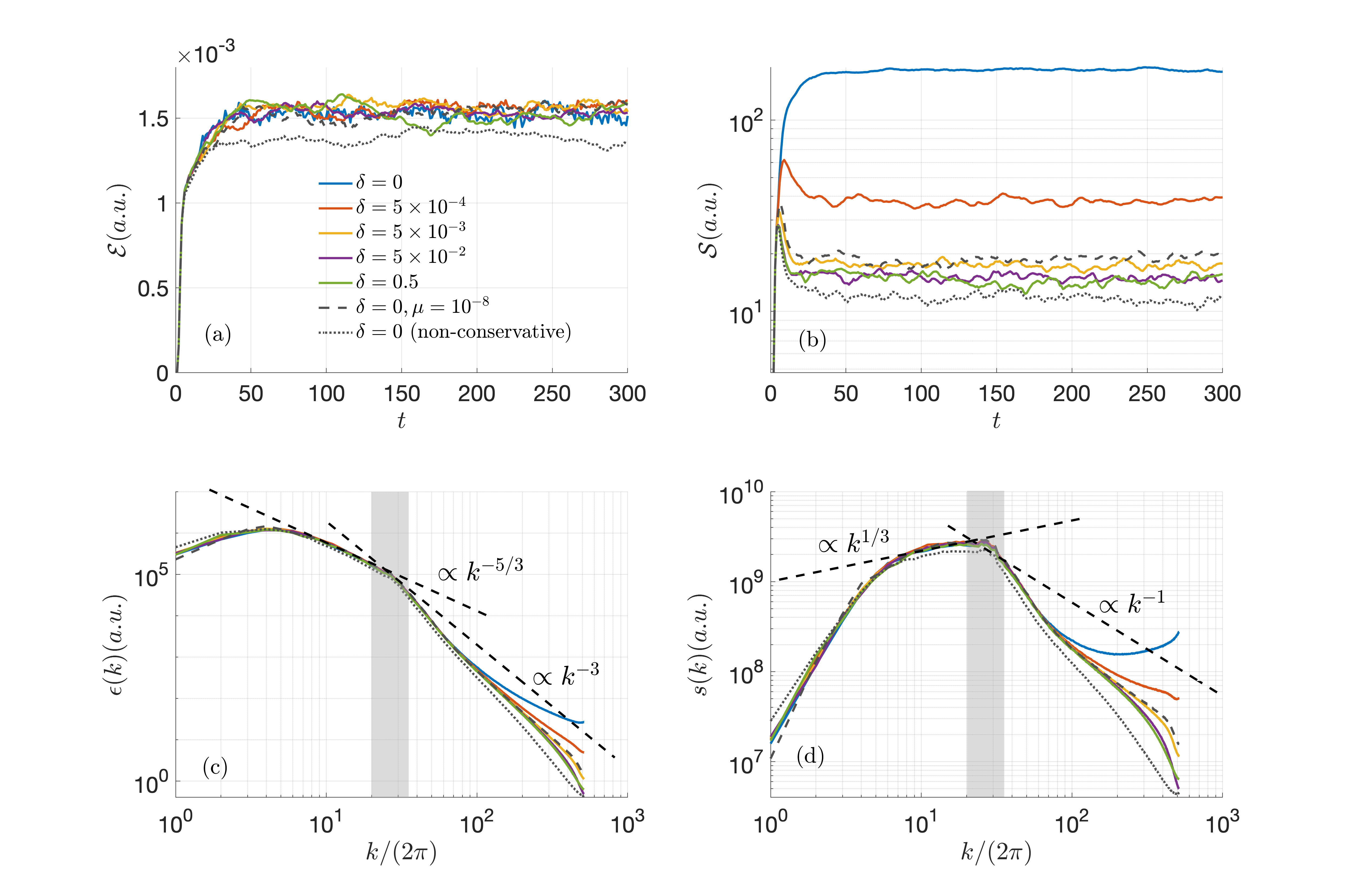}%
  \caption{Domain averaged (a) energy and (b) enstrophy evolution, and (c) energy and (d) enstrophy spectra from the MFEM simulations with different level of upwinding $\delta$. Here, $\alpha=0.02, \mu=0$ for all cases except the grey dashed lines with $\mu=10^{-8}$ for the grey dashed line. All results expect the grey dotted lines are from conservative MFEM runs.
  The spectral forcing region is shown in grey.}%
  \label{fig:skh_ee_dual_upwind}
\end{figure}

\subsubsection{Controlling the high-$k$ spectrum}
The high-$k$ pile-up that appears in our FEEC model (e.g., in Figs. \ref{fig:skh_ee_dual_mu1e-8} (c) and (d)) due to insufficient dissipation can be addressed in a few different ways: by increasing the~mesh resolution, by changing 
the polynomial order $p$, and/or by applying upwinding (i.e., $\delta > 0$). To illustrate this, three sets of tests were performed with the conservative MFEM setup under parameters $\alpha=0.02$, $s_0=1,m_0=18,$ $\Delta m=5$ and:
\begin{enumerate}
\item varying resolution for $\mu=10^{-8}$ and $p=2$; \\
\item varying polynomial order $p$ but keep roughly the same DOFs on $fpot$ with $\mu=10^{-8}$; \\
\item applying upwinding level $\delta$ on a $512\times 512$ grid and $\mu=0$, $p=3$.
\end{enumerate}
Figures \ref{fig:skh_ee_dual_ref}, \ref{fig:skh_ee_dual_order} and \ref{fig:skh_ee_dual_upwind} plot the corresponding test results, and the averaged $\avg{\Energy}$ and $\avg{\Enstrophy}$ of all the simulations are summarized in Table~\ref{tab:EE_mfem_midk}.

As shown in Figure \ref{fig:skh_ee_dual_ref}(a) and (b) and Table~\ref{tab:EE_mfem_midk}, increasing resolution from $256\times 256$ to $1024\times 1024$ has little influence on $\avg{\Energy}$ (decreasing less than $2\%$) since the majority of energy is carried by low-$k$ vortices that have been well-resolved even with the relatively coarse grid used in this case. Nonetheless, increasing resolution reduces $\avg{\Enstrophy}$ substantially from $39.80$ to $16.29$. This is because, with higher resolution, smaller vortices are now captured on the finer grid and more easily dissipated by the physical viscosity. 
As a result, not only has the high-$k$ pile-up been eliminated, but the inertial range is extended, and the forward cascade scalings are closer to the theoretical predictions, as shown in Fig. \ref{fig:skh_ee_dual_ref}(c) and (d). The main drawback of this approach is the demanding computational cost. In our test, the high grid resolution $1024\times 1024$ run requires about 14 times computational resources than the low grid resolution $256\times 256$ run.

The polynomial order $p$ can also impact the high-$k$ spectra. Because changing $p$ on a fixed grid would also change the number of DOFs, the simulated results in Fig. \ref{fig:skh_ee_dual_order} are carried out with reduced grid resolution as $p$ increases such that the DOFs of potential $\fpot$ are roughly constant. In Test 2, $\avg{\Energy}$ shows little variation with $p$, while $\avg{\Enstrophy}$ decreases as $p$ increases despite the DOFs$_\fpot$ remains the same. More interestingly, comparing to the second order polynomial results, the high-$k$ pile-up in enstrophy spectrum fades away for $p=3$ and is completely eliminated when the fourth order polynomial is applied. The computational cost has a modest increment from $p=2$ to $p=4$. 

Finally, upwinding is extremely effective at preventing the high-$k$ pile-up. 
For a simulation that is poorly resolved, it would likely be prudent to use the maximum upwinding parameter $\delta=1/2$, as is customary in the finite element community.
For these simulations, which are moderately under-resolved, it is interesting to explore the effect of the choice of $\delta$ on the results.
In the inviscid $\mu=0$ limit without upwinding ($\delta=0$), the enstrophy can only be dissipated through friction; therefore, $\avg{\Enstrophy}$ is maintained at a much higher level ($183.2$) and $s(k)$ has a significant high-$k$ pile-up, seen in Figs.~\ref{fig:skh_ee_dual_upwind}~(b) and (d) (blue lines). However, even with a small amount of upwinding, e.g., $\delta=5~\times~10^{-4}$, $\avg{\Enstrophy}$ is reduced by nearly five times to $37.84$ and the high-$k$ pile-up in enstrophy spectra is completely avoided. With $\delta=5\times 10^{-3}$, the  result is similar to a simulation with viscosity $\mu=10^{-8}$ but without upwinding (dashed grey lines) in Fig.~\ref{fig:skh_ee_dual_upwind}. 
For the forced turbulence model, upwinding mostly affects the high $k$-vortices near the grid-scale; increasing the upwinding level $\delta$ primarily affects the high-$k/(2\pi)>200$ spectra. 
As high-$k$ vortices contribute little to the overall energy, upwinding has limited impact on $\avg{\Energy}$ even when $\delta$ is set to the maximum value of $1/2$.
Notably, for this model, there is little change in the spectrum as $\delta$ ranges from 0.05 to 0.5; even for $\delta=0.5$ which is a common choice for most DG methods, the system is able to attain a significantly higher level of energy and enstrophy as well as reasonable accurate spectra comparing to the non-conservative MFEM results (grey dotted lines).
Another interesting feature of upwinding is that the computational cost reduces as the upwinding factor $\delta$ increases. This is because smoother solutions can be reached with fewer iterations, i.e,. easier to converge.

\section{Conclusion}
 
In conclusion, a finite element exterior calculus (FEEC) approach to the two-dimensional (2D) incompressible Navier-Stokes equations and to drift-reduced magnetohydrdynamics is introduced that conserves the two well-known quadratic invariants, energy and enstrophy, for arbitrary polynomial order.
Because the theorems of vector calculus are satisfied at the discrete level within the FEEC framework, the spatial discretization automatically conserves both energy and enstrophy as long as the proper finite element spaces are chosen for the primary variables.
A conservative scheme  also requires an appropriate choice of numerical time integrator, and, for B-series time integrators, only generally symplectic time integrators, such as the fully implicit midpoint method are conservative.

Various combinations of numerical methods were tested for an extensive suite of 2D neutral fluid turbulence test problems.  
Linear results were verified against a semi-analytical numerical eigensolver, and nonlinear results were verified against theory as well as to the Global Drift Ballooning (GDB) code \cite{Zhu2018cpc}.  
In addition to implicit midpoint, it was found that certain explicit  methods, such as the explicit midpoint method and the explicit trapezoidal leapfrog method, which approximate implicit midpoint, still  do relatively well at approximately conserving energy and enstrophy. 
These observations may be due to the relatively small time step sizes that must be taken for the explicit methods to remain stable. 
However, it was also found that turbulent simulations that conserve both energy and enstrophy tend to result in too much power at high wavenumber. 

The high wavenumber part of the spectrum can be controlled by reintroducing artificial dissipation, e.g. through numerical viscosity or through an asymmetric implicit time integrator, such as backward Euler. 
Perhaps the most interesting option is to use an upwinded DG formulation of the advection operator which dissipates enstrophy while still conserving energy exactly. 
This formulation makes the best compromise of allowing mid-range wavenumbers to reach the appropriate amplitude, while still controlling the high-wavenumber part of the spectrum.
Thus, our conclusion is that using the upwinded DG formulation in conjunction with the symplectic implicit midpoint method appears to offer the greatest advantages in terms of accuracy and efficiency.

\appendix
\section{DG formulation for the Advection Operator \label{app:DGformulation}}
The DG flux formulation of the vorticity Eq. \eqref{eq:vorticity} without sources and dissipation is
\begin{gather*}
  \int_{\Omega} \ftestLtwo \pdv{\fvor}{t} dV 
  - \int_\Omega \fvor \vDrift \cdot \Vnabla \ftestLtwo ~dV
  + \sum_{K\in \tau_h} \int_{\partial K} 
    \ftestLtwo \hat{\fvor} \vDrift \cdot \vec{n}_{\partial K} ~dS
  = 0, \forall \ftestLtwo \in \Ltwo{\qvor},
\end{gather*}
where  $\hat\fvor$ is an approximation of $\fvor$  that defines the flux $\hat{\fvor}\vDrift$ across $\Gamma$.
Using  the fact that
\begin{gather*}
\sum_{K\in \tau_h} \int_{\partial K} 
\ftestLtwo \hat{\fvor} \vDrift \cdot \vec{n}_{\partial K} ~dS = \int_{\Gamma}  
\jumpOp{ \ftestLtwo \hat{\fvor} } \vDrift \cdot \vec{n}_\Gamma~dS
+ \int_{\partial \Omega} \ftestLtwo \hat{\fvor} \vDrift \cdot 
\vec{n}_{\partial \Omega} ~dS
\end{gather*}
 and Eq. \eqref{eq:divTheoremDGproduct}, the vorticity equation becomes
\begin{multline*}
  \int_{\Omega} \ftestLtwo \pdv{\fvor}{t} dV 
  - \int_\Omega \fvor \vDrift \cdot \Vnabla \ftestLtwo ~dV
  + \int_{\partial\Omega} \fvor \ftestLtwo 
    \vDrift \cdot \vec{n}_{\partial \Omega} ~dS
  \\
  + \int_\Gamma ( \jumpOp{ \hat{\fvor}  } \avgOp{ \ftestLtwo  } 
  + \avgOp{ \hat{\fvor} } \jumpOp{ \ftestLtwo  } ) 
  \vDrift \cdot \vec{n}_\Gamma dS = 0, 
  \hspace{0.25in} 
  \forall \ftestLtwo \in \Ltwo{\qvor} .
\end{multline*}

The conservative choice for the numerical flux 
\begin{gather}
\hat{\fvor} = \avgOp{ \fvor } + \delta 
\frac{|\vDrift\cdot\vec{n}_\Gamma|}{\vDrift\cdot\vec{n}_\Gamma}
\jumpOp{ \fvor }
\end{gather}
clearly has a vanishing jump, $\jumpOp{\hat\fvor} =0$, and satisfies $\avgOp{\hat\fvor}=\hat\fvor$ on $\Gamma$. 
Hence, this leads to 
\begin{multline*}
  \int_{\Omega} \ftestLtwo \pdv{\fvor}{t} dV 
  - \int_\Omega \fvor \vDrift \cdot \Vnabla \ftestLtwo ~dV
  + \int_{\partial\Omega} \fvor \ftestLtwo 
    \vDrift \cdot \vec{n}_{\partial \Omega} ~dS
    \\  
  + \int_\Gamma  
  \avgOp{ \fvor } \jumpOp{ \ftestLtwo } \vDrift \cdot \vec{n}_\Gamma dS 
  + \delta \int_\Gamma |\vDrift\cdot\vec{n}_\Gamma|
  \jumpOp{ \fvor }  \jumpOp{ \ftestLtwo } dS 
  = 0,
   \hspace{0.25in} 
    \forall \ftestLtwo \in \Ltwo{\qvor}.
\end{multline*}
Again using Eq. \eqref{eq:divTheoremDGproduct} to integrate by parts allows one to show that the 
weak form of the advection operator can be written in two equivalent ways
\begin{multline}
  \int_{\partial\Omega} \fvor \ftestLtwo 
    \vDrift \cdot \vec{n}_{\partial \Omega} ~dS -  \int_\Omega \fvor  \vDrift \cdot \Vnabla \ftestLtwo   ~dV
  + \int_\Gamma  
  \avgOp{ \fvor} \jumpOp{ \ftestLtwo } \vDrift \cdot \vec{n}_\Gamma dS 
  + \delta \int_\Gamma |\vDrift\cdot\vec{n}_\Gamma|
  \jumpOp{ \fvor }  \jumpOp{ \ftestLtwo } dS 
  = 
  \\
  \int_\Omega \ftestLtwo  \vDrift \cdot \Vnabla \fvor  ~dV
  - \int_\Gamma  
  \avgOp{ \ftestLtwo} \jumpOp{ \fvor } \vDrift \cdot \vec{n}_\Gamma dS 
  + \delta \int_\Gamma |\vDrift\cdot\vec{n}_\Gamma|
  \jumpOp{ \fvor }  \jumpOp{ \ftestLtwo } dS 
  ,
   \hspace{0.25in} 
    \forall \ftestLtwo \in \Ltwo{\qvor},
\end{multline}
This leads to the equivalent weak form of the vorticity equation given in Eq. \eqref{eq:vorticityIdealWeakGen}:
\begin{multline*}
  \int_{\Omega} \ftestLtwo \pdv{\fvor}{t} dV 
  + \int_\Omega \ftestLtwo  \vDrift \cdot \Vnabla \fvor  ~dV
     \\  
  - \int_\Gamma  
  \avgOp{ \ftestLtwo} \jumpOp{ \fvor } \vDrift \cdot \vec{n}_\Gamma dS 
  + \delta \int_\Gamma |\vDrift\cdot\vec{n}_\Gamma|
  \jumpOp{ \fvor }  \jumpOp{ \ftestLtwo } dS 
  = 0,
   \hspace{0.25in} 
    \forall \ftestLtwo \in \Ltwo{\qvor}.
\end{multline*}

\section*{Acknowledgements}
The authors would like to thank Benjamin Dudson, Alex Friedman, Tzanio Kolev, Mark Stowell, Mayya Tokman, and Xueqiao Xu for a number of valuable discussions.
This work was performed by LLNL under the auspices of the U.S. DOE under contract DE-AC52-07NA2734, and was supported by LLNL Laboratory Directed Research and Development project PLS-20-ERD-038. BSS was supported as a Nicholas C. Metropolis Fellow under the Laboratory Directed
Research and Development program of Los Alamos National Laboratory. Los Alamos National Laboratory report number LA-UR-22-21057. Release number LLNL-JRNL-831096.
\bibliography{mybibfile}

\end{document}